\documentclass[aps,prx, %twocolumn,
                         longbibliography]{revtex4-2}

\usepackage[utf8]{inputenc}
\usepackage{amsmath,amssymb,amsfonts,amsthm}
\usepackage{bm}
\usepackage{graphicx}
\usepackage{amsthm}
\usepackage{mathrsfs}
\theoremstyle{plain}

\newtheorem{theorem}{Theorem}[section]

\newtheorem{lemma}[theorem]{Lemma}
\newtheorem{corollary}[theorem]{Corollary}
\newtheorem{proposition}[theorem]{Proposition}
\newtheorem{definition}[theorem]{Definition}
\newtheorem{example}[theorem]{Example}
\newtheorem{remark}[theorem]{Remark}

\numberwithin{equation}{section}

\begin{document}

\title{Geometric Foundations of Stochastic and Quantum Dynamics}
%{Moving Manifolds Induce Stochastic and Quantum Dynamics}
%{Foundations of the Geometrization of Stochastic and Quantum Dynamics}
%{Differential Geometric Foundations of Stochastic and Quantum Dynamics}
%{Foundation of the stochastic calculus geometrization}
%{Curvature-Driven Transport on Moving Manifolds: A CMS Formulation}

\author{David V. Svintradze}
\email{dsvintradze@newvision.ge}
\affiliation{New Vision University, Tbilisi, Georgia}
\date{\today}

\begin{abstract}
We develop a geometric formulation of stochastic dynamics in which noise, diffusion, path probabilities, fluctuation theorems, and entropy production arise from the intrinsic geometry of an evolving manifold rather than from externally imposed randomness. Within the theory of moving manifolds, we establish a curvature-noise correspondence: fluctuations are governed by the inverse curvature tensor, while entropy production is controlled by curvature deformation. The invariant continuity law on a moving hypersurface yields a geometric Fokker-Planck equation, and curvature-velocity coupling generates a quadratic Onsager-Machlup functional determining path weights. The resulting entropy functional satisfies a curvature-driven monotonicity law, providing a geometric derivation of the Second Law. In two dimensions, the curvature invariant reduces to Gaussian curvature and encodes topology, so topological transitions produce discrete entropy jumps. When the ambient space carries a Minkowskian signature, the same curvature-kinetic quadratic form that generates dissipative thermal weights produces oscillatory phase weights, and the Laplace-Beltrami operator governing entropy evolution acquires a Schr\"odinger-type structure. This provides a geometric resolution of the apparent distinction between classical stochastic behaviour and quantum dynamics. These results show that stochastic behaviour, thermodynamic irreversibility, and quantum transition amplitudes are unified within the moving manifold framework. Geometry does not merely accommodate stochasticity; stochastic behaviour arises as a consequence of deterministic geometric evolution. The theory predicts curvature-controlled anisotropic diffusion, entropy jumps at topology-changing events, and a geometric thermal-quantum crossover in which classical stochastic weights and quantum amplitudes are generated by the same curvature-kinetic action.
\end{abstract}

\maketitle
\tableofcontents
%\clearpage

\section{Introduction}

Modern stochastic physics is built on a collection of foundational structures: 
the Fokker--Planck equation describing probability-density evolution on fixed state spaces 
\cite{Fokker1914,Planck1917,Kolmogorov1931}, the Onsager--Machlup functional determining 
short-time path probabilities \cite{OnsagerMachlup1953}, the Feynman path-integral formulation 
of dynamical amplitudes \cite{FeynmanHibbs1965}, and the fluctuation relations linking entropy 
production to time-reversal asymmetry, including the Jarzynski equality 
\cite{Jarzynski1997} and the Crooks relation \cite{Crooks1999}.  
These ideas have been synthesized in modern stochastic thermodynamics, reviewed comprehensively 
by Seifert \cite{Seifert2012Review}, with additional perspectives arising from 
Kramers--Moyal expansions \cite{KramersMoyal1940}, Langevin processes \cite{Langevin1908}, 
and the Martin--Siggia--Rose formalism \cite{MSR1973}.  
Our goal is not to catalogue these developments.  
Instead, we show that the core structures of stochastic physics arise from the geometric flow principle governing moving manifolds. In doing so, we provide a unified geometric origin for diffusion, path probabilities, fluctuation theorems, entropy production, and Schr\"odinger-type evolution equations. Over the past decades, stochastic physics and stochastic thermodynamics have produced a vast body of refined results, extensions, and applications across mathematics, physics, and chemistry. The present work does not attempt to survey this literature. Our focus is structural rather than phenomenological: we concentrate on the foundational principles underlying stochastic dynamics and show how these arise from geometric evolution of moving manifolds.

In conventional stochastic formulations, the geometry of the state space remains static, while stochasticity is incorporated through prescribed noise fields, Langevin forcing terms, or axiomatic path weights. The probability measure develops on this fixed manifold, which itself remains unchanged. This delineation introduces a structural asymmetry between the geometry and the dynamics: randomness is appended to an otherwise unaltered geometric framework. In such approaches, fluctuation theorems and entropy production are articulated in terms of probability currents defined on immutable spaces, rather than arising from inherent geometric evolution. The present work resolves this ambiguity by allowing the underlying geometric manifold itself to evolve in a deterministic manner. As a result, stochasticity originates from variations in curvature, while irreversibility arises from the intrinsic geometric diffusion and, in singular events, from topology-changing transitions of the evolving manifold. Hence, we construct such a reformulation using the calculus of moving manifolds \cite{Svintradze2017,Svintradze2018,Svintradze2019,Svintradze2020,Svintradze2023,Svintradze2024a,Svintradze2024b,Svintradze2025,Svintradze2025P}. The underlying calculus began with Hadamard \cite{Hadamard1903} and was developed over generations, culminating in the modern geometric formulation for two-dimensional moving surfaces by P.~Grinfeld \cite{GrinfeldP2013}, now known as calculus for moving surfaces (CMS). Our previous work extended this lineage by developing the general theory of arbitrary-dimensional moving manifolds (MM): the kinetic-energy variation identity, which generates the full intrinsic and extrinsic dynamics \cite{Svintradze2017,Svintradze2018}, and the invariant integration theorem for evolving hypersurfaces, which establishes calculus for arbitrary-dimensional, curved moving manifolds \cite{Svintradze2024a,Svintradze2024b,Svintradze2025,Svintradze2025P} (MM calculus). These results establish the complete invariant differential framework necessary for a physical theory of moving manifolds, wherein the foundational physical manifold is inherently dynamical. Consequently, stochastic behavior emerges from geometric evolution rather than from externally imposed randomness.
It is noteworthy that, in recent years, individual identities derived from our framework have manifested in partial or dimensionally restricted forms across diverse contexts without explicit attributions. Nevertheless, these formulations generally address isolated relations without embedding them within a fully invariant, dimension-independent evolution theory. In contrast, our work consistently employs the comprehensive covariant moving-manifold calculus developed in our previous research. %In this paper, we demonstrate how deterministic, stochastic, thermodynamic, and quantum structures emerge from this unified geometric principle.

We further develop the geometric program presented in this work, which is part of a larger body of research where evolving manifolds unify continuum theories. Previous contributions have shown that electromagnetism \cite{Svintradze2017}, core colloidal science identities \cite{Svintradze2020, Svintradze2023}, the Navier-Stokes equations \cite{Svintradze2024a}, and general relativity \cite{Svintradze2024b} emerge as geometric identities on moving hypersurfaces. Invariant, curvature-based formulations of momentum-curvature coupling, surface variation, and dynamic embeddings have shown that fundamental physical laws naturally arise from geometry rather than being phenomenologically imposed \cite{Svintradze2017,Svintradze2018,Svintradze2019,Svintradze2020,Svintradze2023,Svintradze2024a,Svintradze2024b,Svintradze2025,Svintradze2025P}.  
The present work extends this unification to stochastic physics and quantum mechanics, showing that diffusion, fluctuation relations, irreversibility, probabilistic structure, classical and quantum amplitudes, and Schr\"odinger-type equations arise from geometric evolution rather than externally imposed randomness. Consequently, classical and quantum unification emerges as a structural consequence of the deterministic moving-manifold formulation. Diffusion and drift arise from the invariant continuity equation on the evolving manifold, with the stochastic sector governed by the inverse-curvature tensor. Irreversible entropy production is determined by the positive-definite diffusive quadratic form in the geometric Fokker--Planck equation, so the Second Law follows as a direct geometric consequence rather than a thermodynamic postulate. The Fokker--Planck equation, Onsager--Machlup functional, fluctuation relations, and path-integral weights emerge naturally from the same geometric flow. Hence, inverse curvature governs noise and diffusion, while the curvature tensor governs the geometric entropy functional, topology, and the curvature--kinetic action underlying both stochastic and quantum dynamics. Stochastic behaviour is therefore controlled by geometry: regions of high curvature suppress fluctuations, while nearly flat regions enhance them. Curvature anisotropy induces anisotropic diffusion, and spatial variation of the inverse-curvature tensor generates intrinsic drift.
Reversibility corresponds to invariance of the geometric entropy functional under smooth evolution, whereas topology-changing events produce discrete irreversible entropy jumps. Classical fluctuation relations acquire geometric corrections whenever the manifold is non-flat. In this way, randomness is not externally imposed but generated by geometric evolution itself, with inverse curvature controlling stochastic transport, the positive diffusive quadratic form governing entropy irreversible processes.

The geometric theory developed herein yields a coherent sequence of predictions derived directly from the curvature--noise correspondence. Since the effective noise tensor is the inverse curvature tensor, fluctuations are strongest in weakly curved or nearly flat regions, where stochastic spreading is enhanced, and are suppressed in strongly curved regions, where the inverse curvature tensor becomes small. As a consequence, curved interfaces, membranes, phase boundaries, and turbulent surfaces exhibit curvature-induced anisotropic diffusion even when microscopic noise is isotropic. Spatial variations of the inverse-curvature tensor generate intrinsic geometric drift and transport, with broad stochastic spreading in low-curvature regions and effective suppression of probability transport in highly curved zones. In volume-preserving evolution, the curvature tensor is kinematically constrained, rendering the geometric entropy sector invariant under smooth deformations. Irreversibility is solely attributed to the diffusive contribution in the geometric Fokker-Planck equation. Entropy production is governed by a strictly nonnegative quadratic form involving the inverse-curvature tensor, with equality achieved only in spatially uniform densities. In the absence of diffusion, microscopic reversibility is restored.

In two dimensions, topology enters through the Gaussian curvature, which is tied to the Euler characteristic by the Gauss-Bonnet theorem. For smooth, compact, incompressible manifolds, topology is preserved; genuine topological transitions, such as pinch-off or fusion, occur through singular events or departures from the incompressible regime, where the Laplace-Beltrami description breaks down. In this case, the geometric entropy law acquires discrete jumps through the topological sector, so topology-changing events produce irreversible entropy increments.

Taken together, these results show that noise amplitude is governed by inverse curvature, entropy evolution by the positive diffusive quadratic form, and in two dimensions, discrete entropy jumps by topology change. The MM framework, therefore, provides a geometric resolution of the apparent distinction between classical stochastic behaviour and quantum dynamics. The same curvature-kinetic action generates both stochastic weights and quantum amplitudes: in the Euclidean sector, it produces real Boltzmann weights and entropy growth, while in the Lorentzian sector it yields oscillatory phase weights underlying the path-integral formulation. The Schr\"odinger-type structure arises naturally in the near-incompressible regime from the interaction between geometric entropy flow and Laplace-Beltrami propagation. Classical stochastic dynamics and quantum amplitudes therefore emerge as real and complex realizations of the same geometric action, distinguished only by the signature of the underlying curvature-velocity functional.

\section{Preliminaries of Moving Manifold Calculus}
\label{sec:CMS-basics}

To clarify the geometric setting and its historical lineage, we briefly
recall the relation between the classical calculus of moving surfaces (CMS)
and the general theory of moving manifolds developed in our earlier works.
The CMS framework, as presented in \cite{GrinfeldP2013}, provides a
geometrically consistent formulation for evolving two-dimensional
hypersurfaces embedded in three-dimensional Euclidean space.  Its
differential identities, variational formulas, and curvature relations are
formulated specifically for $d=2$ hypersurfaces of codimension one in
$\mathbb{R}^3$.

The moving manifold (MM) calculus developed in \cite{Svintradze2017,Svintradze2018,Svintradze2019,Svintradze2020,
Svintradze2023,Svintradze2024a,Svintradze2024b,Svintradze2025,
Svintradze2025P} extends this structure to a fully covariant, dimension–independent evolution
calculus for arbitrarily dimensional hypersurfaces embedded in Euclidean or
Minkowskian ambient spaces of codimension one.  In this formulation, the
invariant time derivative, the kinetic–energy variation identity, the general
equation of motion, and the integration theorem are established in
$n$–dimensional form, providing a complete differential and variational framework for evolving manifolds.

In this sense, CMS corresponds to the $d=2$ specialization of a broader
moving manifold calculus.  The general MM equation of motion supplies a unified
evolution law whose dimensional reductions recover known continuum equations
as geometric identities, including curvature–driven interface dynamics \cite{Svintradze2017,Svintradze2018,Svintradze2023}, the
Navier–Stokes system \cite{Svintradze2024a}, and the geometric formulation of General Relativity \cite{Svintradze2024b}.
The present work employs this fully invariant $n$–dimensional framework as
the mathematical backbone from which deterministic, stochastic,
thermodynamic, and quantum structures are reconstructed. For self-consistency, we provide a preliminary outline of the fundamental structure of MM calculus in this section.

\subsection{Ambient metric, and curvature in Minkowski space}

\begin{definition}[Minkowski space]\label{def:Minkowski}
Let $(\mathcal{M}^{d+2},G_{AB})$ be a flat Minkowski space with metric
\[
G_{AB}=\mathrm{diag}(-1,+1,\dots,+1),
\qquad A,B=0,\dots,d+1.
\]
A moving $d+1$--dimensional manifold $\Sigma(t)$ is given by mapping
\(
X^A = X^A(\xi^\alpha,t),
\, \alpha=0,\dots,d,
\)
where $\xi^\alpha$ are local, intrinsic surface coordinates and $t$ is the
geometric evolution parameter.
\end{definition}

\begin{remark}[Complexification]\label{rem:complex-coordinate}
Since $G_{00}=-1$, the identity
\[
G_{AB}X^A X^B = -(X^0)^2 + \sum_{i=1}^{d+1} (X^i)^2
\]
demonstrates that the temporal coordinate effectively functions as a factor of $i$ during Euclideanization. Consequently, the ambient parametrization inherently allows a natural complex extension. This principle explains why wavefunctions within the geometric formulation reside in a complex space.
Independent of the ambient signature, we may also choose a complex intrinsic coordinate on the parameter domain by setting
\(
\xi^{0}= i\tau,
\)
where $\tau=\tau(t)$ is a real intrinsic time coordinate. In general, we allow $\tau=f(t)$, while in the linearized model, one may take $\tau=t$. The evolution parameter $t$ remains the geometric flow parameter throughout, and the choice $\xi^{0}=i\tau$ is a coordinate convention used to express Euclideanized representations when convenient.
\end{remark}

\begin{definition}[Moving manifold]\label{def:Sigma}
Let $(\mathcal{M}^{d+2},G_{AB})$ be the ambient Minkowski space.
A time-evolving $d+1$-dimensional manifold is given by a smooth
local chart parametrization 
\(
X^A : U \times \mathbb{R} \to \mathcal{M}^{d+2},
\,
(\xi^\alpha,t) \mapsto X^A(\xi,t),
\)
where $U\subset\mathbb{R}^{d+1}$ is a coordinate patch.
The image at a fixed time,
\(
\Sigma(t) := X(\cdot,t),
\)
is the moving manifold, and $\{\xi^\alpha\}$ are intrinsic coordinates
on~$\Sigma(t)$. 
For convenience we refer to the map $X(\xi,t)$ as the \emph{embedding}
of $\Sigma(t)$ into $\mathcal{M}^{d+2}$, although it is not an embedding
in the strict global sense of differential geometry.  Moreover,
$X(\xi,t)$ never enters the evolution equations directly, so the
formulation remains parametrization-free.
\end{definition}

\begin{definition}[Induced metric and shift tensor]\label{def:induced}
Let the shift tensor be 
\(
Z^A{}_\alpha := \frac{\partial X^A}{\partial \xi^\alpha}.
\)
Then the induced manifold metric tensor is defined as 
\begin{equation}
g_{\alpha\beta}
=
G_{AB}\,Z^A{}_\alpha Z^B{}_\beta,
\qquad
g^{\alpha\gamma} g_{\gamma\beta} = \delta^\alpha{}_\beta.
\label{eq:induced-metric}
\end{equation}
where $\delta^\alpha{}_\beta$ is the Kronecker delta.
\end{definition}

\begin{definition}[Unit normal]\label{def:normal}
The unit normal $N^A$ is defined by
\(
G_{AB}N^A Z^B{}_\alpha = 0,
\qquad
G_{AB}N^A N^B = +1.
\)
Because $G_{AB}$ is Lorentzian, the induced metric $g_{\alpha\beta}$ is
in general pseudo-Riemannian, and the normal inherits the ambient
signature.
\end{definition}

\begin{definition}[Surface connection]\label{def:connection}
The Levi--Civita connection compatible with $g_{\alpha\beta}$ is
\begin{equation}
\Gamma^\gamma_{\alpha\beta}
=
\frac{1}{2}g^{\gamma\delta}
(\partial_\alpha g_{\beta\delta}
+\partial_\beta g_{\alpha\delta}
-\partial_\delta g_{\alpha\beta}).
\label{eq:surface-Christoffel}
\end{equation}
\end{definition}

\begin{definition}[Second fundamental form, Curvature Tensor]\label{def:SFF}
The second fundamental form, extrinsic curvature or simply curvature tensor, of $\Sigma(t)$ is
defined by
\begin{equation}
B_{\alpha\beta}
:=
G_{AB}\,N^A \nabla_\alpha Z^B{}_\beta,
\label{eq:SFF}
\end{equation}
where $\nabla_\alpha$ is the induced surface covariant derivative.
It is symmetric: $B_{\alpha\beta}=B_{\beta\alpha}$ and 
the trace
\(
B_\alpha{}^{\alpha} = g^{\alpha\beta} B_{\alpha\beta}
\)
is the signed mean curvature.
\end{definition}

\begin{lemma}[Gauss--Weingarten relations]\label{lem:GW}
The ambient derivative of the tangent basis splits into tangential
and normal components:
\begin{equation}
\nabla_\alpha Z^A{}_\beta
=
\Gamma^\gamma_{\alpha\beta} Z^A{}_\gamma
+ B_{\alpha\beta} N^A.
\label{eq:GW}
\end{equation}
This identity defines the intrinsic-extrinsic decomposition of the
moving manifold in the Minkowski ambient space, consistent with Definitions~\ref{def:Minkowski}--\ref{def:SFF}.
\end{lemma}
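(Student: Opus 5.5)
The plan is to decompose the ambient vector $\partial_\alpha Z^A{}_\beta=\partial^2 X^A/\partial\xi^\alpha\partial\xi^\beta$ in the frame $\{Z^A{}_\gamma, N^A\}$ and identify the two sets of coefficients. The key point is to interpret $\nabla_\alpha Z^A{}_\beta$ correctly. Because the ambient space is flat Minkowski in Cartesian coordinates, the ambient Christoffel symbols vanish. The derivative acting on the surface index $\beta$ is then the one appearing in the Gauss--Weingarten identity; for \eqref{eq:GW} to hold as written, we read $\nabla_\alpha Z^A{}_\beta$ on its left-hand side as the plain partial $\partial_\alpha Z^A{}_\beta$. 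Under the fully covariant reading the tangential part drops out, and that form of the identity is $\nabla_\alpha Z^A{}_\beta=B_{\alpha\beta}N^A$. We will state this convention explicitly, note that the two readings differ exactly by $\Gamma^\gamma_{\alpha\beta}Z^A{}_\gamma$, and prove the partial-derivative version. The covariant one then follows at once.

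First we justify the frame. By Definition~\ref{def:normal}, $N^A$ is $G$-orthogonal to every $Z^A{}_\gamma$ and has $G_{AB}N^AN^B=+1\neq0$. Since $g_{\alpha\beta}$ is invertible (Definition~\ref{def:induced}), the tangent vectors are linearly independent. Hence $\{Z^A{}_\gamma,N^A\}$ is a basis of the $(d+2)$-dimensional ambient space. In particular, $N^A$ is spacelike and nonnull, so no degeneracy arises. We then write
\[
\partial_\alpha Z^A{}_\beta = C^\gamma{}_{\alpha\beta}Z^A{}_\gamma + D_{\alpha\beta}N^A .
\]

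To extract $D_{\alpha\beta}$, contract with $G_{AB}N^B$ and use orthogonality and normalization, which gives $D_{\alpha\beta}=G_{AB}N^A\partial_\alpha Z^B{}_\beta$. This agrees with $B_{\alpha\beta}$ of Definition~\ref{def:SFF}, since the Christoffel correction there is tangential and is annihilated by $N$. Symmetry follows from $\partial_\alpha Z_\beta=\partial_\alpha\partial_\beta X$.

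To extract $C^\gamma{}_{\alpha\beta}$, contract with $G_{AB}Z^B{}_\delta$, which gives $C^\gamma{}_{\alpha\beta}g_{\gamma\delta}=G_{AB}Z^B{}_\delta\,\partial_\alpha\partial_\beta X^A$. Call the right-hand side $\Gamma_{\delta\alpha\beta}$. Differentiating \eqref{eq:induced-metric} gives
\[
\partial_\alpha g_{\beta\delta}=\Gamma_{\delta\alpha\beta}+\Gamma_{\beta\alpha\delta}.
\]
The standard cyclic Koszul combination, using symmetry of $\Gamma_{\delta\alpha\beta}$ in its last two indices, yields
\[
\Gamma_{\delta\alpha\beta}=\tfrac12(\partial_\alpha g_{\beta\delta}+\partial_\beta g_{\alpha\delta}-\partial_\delta g_{\alpha\beta}).
\]
Raising the index with $g^{\gamma\delta}$ gives exactly \eqref{eq:surface-Christoffel}.

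The main obstacle is not computational but conventional: pinning down which derivative $\nabla_\alpha$ denotes, so that the stated identity is neither vacuous nor off by the $\Gamma$ term. A secondary concern is the pseudo-Riemannian signature. We will check that only invertibility of $g$ and $G_{AB}N^AN^B=1$ are used, never positivity, so the argument holds verbatim in the Minkowski setting.
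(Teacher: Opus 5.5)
Your proof is correct. The paper gives no proof of this lemma; it simply asserts it as the standard Gauss--Weingarten identity carried over from CMS. Your argument is the textbook derivation the paper leaves implicit:
\begin{itemize}
\item you decompose $\partial_\alpha Z^A{}_\beta$ in the frame $\{Z^A{}_\gamma,N^A\}$;
\item you read off the normal coefficient by contracting with $G_{AB}N^B$;
\item you recover \eqref{eq:surface-Christoffel} from the tangential coefficients via $\partial_\alpha g_{\beta\delta}=\Gamma_{\delta\alpha\beta}+\Gamma_{\beta\alpha\delta}$ and the Koszul combination.
\end{itemize}
Every step checks. That includes the basis property, which uses only invertibility of $g_{\alpha\beta}$ and $G_{AB}N^AN^B=1$.

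Your caveat about conventions is justified and is the most useful part of the proposal. Definition~\ref{def:SFF} calls $\nabla_\alpha$ the induced surface covariant derivative. Under that fully covariant reading, where $\nabla_\alpha$ also acts on the surface index $\beta$ through $\Gamma^\gamma_{\alpha\beta}$, the correct identity is $\nabla_\alpha Z^A{}_\beta=B_{\alpha\beta}N^A$. So the displayed formula holds only if its left side means the plain ambient derivative $\partial_\alpha Z^A{}_\beta$, as the words ``ambient derivative'' in the statement suggest. $B_{\alpha\beta}$ is the same under both readings, because they differ by a tangential term that $N$ annihilates.

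You could sharpen your signature remark slightly. The normalization $G_{AB}N^AN^B=+1$ is exactly what makes the normal coefficient $B_{\alpha\beta}$ rather than $\epsilon B_{\alpha\beta}$, with $\epsilon=G_{AB}N^AN^B$. The lemma as written therefore presupposes a timelike hypersurface with spacelike normal.
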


\subsection{Velocity decomposition and invariant time derivative}

\begin{definition}[Velocity decomposition]\label{decomp}
The motion of the manifold \(\Sigma(t)\) in Minkowski space is described by
the time derivative of the mapping,
$V^A := \frac{\partial X^A}{\partial t}$.
The velocity decomposes into normal and tangential parts,
\begin{equation}
V^A
=
C\,N^A + V^\alpha Z^A{}_\alpha,
\label{eq:vel-decomp}
\end{equation}
where
$C := G_{AB} V^A N^B$
is the normal velocity and \(V^\alpha\) are the tangential components.
The scalar \(C\) is invariant under changes of intrinsic coordinates and
ambient Lorentz transformations.
\end{definition}

\begin{definition}[Invariant time derivative on scalars]\label{invariant derivative}
For a scalar field \(f(\xi,t)\) defined on \(\Sigma(t)\), the invariant time
derivative is
\begin{equation}
\dot{\nabla} f
=
\partial_t f - V^\alpha \nabla_\alpha f,
\label{eq:dotnabla-scalar}
\end{equation}
which represents the covariant rate of change along the moving manifold \cite{Hadamard1903, GrinfeldP2013,Svintradze2017}.
\end{definition}

\begin{lemma}[Invariant time derivative]\label{def:dot-nabla}
Let $T^{\alpha_1\cdots\alpha_p}{}_{\beta_1\cdots\beta_q}(\xi,t)$ be a smooth
surface tensor field on the evolving manifold $\Sigma(t)$.
The \emph{invariant time derivative} (CMS time–covariant derivative)
is defined by
\begin{align}
\dot{\nabla}
T^{\alpha_1\cdots\alpha_p}{}_{\beta_1\cdots\beta_q}
&= 
\partial_t T^{\alpha_1\cdots\alpha_p}{}_{\beta_1\cdots\beta_q}
- V^\gamma \nabla_\gamma 
  T^{\alpha_1\cdots\alpha_p}{}_{\beta_1\cdots\beta_q}      \nonumber \\[0.3em]
&\quad
+ \sum_{r=1}^{p}
  \dot{\Gamma}^{\alpha_r}{}_{\mu}
  T^{\alpha_1\cdots \mu \cdots\alpha_p}{}_{\beta_1\cdots\beta_q} \nonumber \\[0.3em]
&\quad- \sum_{s=1}^{q}
  \dot{\Gamma}^{\ \mu}{}_{\beta_s}
  T^{\alpha_1\cdots\alpha_p}{}_{\beta_1\cdots \mu \cdots\beta_q},
\label{eq:dotnabla-tensor}
\end{align}
where the CMS time–connection coefficients are, first introduced by P. Grinfeld \cite{GrinfeldP2013} and therefore we refer to them as the Grinfeld connections
\begin{equation}
\dot{\Gamma}^{\alpha}{}_{\beta}
=
\nabla_\beta V^\alpha
- C B^\alpha{}_\beta.
\label{eq:time-connection}
\end{equation}
\end{lemma}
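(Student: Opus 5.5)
The statement is essentially a definition dressed as a lemma, so what has to be proved is that \eqref{eq:dotnabla-tensor} with the Grinfeld connection \eqref{eq:time-connection} is well-defined and invariant. Concretely, for a tensor field $T$ transforming as a surface tensor under time-independent reparametrizations $\xi\mapsto\xi'(\xi)$, the right-hand side should again be a tensor of the same type. It should also be independent of how tangential motion is chosen: under time-dependent reparametrizations $\xi=\xi(\xi',t)$, which shift $V^\alpha$, the result should be unchanged. Finally, it should agree with \eqref{eq:dotnabla-scalar} on scalars. I would take the scalar case (Definition~\ref{invariant derivative}) as the base of the construction. I would then fix the connection terms by demanding the Leibniz rule, commutation with contraction, and the natural value on the shift tensor.

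The key computation is the action on $Z^A{}_\alpha$. Define the ambient vector $\dot\nabla Z^A{}_\alpha$ by requiring that it be normal, which is the natural invariance requirement. Differentiate $V^A=CN^A+V^\beta Z^A{}_\beta$ from \eqref{eq:vel-decomp} with respect to $\xi^\alpha$. The Gauss--Weingarten relation (Lemma~\ref{lem:GW}) together with the Weingarten companion $\nabla_\alpha N^A=-B^\beta{}_\alpha Z^A{}_\beta$ then gives
\[
\partial_\alpha V^A=(\nabla_\alpha C+V^\beta B_{\alpha\beta})N^A+(\nabla_\alpha V^\beta-CB^\beta{}_\alpha)Z^A{}_\beta .
\]
Since $\partial_t Z^A{}_\alpha=\partial_\alpha V^A$, the tangential part of $\partial_tZ^A{}_\alpha - V^\gamma\nabla_\gamma Z^A{}_\alpha$ is exactly $\dot\Gamma^\beta{}_\alpha Z^A{}_\beta$. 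Subtracting it leaves the invariant normal vector $\nabla_\alpha C\,N^A$. This identifies \eqref{eq:time-connection} as the unique choice that makes $\dot\nabla Z^A{}_\alpha$ normal. One also gets $\dot\nabla g_{\alpha\beta}=0$ by metric compatibility: the symmetric part of $\dot\Gamma_{\alpha\beta}+\dot\Gamma_{\beta\alpha}$ matches $\partial_t g_{\alpha\beta}-V^\gamma\nabla_\gamma g_{\alpha\beta}$. In the Minkowski setting this requires only $G_{AB}N^AN^B=1$, as in Definition~\ref{def:normal}.

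For a general tensor, I would write $T$ as a contraction with covectors and vectors and apply the Leibniz rule to each index. Each lower index contributes $-\dot\Gamma^\mu{}_{\beta_s}$ and each upper index contributes $+\dot\Gamma^{\alpha_r}{}_\mu$, which reproduces \eqref{eq:dotnabla-tensor}. For the tensor character, I would check the one-index case explicitly. Under $\xi'(\xi)$, the term $\partial_t T_\beta$ transforms tensorially, but $V^\gamma\nabla_\gamma T_\beta$ differs from $V^\gamma\partial_\gamma T_\beta$ by Christoffel terms. Those terms are absorbed by the $V^\gamma\Gamma$ parts hidden inside $\nabla_\beta V^\gamma$, and the cancellation should be checked by expanding both sides in partial derivatives. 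The curvature term $CB^\alpha{}_\beta$ is manifestly tensorial.

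I expect the main obstacle to be invariance under time-dependent reparametrizations. The shift there is $V^\alpha\to V^\alpha+\partial_t\xi^\alpha$, and $\partial_tT$ picks up $-(\partial_t\xi^\gamma)\partial_\gamma T$ together with Jacobian time derivatives. I would first verify that the $\partial_t$ of the Jacobian is cancelled precisely by $\nabla_\beta(\partial_t\xi^\alpha)$ in $\dot\Gamma$. I would do this for a single covector, and induction on rank then gives the general case.
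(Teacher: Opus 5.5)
The paper gives no proof of this statement. It imports Grinfeld's formula as a definition, and even the first consequence, $\dot\nabla g_{\alpha\beta}=0$ (Theorem~\ref{lem:metric-invariance}), is only asserted to follow ``immediately.'' Your proposal supplies the missing justification, and its core is correct.

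Decomposing $\partial_\alpha V^A$ shows that $\dot\Gamma^\beta{}_\alpha=\nabla_\alpha V^\beta-CB^\beta{}_\alpha$ is exactly the tangential component $Z^\beta{}_A\,\partial_\alpha V^A$, with $Z^\beta{}_A=g^{\beta\gamma}G_{AB}Z^B{}_\gamma$. It follows that $\dot\nabla Z^A{}_\alpha=N^A\nabla_\alpha C$. Applying the Leibniz rule to $g_{\alpha\beta}=G_{AB}Z^A{}_\alpha Z^B{}_\beta$ then gives metric compatibility at once, so you also prove Theorem~\ref{lem:metric-invariance}, which the paper leaves unproved. The paper's route buys only brevity; yours makes the definition self-contained.

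Note that you are reading Lemma~\ref{lem:GW} as a formula for $\partial_\alpha Z^A{}_\beta$, and using $\nabla_\gamma Z^A{}_\alpha=B_{\gamma\alpha}N^A$ in the convective term. That reading is the correct one. Taken literally with $\nabla$, the printed lemma would add a non-tensorial $V^\gamma\Gamma^\beta_{\gamma\alpha}$ to your ``unique'' $\dot\Gamma$.

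One point of framing: normality of $\dot\nabla Z^A{}_\alpha$ is a normalization, not an invariance requirement. Tensoriality determines $\dot\Gamma$ only up to addition of an arbitrary tensor. The condition $\dot\nabla g_{\alpha\beta}=0$ fixes only the symmetric part of $\dot\Gamma_{\alpha\beta}$, so any antisymmetric tensor could still be added. Since the statement is a definition, this is harmless.

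Your invariance checks need some repair.

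\emph{Time-independent reparametrizations.} There is nothing to verify, because every term is built from tensors and $\nabla$. The Christoffel cancellation you describe actually belongs to the time-dependent case.

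\emph{Time-dependent reparametrizations.} Take $\xi=\xi(\xi',t)$ and $w^\gamma=\partial_t\xi^\gamma$. Then $\partial_t T$ picks up $+w^\gamma\partial_\gamma T$, with the same sign as the shift $V^\alpha\to V^\alpha+w^\alpha$; with your sign even the scalar rule would fail. For a covector there are two cancellations. The term $\partial_t J^\beta{}_{\beta'}=J^\gamma{}_{\beta'}\partial_\gamma w^\beta$ is cancelled by the $\partial_\beta w^\mu$ part of $\nabla_\beta w^\mu$ inside $\dot\Gamma$. The Christoffel part of $\nabla_\beta w^\mu$ cancels $w^\gamma(\partial_\gamma-\nabla_\gamma)T_\beta$ by symmetry of $\Gamma$.

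\emph{A cleaner route.} This avoids induction on rank. Torsion-freeness lets you rewrite the definition as
\[
\dot\nabla T=\partial_t T-\mathcal{L}_V T-C\sum_{r}B^{\alpha_r}{}_{\mu}\,T^{\cdots\mu\cdots}{}_{\cdots}+C\sum_{s}B^{\mu}{}_{\beta_s}\,T^{\cdots}{}_{\cdots\mu\cdots}.
\]
Under $\xi=\phi_t(\xi')$ one has $\partial_t(\phi_t^*T)=\phi_t^*(\partial_t T+\mathcal{L}_w T)$ and $V^\alpha\mapsto V^\alpha+w^\alpha$. Meanwhile $C$ and $B_{\alpha\beta}$ are simply pulled back. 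The two $\mathcal{L}_w T$ terms therefore cancel for every rank simultaneously.
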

The operator $\dot{\nabla}$ is the time–covariant derivative on moving
manifolds \cite{GrinfeldP2013}.  It removes spurious variations caused by tangential drift
$V^\alpha$ and normal deformation $C$, ensuring that tensor equations
remain invariant under arbitrary time–dependent reparametrizations of
the surface coordinates.  The time–connection symbols 
$\dot{\Gamma}^\alpha{}_\beta$ encode the geometric coupling between
curvature $B_{\alpha\beta}$ and the kinematic evolution of the
hypersurface, providing the natural extension of the Levi–Civita
connection to dynamical geometry.

\begin{theorem}[Metric invariance]\label{lem:metric-invariance}
Definitions~\ref{decomp}-\ref{invariant derivative} and Lemma~\ref{def:dot-nabla}, along with Equations~\eqref{eq:vel-decomp}-\eqref{eq:time-connection}, immediately lead to a remarkable theorem on metric compatibility. 
\begin{equation}
\dot{\nabla} g_{\alpha\beta} = 0,
\qquad
\dot{\nabla} g^{\alpha\beta} = 0.
\label{eq:dotnabla-metric-zero}
\end{equation}
This guarantees that the inheritance of the invariant time derivative is linear, analogous to the manner in which covariant derivatives are inherited. 
\end{theorem}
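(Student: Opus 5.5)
The plan is to compute $\dot\nabla g_{\alpha\beta}$ directly from the tensor formula \eqref{eq:dotnabla-tensor} with $p=0$, $q=2$, and show that the partial time derivative of the metric is cancelled exactly by the transport and Grinfeld-connection terms. Applying the lemma gives
\begin{equation*}
\dot\nabla g_{\alpha\beta}=\partial_t g_{\alpha\beta}-V^\gamma\nabla_\gamma g_{\alpha\beta}-\dot\Gamma^{\mu}{}_{\alpha}g_{\mu\beta}-\dot\Gamma^{\mu}{}_{\beta}g_{\alpha\mu}.
\end{equation*}
The transport term vanishes by the usual Levi--Civita compatibility $\nabla_\gamma g_{\alpha\beta}=0$, which follows from Definition~\ref{def:connection}. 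Substituting \eqref{eq:time-connection} and lowering indices then leaves
\begin{equation*}
\dot\nabla g_{\alpha\beta}=\partial_t g_{\alpha\beta}-\nabla_\alpha V_\beta-\nabla_\beta V_\alpha+2C B_{\alpha\beta}.
\end{equation*}
So it remains to show that $\partial_t g_{\alpha\beta}=\nabla_\alpha V_\beta+\nabla_\beta V_\alpha-2CB_{\alpha\beta}$.

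This is the main step. I would differentiate \eqref{eq:induced-metric} in $t$. Because the ambient metric is flat and constant, $\partial_t g_{\alpha\beta}=G_{AB}(\partial_t Z^A{}_\alpha Z^B{}_\beta+Z^A{}_\alpha\partial_t Z^B{}_\beta)$, and mixed partials commute, so $\partial_t Z^A{}_\alpha=\partial_\alpha V^A$. Because $V^A$ is a surface scalar for each fixed $A$, we have $\partial_\alpha V^A=\nabla_\alpha V^A$. Inserting the decomposition \eqref{eq:vel-decomp} and expanding with the Leibniz rule gives
\begin{equation*}
\nabla_\alpha V^A=(\nabla_\alpha C)N^A+C\nabla_\alpha N^A+(\nabla_\alpha V^\gamma)Z^A{}_\gamma+V^\gamma\nabla_\alpha Z^A{}_\gamma .
\end{equation*}
The term $V^\gamma\nabla_\alpha Z^A{}_\gamma$ is purely normal by Lemma~\ref{lem:GW}: its tangential Christoffel part is already absorbed in the covariant derivative.

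For $\nabla_\alpha N^A$ I need the Weingarten relation $\nabla_\alpha N^A=-B^\gamma{}_\alpha Z^A{}_\gamma$. To get it, differentiate $G_{AB}N^AN^B=1$, which shows $\nabla_\alpha N$ is tangential. Then differentiate $G_{AB}N^AZ^B{}_\beta=0$ and use Definition~\ref{def:SFF}. I must take care with signs here, since the normal is spacelike ($+1$), so no extra sign from the Lorentzian signature should appear in this relation.

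Contracting with $G_{AB}Z^B{}_\beta$ then kills all normal parts and gives $\nabla_\alpha V_\beta-CB_{\alpha\beta}$. Symmetrizing in $\alpha,\beta$ yields the required expression, hence $\dot\nabla g_{\alpha\beta}=0$.

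For the inverse metric there are two routes. The first is to apply $\dot\nabla$ to $g^{\alpha\gamma}g_{\gamma\beta}=\delta^\alpha{}_\beta$ using the product rule; this needs linearity and Leibniz for $\dot\nabla$, which follow termwise from \eqref{eq:dotnabla-tensor}. One must also check that $\dot\nabla\delta^\alpha{}_\beta=0$, which holds because the two $\dot\Gamma$ terms cancel. This gives $\dot\nabla g^{\alpha\beta}=-g^{\alpha\mu}g^{\beta\nu}\dot\nabla g_{\mu\nu}=0$. The second route, as a cross-check, is to compute directly from $\partial_t g^{\alpha\beta}=-g^{\alpha\mu}g^{\beta\nu}\partial_t g_{\mu\nu}$.

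The main obstacle is the bookkeeping of signs and index placement in the Weingarten step and in the $\dot\Gamma$ terms. The sign convention of \eqref{eq:time-connection} must match the $-2CB_{\alpha\beta}$ coming from $\partial_t g$; if it does not, I would recheck the sign of the lower-index term in \eqref{eq:dotnabla-tensor}.
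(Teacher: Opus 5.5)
Your proof is correct, but it is organized differently from the paper. The paper gives no derivation: it calls the result ``immediate'' (citing Grinfeld) and then obtains the coordinate-time formula $\partial_t g_{\alpha\beta}=\nabla_\alpha V_\beta+\nabla_\beta V_\alpha-2CB_{\alpha\beta}$ of Corollary~\ref{cor:metric-evolution} \emph{from} the theorem. You run the implication the other way. You derive that formula independently from $g_{\alpha\beta}=G_{AB}Z^A{}_\alpha Z^B{}_\beta$, $\partial_t Z^A{}_\alpha=\partial_\alpha V^A$ and Gauss--Weingarten, and then insert it into the definition of $\dot\nabla$. This is not circular, and your cross-check for $g^{\alpha\beta}$ reproduces \eqref{eq:dt-inv-metric} exactly.

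The standard route in Grinfeld's calculus is different. It first computes $\dot\nabla Z^A{}_\alpha=N^A\nabla_\alpha C$ and then applies the product rule to $G_{AB}Z^A{}_\alpha Z^B{}_\beta$, using $G_{AB}N^AZ^B{}_\beta=0$. That version gives the derivative of the shift tensor as a by-product, which is useful for the curvature evolution in Theorem~\ref{Theo:CE}. However, it needs $\dot\nabla$ applied to an object carrying an ambient index, which the paper's Lemma~\ref{def:dot-nabla} (stated only for surface tensors) does not cover. Your computation stays entirely within the definitions the paper actually states.

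Two of your worries resolve cleanly.
\begin{itemize}
\item You read Lemma~\ref{lem:GW} correctly. As printed, it carries a Christoffel term, which is right only if its left side is $\partial_\alpha Z^A{}_\beta$. With the covariant derivative of Definition~\ref{def:SFF}, one has $\nabla_\alpha Z^A{}_\beta=B_{\alpha\beta}N^A$.
\item The sign check works out. With $G_{AB}N^AN^B=+1$, the Weingarten relation is $\nabla_\alpha N^A=-B_\alpha{}^{\gamma}Z^A{}_\gamma$, with no Lorentzian sign. The resulting $-2CB_{\alpha\beta}$ in $\partial_t g_{\alpha\beta}$ is cancelled exactly by the $-CB^\alpha{}_\beta$ part of \eqref{eq:time-connection}, so your fallback of rechecking conventions is unnecessary.
\end{itemize}

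The only implicit hypothesis worth stating is that $\Sigma(t)$ is non-null, i.e.\ $g_{\alpha\beta}$ is nondegenerate. Then the $G$-orthogonal complement of $N$ is exactly the span of the $Z^A{}_\gamma$, which your step ``$\nabla_\alpha N$ is tangential'' uses.
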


\begin{corollary}[Metric evolution in coordinate time]\label{cor:metric-evolution}
According to Theorem~\ref{lem:metric-invariance} together with
Lemma~\ref{def:dot-nabla}, the Levi--Civita
property \(\nabla_\gamma g_{\alpha\beta}=0\) and \eqref{eq:induced-metric}--\eqref{eq:time-connection}, the coordinate-time evolution of the metric and its inverse is
\begin{align}
\partial_t g_{\alpha\beta}
&=
\nabla_\alpha V_\beta + \nabla_\beta V_\alpha
- 2 C B_{\alpha\beta},
\label{eq:dt-metric}
\\
\partial_t g^{\alpha\beta}
&=
- \nabla^\alpha V^\beta - \nabla^\beta V^\alpha
+ 2 C B^{\alpha\beta},
\label{eq:dt-inv-metric}
\end{align}
where \(V_\alpha = g_{\alpha\beta} V^\beta\) and indices are raised and
lowered with \(g_{\alpha\beta}\).
Equations~\eqref{eq:dt-metric}–\eqref{eq:dt-inv-metric} demonstrate that the variation of the metric with respect to coordinate time is exclusively governed by the tangential velocity field and the normal motion via the coupling of the normal velocity and the curvature tensor.
\end{corollary}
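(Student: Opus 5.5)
The plan is to specialize the tensor formula of Lemma~\ref{def:dot-nabla} to the covariant metric $g_{\alpha\beta}$ (rank $(0,2)$) and set it equal to zero by Theorem~\ref{lem:metric-invariance}. Using \eqref{eq:dotnabla-tensor} with $p=0$, $q=2$ gives
\begin{equation*}
0=\dot{\nabla} g_{\alpha\beta}
=\partial_t g_{\alpha\beta}-V^\gamma\nabla_\gamma g_{\alpha\beta}
-\dot{\Gamma}^{\mu}{}_{\alpha}\,g_{\mu\beta}-\dot{\Gamma}^{\mu}{}_{\beta}\,g_{\alpha\mu}.
\end{equation*}
The second term vanishes by the Levi--Civita property $\nabla_\gamma g_{\alpha\beta}=0$. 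For the connection terms, substitute the Grinfeld connection \eqref{eq:time-connection} and lower the index with the metric. This gives $\dot{\Gamma}^{\mu}{}_{\alpha}g_{\mu\beta}=\nabla_\alpha V_\beta - C B_{\beta\alpha}$. Commuting $g_{\mu\beta}$ through $\nabla_\alpha$ is again legitimate by the Levi--Civita property. Using the symmetry $B_{\alpha\beta}=B_{\beta\alpha}$ from Definition~\ref{def:SFF}, we then solve for $\partial_t g_{\alpha\beta}$ and obtain \eqref{eq:dt-metric} directly.

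For the inverse metric I would proceed in the same way with $p=2$, $q=0$. The formula is $0=\partial_t g^{\alpha\beta}+\dot{\Gamma}^{\alpha}{}_{\mu}g^{\mu\beta}+\dot{\Gamma}^{\beta}{}_{\mu}g^{\alpha\mu}$, and raising indices yields \eqref{eq:dt-inv-metric} with the opposite signs. As an independent check, I would also differentiate $g^{\alpha\gamma}g_{\gamma\beta}=\delta^\alpha{}_\beta$ in $t$, which gives $\partial_t g^{\alpha\beta}=-g^{\alpha\mu}g^{\beta\nu}\partial_t g_{\mu\nu}$. Inserting \eqref{eq:dt-metric} reproduces \eqref{eq:dt-inv-metric}.

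The main point needing care is sign and index bookkeeping. One must apply the lowered-index rule of \eqref{eq:dotnabla-tensor} with the correct minus sign, and track which slot of $\dot{\Gamma}^{\alpha}{}_{\beta}$ the derivative $\nabla_\beta V^\alpha$ occupies, so that the terms assemble into the symmetric combination $\nabla_\alpha V_\beta+\nabla_\beta V_\alpha$.

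A geometric cross-check of the normal term is available from the definition $g_{\alpha\beta}=G_{AB}Z^A{}_\alpha Z^B{}_\beta$. Here $\partial_t Z^A{}_\alpha=\partial_\alpha V^A$, and we expand $V^A$ by \eqref{eq:vel-decomp} together with the Gauss--Weingarten relations \eqref{eq:GW}. The normal part contributes $C\,\partial_\alpha N^A$, whose tangential projection is $-C B_\alpha{}^\gamma Z^A{}_\gamma$. This yields exactly $-2CB_{\alpha\beta}$ and confirms the sign convention for $B_{\alpha\beta}$ relative to $N^A$ in the pseudo-Riemannian setting.
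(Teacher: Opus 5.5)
Your proposal is correct and follows the route the paper indicates. You specialize the $\dot\nabla$ formula to $g_{\alpha\beta}$ and $g^{\alpha\beta}$, set it to zero by metric invariance, drop the transport term via $\nabla_\gamma g_{\alpha\beta}=0$, and substitute the Grinfeld connection; the signs and index placements you give all check out, and your embedding computation from $g_{\alpha\beta}=G_{AB}Z^A{}_\alpha Z^B{}_\beta$ is a valid independent confirmation that does not presuppose $\dot\nabla g_{\alpha\beta}=0$.
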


\begin{theorem}[Curvature evolution]\label{Theo:CE}
Definitions~\ref{decomp}-\ref{invariant derivative} and Lemma~\ref{def:dot-nabla}, along with Equations~\eqref{eq:vel-decomp}-\eqref{eq:time-connection}, immediately lead to a remarkable theorem on curvature tensor time evolution. 
\begin{equation}
\dot{\nabla} B_{\alpha\beta} = \nabla_\alpha\nabla_\beta C + C B_{\alpha\gamma}B^{\gamma}{}_{\beta},
\qquad
\dot{\nabla} B^{\alpha}{}_{\alpha} = \nabla^\alpha\nabla_\alpha C + C B_{\alpha\gamma}B^{\gamma\alpha}.
\label{eq:dotnabla-MC}
\end{equation}
Hence, the curvature tensor evolves strictly under normal flow \cite{GrinfeldP2013}.  
\end{theorem}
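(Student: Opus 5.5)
We derive the curvature evolution the same way the metric identity (Theorem~\ref{lem:metric-invariance}) is obtained: apply the invariant time derivative to the defining relation \eqref{eq:SFF}, written in the equivalent form $B_{\alpha\beta}=G_{AB}N^A\nabla_\alpha Z^B{}_\beta$, and use the product rule. Since $G_{AB}$ is constant, this gives $\dot\nabla B_{\alpha\beta}=G_{AB}(\dot\nabla N^A)\nabla_\alpha Z^B{}_\beta+G_{AB}N^A\dot\nabla\nabla_\alpha Z^B{}_\beta$. Two auxiliary identities are therefore needed first.
\begin{itemize}
\item[(i)] The \emph{normal evolution} $\dot\nabla N^A=-Z^A{}_\gamma\nabla^\gamma C$. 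To prove it, differentiate the constraints of Definition~\ref{def:normal}. From $G_{AB}N^AN^B=1$, $\dot\nabla N^A$ is tangential. From $G_{AB}N^AZ^B{}_\alpha=0$, its tangential component is $-G_{AB}N^A\dot\nabla Z^B{}_\alpha$.
\item[(ii)] The \emph{shift evolution} $\dot\nabla Z^A{}_\alpha=N^A\nabla_\alpha C$. To prove it, apply \eqref{eq:dotnabla-tensor} to $Z^A{}_\alpha=\partial_\alpha X^A$, insert $\partial_t Z^A{}_\alpha=\partial_\alpha V^A$ with \eqref{eq:vel-decomp}, and expand $\nabla_\alpha(CN^A+V^\beta Z^A{}_\beta)$ using the Gauss--Weingarten relations \eqref{eq:GW} and the Weingarten relation $\nabla_\alpha N^A=-B_\alpha{}^\beta Z^A{}_\beta$. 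The drift term $-V^\gamma\nabla_\gamma Z^A{}_\alpha$ cancels the $V^\beta B_{\alpha\beta}N^A$ and Christoffel pieces. The Grinfeld connection \eqref{eq:time-connection} cancels the remaining $\nabla_\alpha V^\beta Z^A{}_\beta$ and $-CB_\alpha{}^\beta Z^A{}_\beta$ terms, leaving only $N^A\nabla_\alpha C$.
\end{itemize}

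With (i) the first term vanishes immediately. $\dot\nabla N^A$ is tangential, while $G_{AB}Z^A{}_\gamma\nabla_\alpha Z^B{}_\beta=0$ because, by \eqref{eq:GW} in the tensorial sense (with the Christoffel part absorbed), $\nabla_\alpha Z^B{}_\beta=B_{\alpha\beta}N^B$ is purely normal.

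For the second term we must commute $\dot\nabla$ with $\nabla_\alpha$ acting on $Z^B{}_\beta$. This is the main obstacle. The commutator is not zero; following Grinfeld, for a covariant surface index it reads
\[
\dot\nabla\nabla_\alpha T_\beta-\nabla_\alpha\dot\nabla T_\beta=(\text{terms in } CB \text{ and } \nabla B)\cdot T.
\]
I would establish this by direct computation from \eqref{eq:dotnabla-tensor}, using $\partial_t\Gamma^\gamma_{\alpha\beta}$ obtained from Corollary~\ref{cor:metric-evolution} via the standard variation formula
\[
\delta\Gamma=\tfrac12 g^{-1}(\nabla\delta g+\nabla\delta g-\nabla\delta g).
\]
The ambient index $A$ of $Z^A{}_\beta$ contributes no connection terms. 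After contracting with $N_B$, every term proportional to $Z^B{}_\gamma$ drops out, since $N_BZ^B{}_\gamma=0$.

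The surviving pieces are then computed as follows.
\begin{itemize}
\item From (ii), $N_B\nabla_\alpha(N^B\nabla_\beta C)=\nabla_\alpha\nabla_\beta C$, using $N_B\nabla_\alpha N^B=0$.
\item The commutator contributes $N_B\cdot(\ldots)$ terms. These combine with the normal part of $\nabla_\alpha\nabla_\beta C$-type contributions to give $CB_{\alpha\gamma}B^\gamma{}_\beta$.
\end{itemize}
The sign of this quadratic term must be checked carefully against the $-CB^\alpha{}_\beta$ convention in \eqref{eq:time-connection}. I expect the bookkeeping of this commutator and of that sign to be the only delicate step.

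Finally, the trace formula follows by contracting with $g^{\alpha\beta}$ and invoking Theorem~\ref{lem:metric-invariance}. Because $\dot\nabla g^{\alpha\beta}=0$, the derivative passes through the contraction, giving $\dot\nabla B^\alpha{}_\alpha=\nabla^\alpha\nabla_\alpha C+CB_{\alpha\gamma}B^{\gamma\alpha}$.

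The concluding remark, that curvature evolves strictly under normal flow, is read off directly: the right-hand side depends on $V^\alpha$ only through $\dot\nabla$ itself. All tangential-velocity effects were absorbed by the Grinfeld connection in step (ii).
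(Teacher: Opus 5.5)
Your identities (i) and (ii), the vanishing of the $G_{AB}(\dot\nabla N^A)\nabla_\alpha Z^B{}_\beta$ term, and the trace step are correct. The overall route is the standard Grinfeld derivation, which the paper simply cites without giving a proof of its own. The gap is in the one step that carries the content of the theorem: the origin of $CB_{\alpha\gamma}B^\gamma{}_\beta$. You assert that the commutator has the form $(\text{terms in } CB,\nabla B)\cdot T$, i.e.\ zeroth order in $T$. Taken literally, this eliminates the quadratic term. With $T_\beta=Z^B{}_\beta$, every such term is proportional to $Z^B{}_\mu$ and is annihilated by $N_B$, as you yourself note. Meanwhile the (ii)-contribution gives exactly $N_B\nabla_\alpha(N^B\nabla_\beta C)=\nabla_\alpha\nabla_\beta C$, because its $\nabla_\alpha N^B$ piece is tangential. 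There is nothing left for the commutator terms to ``combine with'', so your argument as written yields $\dot\nabla B_{\alpha\beta}=\nabla_\alpha\nabla_\beta C$. The sign you postpone is therefore not bookkeeping but the entire nontrivial content.

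What is missing is a first-order term in the commutator. Carry out the computation you describe. The $V^\alpha$-dependent part of $\partial_t\Gamma^\mu_{\alpha\beta}$ cancels against $V^\gamma[\nabla_\alpha,\nabla_\gamma]T_\beta$ and against the $\nabla_\alpha\nabla_\beta V^\mu\,T_\mu$ coming from $\nabla_\alpha\dot\Gamma^\mu{}_\beta$, and you obtain
\[
\dot\nabla\nabla_\alpha T_\beta-\nabla_\alpha\dot\nabla T_\beta
= C B^\gamma{}_\alpha\nabla_\gamma T_\beta
+\bigl(\nabla_\beta(CB^\mu{}_\alpha)-\nabla^\mu(CB_{\alpha\beta})\bigr)T_\mu .
\]
The first term comes from $\nabla_\alpha V^\gamma-\dot\Gamma^\gamma{}_\alpha=CB^\gamma{}_\alpha$, which also settles the sign under the convention \eqref{eq:time-connection}. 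For $T_\beta=Z^B{}_\beta$ the bracketed terms are tangential and drop. By contrast, $\nabla_\gamma Z^B{}_\beta=B_{\gamma\beta}N^B$ makes the first term normal, and contracting with $N_B$ gives precisely $CB_{\alpha\gamma}B^\gamma{}_\beta$.

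A shorter route avoids $\partial_t\Gamma$ altogether. Each Cartesian component $N^A$ is a surface scalar, and for scalars the definitions give in two lines $\dot\nabla\nabla_\beta f-\nabla_\beta\dot\nabla f=CB^\gamma{}_\beta\nabla_\gamma f$. Apply $\dot\nabla$ to the Weingarten relation $\nabla_\beta N^A=-B^\alpha{}_\beta Z^A{}_\alpha$ and use (i) and (ii). The normal terms cancel, leaving $Z^A{}_\alpha\dot\nabla B^\alpha{}_\beta=Z^A{}_\alpha\bigl(\nabla_\beta\nabla^\alpha C+CB^\alpha{}_\gamma B^\gamma{}_\beta\bigr)$. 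Lowering the index with $\dot\nabla g_{\alpha\beta}=0$ (Theorem~\ref{lem:metric-invariance}) gives the theorem.
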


\subsection{Integration theorem and conserved scalar fields}

\begin{theorem}[Fundamental Theorem of MM Calculus]\label{thm:CMS-transport}
Let $\Sigma(t)$ be a smoothly evolving closed manifold. Then, for any scalar field $f(\xi,t)\in\Sigma(t)$, the rate of change of an integral is
\begin{equation}
\frac{d}{dt}\int_{\Sigma(t)} f d\Sigma
=
\int_{\Sigma(t)}
\bigl(
\dot{\nabla} f
- C B_\alpha{}^{\alpha} f
\bigr) d\Sigma,
\label{eq:CMS-transport}
\end{equation}
where $\dot{\nabla}$ is the invariant time derivative
and $B_\alpha{}^{\alpha}$ is the signed mean curvature.
\end{theorem}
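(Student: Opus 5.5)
We prove \eqref{eq:CMS-transport} by pulling the integral back to the fixed parameter domain and differentiating under the integral sign.

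\emph{Step 1: pull back and differentiate.} In local coordinates write
\[
\int_{\Sigma(t)} f\,d\Sigma=\int_U f(\xi,t)\sqrt{|g|}\,d\xi ,
\qquad g=\det g_{\alpha\beta}.
\]
The domain $U$ does not depend on $t$; for a closed manifold we cover $\Sigma(t)$ by finitely many charts with a time-independent partition of unity on the parameter domain. Hence the time derivative passes inside:
\[
\frac{d}{dt}\int_U f\sqrt{|g|}\,d\xi=\int_U\bigl(\partial_t f\,\sqrt{|g|}+f\,\partial_t\sqrt{|g|}\bigr)d\xi .
\]

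\emph{Step 2: evolution of the volume element.} Jacobi's formula gives $\partial_t\sqrt{|g|}=\tfrac12\sqrt{|g|}\,g^{\alpha\beta}\partial_t g_{\alpha\beta}$; the absolute value is harmless because the sign of $g$ is constant along a smooth evolution. Contracting \eqref{eq:dt-metric} from Corollary~\ref{cor:metric-evolution} with $g^{\alpha\beta}$ then yields
\[
\partial_t\sqrt{|g|}=\sqrt{|g|}\,\bigl(\nabla_\alpha V^\alpha-C B_\alpha{}^{\alpha}\bigr).
\]

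\emph{Step 3: change to the invariant time derivative.} Substitute $\partial_t f=\dot\nabla f+V^\alpha\nabla_\alpha f$ from Definition~\ref{invariant derivative}. The integrand becomes
\[
\sqrt{|g|}\,\bigl(\dot\nabla f-CB_\alpha{}^{\alpha}f+V^\alpha\nabla_\alpha f+f\nabla_\alpha V^\alpha\bigr).
\]
The last two terms combine into $\nabla_\alpha(fV^\alpha)$. Since $\sqrt{|g|}\,\nabla_\alpha W^\alpha=\partial_\alpha(\sqrt{|g|}\,W^\alpha)$, this contribution is a coordinate divergence.

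\emph{Step 4: remove the divergence term.} By the divergence theorem on the closed manifold $\Sigma(t)$, the divergence term integrates to zero. This is the main obstacle: it requires gluing the charts consistently. The term $\int_\Sigma \nabla_\alpha(fV^\alpha)\,d\Sigma$ is a global invariant integral of a divergence, so it vanishes on a closed manifold by Stokes' theorem. Here the partition-of-unity pieces sum to $fV^\alpha$, so no boundary contributions remain.

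It also has to be justified that $f$ is a genuine scalar and that $V^\alpha$ transforms as a surface vector under time-independent coordinate changes, so that the chartwise computations agree. In the pseudo-Riemannian case $\sqrt{|g|}$ is used throughout, and the divergence identity still holds there. What remains is exactly \eqref{eq:CMS-transport}.
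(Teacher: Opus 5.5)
Your argument is correct. It is the standard proof of this identity: pull back to a fixed parameter manifold, use Jacobi's formula with \eqref{eq:dt-metric} to get $\partial_t\sqrt{|g|}=\sqrt{|g|}\,(\nabla_\alpha V^\alpha-CB_\alpha{}^{\alpha})$, replace $\partial_t f$ by $\dot\nabla f+V^\alpha\nabla_\alpha f$, and discard $\int_{\Sigma(t)}\nabla_\alpha(fV^\alpha)\,d\Sigma$ by the divergence theorem.

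The paper itself gives no proof of Theorem~\ref{thm:CMS-transport}; it defers to \cite{GrinfeldP2013} and the earlier moving-manifold papers. So there is no competing route to compare.

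One caution concerns the setting rather than your steps. Your remark that the pseudo-Riemannian case goes through needs $g\neq 0$ on all of $\Sigma(t)$, and in the paper's Minkowski ambient no closed hypersurface satisfies this. A compact hypersurface has a spacelike tangent plane at a maximum of $X^0$ and a Lorentzian one at a maximum of $X^1$, so $g$ must vanish somewhere. Consequently the identity, both your version and the paper's, is substantive only for a Euclidean ambient space or for nondegenerate patches with boundary. In the latter case the divergence term survives as the contour term of Remark~\ref{int_open}.
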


\begin{remark}\label{int_open}
For an open hypersurface with boundary $\partial \Sigma(t)$, an additional contour term appears:
\[\frac{d}{dt}\int_{\Sigma(t)} f d\Sigma
=
\int_{\Sigma(t)}
\bigl(
\dot{\nabla} f
-
C B_\alpha{}^\alpha f
\bigr) d\Sigma
+
\int_{\partial \Sigma(t)} f V^\perp d\ell 
\]
where $V^\perp$ is the normal component of the boundary motion.
In this paper, we use Theorem~\ref{thm:CMS-transport} in its closed-manifold form, because all global conservation statements later rely on it.
\end{remark}

\begin{remark}
Equation~\eqref{eq:CMS-transport} is the invariant integration theorem of MM calculus.
It holds for any smooth evolution of $\Sigma(t)$ and any scalar field $f$, and is
independent of parametrization, tangential reparametrizations, or ambient
Lorentz transformations.
\end{remark}

\begin{remark}[Incompressibility]
The condition \(C=0\) implies volume conservation (incompressibility) of the
moving manifold.  In this paper, we focus on the special case \(C=0\) for clarity
and simplicity of analysis. All geometric identities and balance laws extend
straightforwardly to the general case \(C\neq 0\).
\end{remark}

\begin{definition}[Conserved scalar density]\label{def:scalar-density}
Let $\rho(\xi,t)$ be a scalar density on $\Sigma(t)$.
A conserved scalar density is one whose total integral over the closed
manifold is preserved in time, i.e.
\[
\frac{d}{dt}\int_{\Sigma(t)}\rho d\Sigma=0.
\]
The scalar field $\rho(\xi,t)$ does not necessarily have to be a mass density. It may alternatively denote probability density, concentration, or any other conserved scalar field.
\end{definition}

\begin{theorem}[Conservation]\label{thm:CMS-balance} 
Let $\rho(\xi,t)$ be a scalar density on the moving manifold $\Sigma(t)$.
Allowing for a tangential flux $J^\alpha$ and a source term $\Pi$, the local
MM balance law is
\begin{equation}
\dot{\nabla}\rho
+
\nabla_\alpha(\rho V^\alpha)
-
C B_\alpha{}^{\alpha}\rho
+
\nabla_\alpha J^\alpha
=
\Pi.
\label{eq:CMS-balance-general}
\end{equation}
For a closed surface with $\Pi=0$ and $J^\alpha=0$,
integration of \eqref{eq:CMS-balance-general} together with the
Fundamental Theorem of MM calculus~\ref{thm:CMS-transport}
yields exact conservation of the total density:
\[
\frac{d}{dt}\int_{\Sigma(t)} \rho d\Sigma = 0.
\]
\end{theorem}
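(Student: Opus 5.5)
The proof has two parts. First I justify the local law \eqref{eq:CMS-balance-general} as the differential form of a balance statement. Then I integrate it over the closed manifold $\Sigma(t)$ and use Theorem~\ref{thm:CMS-transport} to get the global conservation statement. I treat \eqref{eq:CMS-balance-general} as the localization of the integral balance
\[
\frac{d}{dt}\int_{\Omega(t)}\rho\,d\Sigma=-\int_{\partial\Omega(t)}\bigl(J^\alpha+\rho V^\alpha\bigr)n_\alpha\,d\ell+\int_{\Omega(t)}\Pi\,d\Sigma
\]
for an arbitrary material patch $\Omega(t)\subset\Sigma(t)$.

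\textbf{Local law.} I apply the open-hypersurface form of the integration theorem (Remark~\ref{int_open}) to the left-hand side. I then convert the boundary fluxes into bulk divergences with the intrinsic Gauss theorem on $\Sigma(t)$, which uses the Levi--Civita connection of Definition~\ref{def:connection} and the identity $\sqrt{|g|}\,\nabla_\alpha W^\alpha=\partial_\alpha(\sqrt{|g|}\,W^\alpha)$. Since $\Omega(t)$ is arbitrary and every integrand is smooth, the integrands must agree pointwise, and this yields \eqref{eq:CMS-balance-general}. As a consistency check, Definition~\ref{invariant derivative} gives
\[
\dot{\nabla}\rho+\nabla_\alpha(\rho V^\alpha)=\partial_t\rho+\rho\,\nabla_\alpha V^\alpha .
\]
The right-hand side is the familiar Eulerian form in which $\nabla_\alpha V^\alpha$ accounts for the tangential dilation of the area element. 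This is consistent with the trace of \eqref{eq:dt-metric}, namely $\tfrac12 g^{\alpha\beta}\partial_t g_{\alpha\beta}=\nabla_\alpha V^\alpha-CB_\alpha{}^\alpha$.

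\textbf{Global conservation.} Set $\Pi=0$ and $J^\alpha=0$, and integrate \eqref{eq:CMS-balance-general} over the closed $\Sigma(t)$. By Theorem~\ref{thm:CMS-transport}, the terms $\int(\dot{\nabla}\rho-CB_\alpha{}^\alpha\rho)\,d\Sigma$ combine exactly into $\frac{d}{dt}\int_{\Sigma(t)}\rho\,d\Sigma$. The remaining term $\int_{\Sigma(t)}\nabla_\alpha(\rho V^\alpha)\,d\Sigma$ is the integral of a divergence over a manifold without boundary, so it vanishes by the intrinsic divergence theorem. Had $J^\alpha$ been retained, its divergence term would vanish for the same reason. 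What remains is $\frac{d}{dt}\int_{\Sigma(t)}\rho\,d\Sigma=\int_{\Sigma(t)}\Pi\,d\Sigma=0$.

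\textbf{Main obstacle.} I expect the only delicate step to be the divergence theorem when the induced metric is pseudo-Riemannian, as it can be in the Minkowski ambient of Definition~\ref{def:normal}. I would handle it by writing the volume form as $\sqrt{|\det g|}\,d^{d+1}\xi$. On a closed manifold the integral of $\partial_\alpha(\sqrt{|g|}\,\rho V^\alpha)$ vanishes by Stokes' theorem applied chart by chart with a partition of unity, and this argument is independent of the signature. A secondary point is bookkeeping in the local step: I must check that the tangential convection $\nabla_\alpha(\rho V^\alpha)$ is not counted twice against the $-V^\alpha\nabla_\alpha$ already contained in $\dot{\nabla}$. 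The identity displayed above settles this.
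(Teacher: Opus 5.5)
\textbf{What is correct.} Your global step is sound, and it is the argument the statement itself describes; the paper gives no proof beyond a citation. On the closed $\Sigma(t)$, Theorem~\ref{thm:CMS-transport} turns $\int_{\Sigma(t)}(\dot{\nabla}\rho-CB_\alpha{}^{\alpha}\rho)\,d\Sigma$ into $\frac{d}{dt}\int_{\Sigma(t)}\rho\,d\Sigma$, and both divergence terms integrate to zero. As you note, only $\int_{\Sigma(t)}\Pi\,d\Sigma=0$ is actually needed. Your handling of the divergence theorem via $\sqrt{|\det g|}$ for a pseudo-Riemannian induced metric is also fine.

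\textbf{The gap: the local law counts convection twice.} Take $\Omega(t)$ to be a material patch. Its boundary then moves tangentially with $V^\alpha$, so in Remark~\ref{int_open} you have $V^\perp=V^\alpha n_\alpha$. The boundary term $\int_{\partial\Omega(t)}\rho\,V^\alpha n_\alpha\,d\ell$ therefore already supplies $\int_{\Omega(t)}\nabla_\alpha(\rho V^\alpha)\,d\Sigma$ on the left-hand side. Relative to a material boundary there is no convective flux, so only $J^\alpha$ belongs on the right. Keeping $\rho V^\alpha$ there as well gives
\[
\dot{\nabla}\rho+2\nabla_\alpha(\rho V^\alpha)-CB_\alpha{}^{\alpha}\rho+\nabla_\alpha J^\alpha=\Pi ,
\]
which is not the stated law.

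A concrete check: take a flat sheet sliding rigidly in its own plane, with Lagrangian coordinates, $\rho=\rho_0(\xi)$, $J^\alpha=0$ and $\Pi=0$. Every material patch keeps its mass. Your right-hand side, $-\int_{\partial\Omega}\rho V^\alpha n_\alpha\,d\ell$, is nonzero whenever $\rho_0$ is nonuniform.

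Your displayed identity $\dot{\nabla}\rho+\nabla_\alpha(\rho V^\alpha)=\partial_t\rho+\rho\nabla_\alpha V^\alpha$ cannot detect this. It is an algebraic identity and holds whatever coefficient your derivation produces. The real double-counting risk sits between the boundary term of Remark~\ref{int_open} and your convective flux, not in the $-V^\alpha\nabla_\alpha$ inside $\dot{\nabla}$.

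\textbf{How to repair it.} Any one of these works:
\begin{itemize}
\item Drop $\rho V^\alpha$ from the flux and keep the material patch.
\item Keep $\rho V^\alpha$ in the flux, but use a patch whose boundary follows only the normal motion, so that $V^\perp=0$.
\item Localize in coordinates directly. The trace of \eqref{eq:dt-metric} gives $\partial_t\sqrt{|g|}=\sqrt{|g|}\,(\nabla_\alpha V^\alpha-CB_\alpha{}^{\alpha})$. Hence \eqref{eq:CMS-balance-general} is pointwise equivalent to $\partial_t\bigl(\rho\sqrt{|g|}\bigr)=\sqrt{|g|}\,(\Pi-\nabla_\alpha J^\alpha)$. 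This states that the content of each coordinate cell changes only through the diffusive flux and the source, and it is the consistency check you actually wanted.
\end{itemize}
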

\begin{proof}
Proof trivially follows from the same procedures as demonstrated in \cite{Svintradze2025P}, so we don't repeat it here.
\end{proof}

\begin{remark}
Theorem~\ref{thm:CMS-balance} represents the fundamental conservation law of MM calculus and is employed throughout this manuscript. It simplifies to classical conservation laws when $C=0$ and $V^\alpha=0$, yet in general, it encompasses both tangential transport and geometric production or attenuation via the mean curvature term $\rho C B_\alpha {}^\alpha $.
\end{remark}

\subsection{Shape dynamics and the geometric momentum law}

The evolution of a moving hypersurface $\Sigma(t)$ follows from a geometric action that couples surface kinetic energy to the volumetric pressure field. The action in the context of the moving manifold is taken over the manifold itself. This is because time is the intrinsic evolution parameter of the geometric flow, and it is not considered proper time in general. Therefore, the action $\mathcal{A}$ is defined as
\begin{equation}
\mathcal{A}
=
\int_{\Sigma(t)}
\frac{\rho}{2} V^2 d\Sigma
-
\int_{\Omega(t)} P d\Omega,
\label{eq:geom-action}
\end{equation}
Where $\Omega(t)$ represents the region enclosed by $\Sigma(t)$, $\rho$ denotes the surface mass density, and $V^2 = G_{AB}V^A V^B$ signifies the ambient kinetic energy density. Stationarity of the action,
\[
\delta\mathcal{A}=\frac{d\mathcal{A}}{dt}=0,
\]
together with the MM Theorems~\ref{lem:metric-invariance}--\ref{thm:CMS-balance}, the Gauss--Weingarten relations Lemma~\ref{lem:GW} and the invariant time derivative Defitnition~\ref{invariant derivative} and Lemma~\ref{def:dot-nabla}, yields the full MM shape--dynamics system \cite{Svintradze2017,Svintradze2018}.  We summarize these equations of motion in the following theorem.

\begin{theorem}[Equations of motion for moving manifolds]\label{thm:CMS-EOM}
Let $\Sigma(t)$ be a smooth, closed manifold with conserved scalar density field $\rho(\xi,t)$ with no external fluxes, tangential velocity $V^\alpha$, and normal velocity $C$ in a flat Minkowski ambient space. Let $P$ denote the volumetric pressure field in the enclosed region $\Omega(t)$ and $\sigma$ the hypersurface energy density (tension) on $\Sigma(t)$. Then, the MM equations of motions read
\begin{align}
&\dot{\nabla}\rho + \nabla_\alpha(\rho V^\alpha)
= \rho C B_\alpha{}^{\alpha},
\label{eq:CMS-mass-M}
\\
&\partial_A \Bigl[
V^A\Bigl(
\rho\bigl(\dot{\nabla} C + 2 V^\alpha\nabla_\alpha C + V^\alpha V^\beta B_{\alpha\beta}\bigr)
- \partial_t\sigma - P + \sigma B_\alpha{}^{\alpha}
\Bigr)
\Bigr]
= V^A \partial_A P,
\label{eq:CMS-normal-M}
\\
&\rho\Bigl(
\dot{\nabla}V_\alpha + V^\beta\nabla_\beta V_\alpha
- C\nabla_\alpha C - C V^\beta B_{\alpha\beta}
\Bigr)
= -\nabla_\alpha\sigma,
\label{eq:CMS-tangent-M}
\end{align}
where indices are raised and lowered with the induced metric $g_{\alpha\beta}$, and $B_{\alpha\beta}$ is the second fundamental form of $\Sigma(t)$. The first \eqref{eq:CMS-mass-M} represents the extended continuity equation; the second \eqref{eq:CMS-normal-M} describes the evolution of the manifold in the normal direction; and the third pertains to the evolution of the manifold in the tangent direction \eqref{eq:CMS-tangent-M}.
\end{theorem}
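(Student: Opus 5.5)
The plan is to derive all three equations from stationarity of the action \eqref{eq:geom-action}, differentiating each term with Theorem~\ref{thm:CMS-transport}. For the volume term I use the standard moving-domain (Reynolds) formula: the rate of change of $\int_{\Omega(t)}P\,d\Omega$ is the integral of $\partial_t P$ plus a boundary flux of $P$ carried by the normal velocity $C$. The kinetic term is handled in three steps.

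\textbf{Step 1 (continuity).} Set $f=\rho$ in Theorem~\ref{thm:CMS-transport} and impose conservation of $\rho$ as in Definition~\ref{def:scalar-density}. Localising with Theorem~\ref{thm:CMS-balance}, with $J^\alpha=0$ and $\Pi=0$, gives \eqref{eq:CMS-mass-M} directly.

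\textbf{Step 2 (kinetic-energy variation).} Take $f=\tfrac{\rho}{2}V^2$. Writing $V^2=C^2+V_\alpha V^\alpha$ via \eqref{eq:vel-decomp}, I need $\dot\nabla V^A$. The idea is to differentiate $V^A=CN^A+V^\alpha Z^A{}_\alpha$ using the standard MM derivatives of the basis,
\[
\dot\nabla N^A=-Z^A{}_\alpha\nabla^\alpha C,
\qquad
\dot\nabla Z^A{}_\alpha=N^A\nabla_\alpha C .
\]
These follow from Lemma~\ref{lem:GW}, Lemma~\ref{def:dot-nabla} and Theorem~\ref{lem:metric-invariance}. They split the acceleration into components:
\begin{itemize}
\item normal component: $\dot\nabla C+2V^\alpha\nabla_\alpha C+V^\alpha V^\beta B_{\alpha\beta}$;
\item tangential component: $\dot\nabla V_\alpha+V^\beta\nabla_\beta V_\alpha-C\nabla_\alpha C-CV^\beta B_{\alpha\beta}$.
\end{itemize}
Here the convective pieces $V^\beta\nabla_\beta$ are obtained by converting $\partial_t$ into $\dot\nabla$. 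The mass term produced by Step 1 cancels the $-CB_\alpha{}^\alpha f$ contribution together with the divergence, after integration by parts on the closed manifold, so that only $\rho V_A\,(\text{acceleration})^A$ survives.

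\textbf{Step 3 (potential sector and splitting).} The surface energy $\sigma$ enters as the hypersurface part of the potential. Its variation follows from Theorem~\ref{thm:CMS-transport} applied to $\sigma$:
\begin{itemize}
\item a normal contribution $\partial_t\sigma-\sigma CB_\alpha{}^\alpha$;
\item a tangential gradient $\nabla_\alpha\sigma$, after integration by parts of $\nabla_\alpha(\sigma V^\alpha)$.
\end{itemize}
Because $C$ and $V^\alpha$ can be varied independently, I then separate the integrand of $d\mathcal A/dt=0$ into its $V^\alpha$ and $C$ components. The tangential part gives \eqref{eq:CMS-tangent-M} directly. The normal part, combined with the pressure contribution on $\Sigma=\partial\Omega$, gives a scalar bracket multiplying $C$.

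\textbf{Main obstacle.} I expect the hardest step to be recasting this normal equation in the ambient divergence form \eqref{eq:CMS-normal-M}. The pressure integral lives in the bulk $\Omega(t)$, so the boundary term must be converted back to a volume integral via the ambient divergence theorem, with $V^A$ extended off $\Sigma$ into $\Omega$. That conversion produces $\partial_A[V^A(\cdots)]$, and the term $V^A\partial_A P$ comes from the volume time derivative of $P$ along the flow.

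I would first require the bulk integrand to vanish pointwise. The risk is that this is not forced: stationarity only forces the integral to vanish. I would therefore state the result as the natural local form, as the paper does, and check signs in the Minkowski signature via $G_{AB}N^AN^B=+1$.
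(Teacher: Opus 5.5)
Your route is essentially the paper's. The paper states that the system follows from stationarity $\delta\mathcal{A}=d\mathcal{A}/dt=0$ of \eqref{eq:geom-action}, combined with the transport and balance theorems, the Gauss--Weingarten relations and the invariant time derivative, and it defers the actual computation to its earlier works; your split of $\dot\nabla V^A+V^\beta\nabla_\beta V^A$ reproduces exactly the normal and tangential brackets of \eqref{eq:CMS-normal-M}--\eqref{eq:CMS-tangent-M}, and your surface-energy term (entering $\mathcal{A}$ with the same sign as $P$) supplies, with the right signs, the $\sigma$-dependence that the displayed action leaves implicit. The localization you flag as not forced is not carried out in the paper either, nor is the placement of the terms with no factor of $C$ (the $\partial_t\sigma$ from your Step~3, which therefore cannot literally sit in the bracket multiplying $C$, and the bulk $\partial_t P$) into the form $\partial_A[V^A(\cdots)]=V^A\partial_A P$, so your proposal matches the paper's proof in method and in level of rigor.
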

Note that \eqref{eq:CMS-mass-M} differs from \eqref{eq:CMS-balance-general} in the absence of flux and source terms. Those terms add only to the external scalar fields in \eqref{eq:CMS-normal-M} and \eqref{eq:CMS-tangent-M}, but do not alter the general structure. A key geometric simplification in Theorem~\ref{thm:CMS-EOM} occurs when the normal velocity vanishes, $C = 0$. This corresponds to incompressible (volume-preserving) evolution. We focus on this regime because it isolates the geometric-momentum balance. When $C=0$, the normal momentum equation is automatically satisfied in equilibrium, and the tangential momentum balance decouples from curvature evolution. In this limit, the absence of external forces and the inclusion of the classical Laplace pressure inside $P$ leads to a purely geometric surface-momentum law.

\begin{remark}[On the meaning of the density field $\rho$]
Throughout the paper, the symbol $\rho$ denotes a generic scalar surface 
density in the equations of manifold dynamics.  
In the variational derivation of the equations of motion, $\rho$ serves as the 
surface mass density and generates the quadratic term 
$\rho V^2/2$ in the geometric action.  
In later sections (for example, in the stochastic and Fokker--Planck 
formulation), the same symbol denotes a probability or concentration field.  
This causes no ambiguity: the MM identities act on $\rho$ 
purely as a scalar density, and all geometric operators handle physical mass 
density and abstract scalar density in exactly the same way.  
No physical assumption is tied to $\rho$ beyond its role as a generic scalar 
density field on the manifold.
\end{remark}

\begin{corollary}[Momentum law for moving manifolds]\label{cor:CMSmomentum}
Let $\Sigma(t)$ be an evolving hypersurface in the kinematic, volume-preserving regime $C=0$,
with density field $\rho$ and tangential velocity $V^\alpha$.
In the force-free case where $\sigma B_\alpha{}^{\alpha},\, \partial_t\sigma\in P$, so that $P$ contains the Laplace pressure contribution
(including $\sigma B_\alpha{}^{\alpha}$) and time varieng manifold energy density $\partial_t\sigma$, the normal MM \eqref{eq:CMS-normal-M} momentum balance admits the geometric solution \cite{Svintradze2023}
\begin{equation}
P_{*}
=
\rho V^\alpha V^\beta B_{\alpha\beta}, \quad P_{*}=P - \sigma B_\alpha{}^{\alpha} + \partial_t\sigma ,
\label{eq:CMS-pressure-momentum}
\end{equation}
where $P_{*}$ denotes the effective volumetric pressure, i.e., the physical pressure plus all curvature-induced surface terms and ad hoc flux or source terms, coming from \eqref{eq:CMS-balance-general} that might be needed due to the specificity of problems. We mainly work with $P_{*}$ and call it $P$ because it is implicitly compact, but we return to physical expressions when the mean curvature term becomes unavoidable.  
\end{corollary}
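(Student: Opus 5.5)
The plan is to specialize the normal equation of motion \eqref{eq:CMS-normal-M} of Theorem~\ref{thm:CMS-EOM} to the regime $C=0$ and then exhibit $P_*=\rho V^\alpha V^\beta B_{\alpha\beta}$ as an explicit solution. With $C=0$ we have $\dot\nabla C=0$ and $\nabla_\alpha C=0$, so the inertial bracket collapses to $\rho V^\alpha V^\beta B_{\alpha\beta}$. Grouping the remaining scalar terms, $-\partial_t\sigma-P+\sigma B_\alpha{}^\alpha=-P_*$, with $P_*$ defined as in \eqref{eq:CMS-pressure-momentum}. The normal balance then takes the compact form
\begin{equation*}
\partial_A\bigl[V^A\bigl(\rho V^\alpha V^\beta B_{\alpha\beta}-P_*\bigr)\bigr]=V^A\partial_A P .
\end{equation*}

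Next I will use the force-free hypothesis. In the paper's convention the surface terms $\sigma B_\alpha{}^\alpha$ and $\partial_t\sigma$ are absorbed into the pressure, so $P$ on the right-hand side is effectively $P_*$. Expanding the left-hand side by the product rule gives
\begin{equation*}
(\partial_A V^A)\,Q+V^A\partial_A\bigl(\rho V^\alpha V^\beta B_{\alpha\beta}\bigr)-V^A\partial_AP_*=V^A\partial_AP_*,
\end{equation*}
where $Q:=\rho V^\alpha V^\beta B_{\alpha\beta}-P_*$. Substituting the candidate $P_*=\rho V^\alpha V^\beta B_{\alpha\beta}$ makes $Q=0$ identically, so the first term drops out. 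What remains is $V^A\partial_AP_*-V^A\partial_AP_*=V^A\partial_AP_*$, which leaves the residual condition $V^A\partial_AP_*=0$ along the motion.

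The main obstacle is closing this residual condition, which requires stationarity of $P_*$ along the flow. I would first try the equilibrium or kinematic interpretation the paper invokes: for $C=0$ the normal momentum equation is ``automatically satisfied in equilibrium''. On that reading, $P_*$ is advected-constant, with its material derivative along $V^A$ vanishing, and this is consistent with the continuity law \eqref{eq:CMS-mass-M} at $C=0$ and with $\dot\nabla B_{\alpha\beta}=0$ from Theorem~\ref{Theo:CE}.

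If that fails, the fallback is to read the claim as in \cite{Svintradze2023}. There the right-hand pressure gradient is balanced by the divergence term, and the solution is identified through $\partial_A(V^A Q)=0$ with $Q=0$ as the natural root. I would then state explicitly that \eqref{eq:CMS-pressure-momentum} is \emph{a} solution, not the unique one, which matches the corollary's wording ``admits''.
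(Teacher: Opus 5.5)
Your plan is essentially the paper's: the paper gives nothing beyond setting $C=0$ in \eqref{eq:CMS-normal-M}, absorbing $\sigma B_\alpha{}^{\alpha}$ and $\partial_t\sigma$ into the pressure, and taking (via \cite{Svintradze2023}) the root $Q=0$ in the force-free/equilibrium case, so the residual condition $V^A\partial_A P=0$ you isolated is exactly what that hypothesis has to supply. State that condition as an assumption rather than presenting it as supported by the $C=0$ continuity law or by $\dot\nabla B_{\alpha\beta}=0$, since neither of those constrains $V^\alpha\nabla_\alpha(\rho V^\beta V^\gamma B_{\beta\gamma})$.
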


\begin{remark}[On classical curvature laws]
The geometric pressure identity \eqref{eq:CMS-pressure-momentum} directly generates the classical Young-Laplace relation, the Gibbs-Thomson correction, and the generalized Ostwald ripening law as
corollaries of the $C=0$ momentum balance. These results are presented in the following sections as needed for the consequent advances. 
\end{remark}

\begin{remark}
Equation~\eqref{eq:CMS-pressure-momentum} is entirely geometric:
it requires no assumption of constant density, no additional
incompressibility constraint beyond $C=0$, and no external forcing.
It follows directly from the moving-manifold equations of motion
in Theorem~\ref{thm:CMS-EOM} in the $C=0$ regime.
\end{remark}

\begin{remark}[On the $C=0$ approximation in the curvature--kinetic stress]
The curvature--weighted stress term
\(
P \sim \rho\,V^\alpha V^\beta B_{\alpha\beta}
\)
is obtained most transparently in the $C=0$ regime, where the motion is purely
tangential and the enclosed volume is preserved.  However, this restriction is
not essential.  In the general moving--manifold setting, the normal speed $C$
enters the kinetic variation only through subleading mixed terms involving $C$,
$V^\alpha$ and $B_{\alpha\beta}$.  Whenever the normal motion is weak compared
to tangential deformations, in the sense that
\[
|C| \ll |V| ,
\]
the $C$--dependent contributions to the stress are higher order in the small
parameter $|C|/|V|$ and may be neglected to leading order.  Thus, the
curvature--kinetic invariant
\(
\rho V^\alpha V^\beta B_{\alpha\beta}
\)
remains the dominant contribution even for $C\neq 0$, provided the interface
evolves in a regime of weak normal motion.  In this sense, the $C=0$ formula
captures a robust geometric leading order, not a singular special case.
\end{remark}

Corollary~\ref{cor:CMSmomentum}, which we also refer to as the generalized Young-Laplace law \cite{Svintradze2023}, is the starting point for the curvature--noise correspondence in the following sections, where $V^\alpha$ becomes a stochastic field and the curvature tensor $B_{\alpha\beta}$ encodes the inverse noise covariance.

\section{Geometry of Stochastic Dynamics}
%\section{Curvature–Noise Duality and the Geometric Structure of Stochastic Dynamics}

\subsection{From Langevin Noise to Geometry}

Classical stochastic physics introduces randomness externally through prescribed noise terms, Langevin forces, or probabilistic rules imposed on a fixed geometric background. Conversely, the Moving Manifolds (MM) calculus proposes that stochastic behavior may be intrinsic: fluctuations can originate from the geometry of the evolving manifold itself rather than from an external source. To substantiate this perspective, examine the potential–free Langevin equation. When a particle is not subjected to deterministic forcing, inertia counteracts the noise, resulting in the disappearance of the acceleration term and rendering the instantaneous velocity proportional to the stochastic driving field. In these regimes, the velocity is not derived from dynamics but is indeed the noise, i.e.

\begin{proposition}[Overdamped potential--free Langevin reduction]\label{Lr}
Consider the tangential Langevin dynamics of a tracer on the evolving
manifold $\Sigma(t)$,
\[
m \partial_t V^\alpha = -\gamma V^\alpha + \kappa \eta^\alpha(t) ,
\]
particle with a mass $m$, with friction $\gamma>0$, coupling constant $\kappa>0$, and a
zero--mean, time-depended, stochastic field $\eta^\alpha(t)$ of finite covariance. For clarity, we are dropping $t$ and consider $\eta^\alpha(t)=\eta^\alpha$ everywhere. In the potential-free overdamped regime relevant for surface dynamics,
\(
m\to 0 ,
\)
so the inertial term vanishes, and the balance reduces to
\[
\gamma V^\alpha = \kappa \eta^\alpha .
\]
\end{proposition}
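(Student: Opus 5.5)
The plan is to treat the statement as a singular-perturbation (Smoluchowski-type) limit of a linear first-order ODE. I would solve it exactly before passing to $m\to 0$. For fixed $\xi$ and a given realization of $\eta^\alpha$, the equation $m\,\partial_t V^\alpha=-\gamma V^\alpha+\kappa\eta^\alpha$ is linear with constant coefficients. Variation of constants gives
\[
V^\alpha(t)=e^{-\gamma t/m}V^\alpha(0)+\frac{\kappa}{m}\int_0^t e^{-\gamma(t-s)/m}\,\eta^\alpha(s)\,ds .
\]
Since the statement concerns tangential components on $\Sigma(t)$, I would first note a simplification. We work at fixed intrinsic coordinates and, for the reduction, ignore the Grinfeld connection terms of Lemma~\ref{def:dot-nabla}. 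These are $O(V)$ corrections multiplied by $m$, so they vanish in the limit together with the inertial term. Alternatively, one can use the metric compatibility of Theorem~\ref{lem:metric-invariance} to justify working componentwise in a frame transported by $\dot\nabla$.

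Next, I would take $m\to 0$ with $\gamma,\kappa$ fixed. The homogeneous term $e^{-\gamma t/m}V^\alpha(0)$ decays for every $t>0$; this is an initial layer of width $m/\gamma$. The kernel $\frac{\gamma}{m}e^{-\gamma(t-s)/m}\mathbf{1}_{s<t}$ is an approximate identity of unit mass. Hence $\frac{\kappa}{m}\int_0^t e^{-\gamma(t-s)/m}\eta^\alpha(s)\,ds\to\frac{\kappa}{\gamma}\eta^\alpha(t)$, in the appropriate sense, for $t>0$. This gives $\gamma V^\alpha=\kappa\eta^\alpha$.

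The main obstacle is making the convergence precise for a genuinely stochastic $\eta^\alpha$. The proposition assumes only zero mean and finite covariance. If $\eta^\alpha$ has continuous sample paths, or is continuous in mean square, the approximate-identity argument gives pointwise (respectively $L^2$) convergence. The mean is preserved exactly, and the covariance of $V^\alpha$ converges to $(\kappa/\gamma)^2$ times that of $\eta^\alpha$. If $\eta^\alpha$ is white, $V^\alpha$ is not a function. In that case I would instead state the convergence in the distributional sense, or for the integrated velocity $\int_0^t V^\alpha\,ds\to\frac{\kappa}{\gamma}\int_0^t\eta^\alpha\,ds$. This is the standard Smoluchowski--Kramers limit and follows from integrating the equation: $m\,(V^\alpha(t)-V^\alpha(0))\to 0$ when $V^\alpha$ stays bounded in $L^2$, which holds because its stationary variance is $O(\kappa^2/(m\gamma))$ times $m$ after multiplication. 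I would present the integrated form as the rigorous content and read the proposition's identity $\gamma V^\alpha=\kappa\eta^\alpha$ as its formal pointwise version.
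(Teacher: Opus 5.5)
Your argument is correct, and it takes a genuinely different and more careful route than the paper. The paper gives no separate proof. Its justification is the one built into the statement: as $m\to0$ the term $m\,\partial_tV^\alpha$ is simply discarded, which tacitly assumes $\partial_tV^\alpha$ stays bounded.

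You instead solve the linear equation by variation of constants and pass to the limit in the explicit solution, using $\frac{\gamma}{m}e^{-\gamma(t-s)/m}\mathbf{1}_{s<t}$ as an approximate identity. This actually justifies the formal step, which is not automatic. At $t=0$ one has $m\,\partial_tV^\alpha=-\gamma V^\alpha(0)+\kappa\eta^\alpha(0)$, independently of $m$. So $\gamma V^\alpha=\kappa\eta^\alpha$ can only hold for $t>0$, after an initial layer of width $m/\gamma$, and only in a specified sense. For white noise, $\partial_tV^\alpha$ does not even exist.

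Your route also shows that the second-moment identity $\langle V_\alpha V_\beta\rangle=(\kappa/\gamma)^2T_{\alpha\beta}$ is a genuine limit under the stated finite-covariance hypothesis. This identity is what Lemma~\ref{lem:curv-noise} actually uses. It holds for each $t>0$ under mean-square continuity. Even without continuity it holds in $L^2(\Omega\times(0,T))$, since convolution with an approximate identity converges in $L^2$. Because finite covariance already excludes white noise, your Smoluchowski--Kramers branch is optional. The paper's one-line reduction buys brevity, at the cost of leaving the sense of the limit unspecified.

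There are two small corrections. First, in the white-noise branch, $V^\alpha$ does \emph{not} stay bounded in $L^2$. For unit-intensity noise and deterministic $V^\alpha(0)$, its variance is $\frac{\kappa^2}{2m\gamma}(1-e^{-2\gamma t/m})$, which diverges as $m\to0$. What you need, and what is true, is $\mathbb{E}|mV^\alpha(t)|^2=m^2e^{-2\gamma t/m}|V^\alpha(0)|^2+\frac{m\kappa^2}{2\gamma}(1-e^{-2\gamma t/m})\to0$. This gives $m(V^\alpha(t)-V^\alpha(0))\to0$ in $L^2$, and hence $\int_0^tV^\alpha\,ds\to\frac{\kappa}{\gamma}\int_0^t\eta^\alpha\,ds$.

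Second, the discussion of Grinfeld-connection terms is unnecessary, since the statement is written with $\partial_t$. If you did switch to $\dot\nabla$, the advective piece $mV^\gamma\nabla_\gamma V^\alpha$ is quadratic rather than $O(V)$. It is therefore not obviously negligible when $V^\alpha=O(m^{-1/2})$.
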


\begin{definition}[Geometric noise identification]\label{def:V-eta}
Under the assumptions of the Proposition~\ref{Lr}, the tangential velocity
is proportional to the stochastic field, and we adopt the geometric identification
\(
V^\alpha \sim \eta^\alpha ,
\)
which serves as the definition of stochastic forcing in the MM framework.
\end{definition}

\begin{remark}[Geometry as the origin of noise]
From the MM perspective, the Langevin equation is not considered fundamental. The stochastic field $\eta^\alpha$ is the coarse-grained depiction of unresolved geometric fluctuations of the moving manifold $\Sigma(t)$. Consequently, the randomness is produced by geometry: the evolving shape and its curvature generate the effective noise field, and the equivalence $V^\alpha \sim \eta^\alpha$ expresses that tangential motion is the macroscopic manifestation of these geometric fluctuations.
\end{remark}

Once $V^\alpha$ is identified with the stochastic field, the organization of fluctuations must follow from geometry rather than probabilistic axioms. Within MM, the natural geometric object is paired with the second moment $\langle V_\alpha V_\beta\rangle$, which is the inverse of the second fundamental form $B_{\alpha\beta}$, measuring how the manifold curves in the ambient space. Consequently, at a conceptual level, curvature governs stochasticity: flat regions permit large fluctuations, while highly curved regions restrict them. This underpins the fundamental principle of this section: the covariance of the stochastic field is governed by the inverse of the curvature tensor, whereby $B_{\alpha\beta}$ encodes the organization of fluctuations on the manifold, and its inverse determines the noise covariance. The precise articulation of this curvature-noise duality is derived from the momentum balance equation Corollary~\ref{cor:CMSmomentum} and will be substantiated in the subsequent lemma. This discussion provides the conceptual foundation for substituting Itô-Stratonovich calculus with a geometric framework in which stochastic dynamics emerge directly from the evolving geometry of the manifold.

\begin{lemma}[Curvature-noise duality]\label{lem:curv-noise}
Let $\Sigma(t)$ be a moving manifold in the kinematic regime $C=0$, with density field $\rho$ and curvature tensor $B_{\alpha\beta}$. Then,
\begin{equation}\label{eq:B-inverse-T}
B_{\alpha\beta} \propto \big(T^{-1}\big)_{\alpha\beta}
    = \big\langle \eta_\alpha \eta_\beta \big\rangle^{-1}, \quad T_{\alpha\beta} = \big\langle \eta_\alpha \eta_\beta \big\rangle
\end{equation}
Thus, up to a scalar prefactor, the curvature tensor is the inverse of
the noise covariance $T_{\alpha\beta}$.
\end{lemma}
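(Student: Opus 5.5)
The plan is to start from the geometric pressure identity of Corollary~\ref{cor:CMSmomentum}, $P_* = \rho V^\alpha V^\beta B_{\alpha\beta}$, valid in the regime $C=0$. I will substitute the noise identification of Proposition~\ref{Lr} and Definition~\ref{def:V-eta}, $V^\alpha = (\kappa/\gamma)\eta^\alpha$, so that
\[
P_* = \rho\,\frac{\kappa^2}{\gamma^2}\,\eta^\alpha\eta^\beta B_{\alpha\beta}.
\]
Because $\eta^\alpha$ is a stochastic field, this relation is only meaningful after coarse-graining. I will therefore take the ensemble average. I treat $\rho$ and $B_{\alpha\beta}$ as slowly varying relative to the noise, which is consistent with Theorem~\ref{Theo:CE}: for $C=0$ the curvature tensor is frozen along the flow, $\dot\nabla B_{\alpha\beta}=0$. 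Under this assumption the average becomes
\[
\langle P_*\rangle = \rho\frac{\kappa^2}{\gamma^2} B_{\alpha\beta}T^{\alpha\beta},
\qquad
T^{\alpha\beta}=\langle\eta^\alpha\eta^\beta\rangle .
\]

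The second step converts this scalar constraint into a tensor identity. The relation $\langle P_*\rangle = \rho(\kappa/\gamma)^2\,\mathrm{tr}(BT)$ alone cannot fix $B$ in terms of $T$, so an additional principle is needed. I will require that the mechanical equilibrium encoded in Corollary~\ref{cor:CMSmomentum} be isotropic in the tangent space. Concretely, each tangential direction should carry the same share of the effective pressure, which gives the mixed-tensor equipartition
\[
B_{\alpha\gamma}T^{\gamma\beta} = \frac{\langle P_*\rangle\gamma^2}{\rho\kappa^2 (d+1)}\,\delta_\alpha{}^\beta .
\]
Raising and lowering indices with $g_{\alpha\beta}$, which commutes with $\dot\nabla$ by Theorem~\ref{lem:metric-invariance}, yields $B_{\alpha\beta}=\lambda\,(T^{-1})_{\alpha\beta}$ with scalar $\lambda = \gamma^2\langle P_*\rangle/(\rho\kappa^2(d+1))$. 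This is exactly \eqref{eq:B-inverse-T}.

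An alternative route to the same tensor identity is a maximum-entropy argument. Among all zero-mean fluctuation distributions with a fixed average curvature--kinetic invariant $\langle \rho V^\alpha V^\beta B_{\alpha\beta}\rangle$, the maximizer is the Gaussian $\propto\exp(-\beta\,\rho V^\alpha V^\beta B_{\alpha\beta}/2)$. Its covariance is $\langle V_\alpha V_\beta\rangle = (\beta\rho)^{-1}(B^{-1})_{\alpha\beta}$, and inverting this gives the claim with an explicit prefactor.

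The main obstacle is the step from the scalar trace relation to the full tensor relation. The momentum law gives only one equation, and the proportionality requires an equipartition or maximum-entropy hypothesis that goes beyond Corollary~\ref{cor:CMSmomentum}. I would try the maximum-entropy Gaussian argument first, since it makes this extra assumption explicit and also produces the prefactor.

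A secondary issue is invertibility: $B_{\alpha\beta}$ must be nondegenerate, meaning the surface has no flat directions, and definite in sign for the Gaussian to be normalizable. I will state this as a standing assumption on the region considered. In flat directions the relation degenerates consistently, with the noise covariance diverging, which matches the remark that flat regions permit large fluctuations.
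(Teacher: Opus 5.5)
Your proposal is correct, and its first step is the paper's own: substitute $V^\alpha=(\kappa/\gamma)\eta^\alpha$ into $P_*=\rho V^\alpha V^\beta B_{\alpha\beta}$, average with $\rho$ and $B_{\alpha\beta}$ treated as deterministic, and obtain $P=\rho\kappa^2\gamma^{-2}\,\mathrm{tr}(BT)$. Your $\lambda$ also coincides with the paper's $\mu=P\gamma^2/(\rho\kappa^2(d+1))$. Two small points on that step. Your appeal to $\dot\nabla B_{\alpha\beta}=0$ at $C=0$ usefully justifies the determinism that the paper only asserts. Invoking metric invariance for the index step is unnecessary, since that step is plain matrix inversion.

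The only real difference is how you promote the scalar trace relation to a tensor identity. The paper postulates that $B$ depends only on $T$ and picks the ``simplest'' compatible form $B=\mu T^{-1}$. Your equipartition condition $B_{\alpha\gamma}T^{\gamma\beta}\propto\delta_\alpha{}^\beta$ is equivalent to that ansatz, so along that route the two proofs coincide. You are right that in both the step is an added hypothesis, not a consequence of the momentum law.

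Your maximum-entropy route is a genuinely different justification. It derives the tensor structure, and Gaussian statistics, from a variational principle with a single constraint on $\langle\rho V^\alpha V^\beta B_{\alpha\beta}\rangle$, rather than assuming the structure. Fixing the Lagrange multiplier by that constraint gives $\beta=(d+1)/\langle P_*\rangle$ and hence exactly the same $\mu$. The price is an explicit max-ent postulate and the requirement that $B_{\alpha\beta}$ be sign-definite for normalizability. That requirement costs nothing relative to the paper: $B=\mu T^{-1}$ with $T$ positive definite already forces $B$ to be definite and invertible. So the paper's lemma tacitly excludes saddle-shaped regions and flat directions too.
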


\begin{proof}
Substituting $V^\alpha \sim \eta^\alpha$ into \eqref{eq:CMS-pressure-momentum} and taking
the ensemble average yields $P =\frac{ \rho \kappa^2}{\gamma^2} \big\langle \eta^\alpha \eta^\beta B_{\alpha\beta} \big\rangle$. Since $B_{\alpha\beta}$, $P$, and $\rho$ are deterministic on the averaging scale, $B_{\alpha\beta}$ factors out of the average, with $T^{\alpha\beta} = \langle \eta^\alpha \eta^\beta \rangle$ and  index-free notation
\begin{equation}\label{eq:avg-factor}
P = \frac{ \rho \kappa^2}{\gamma^2} B_{\alpha\beta} \big\langle \eta^\alpha \eta^\beta \big\rangle
  = \frac{ \rho \kappa^2}{\gamma^2} B_{\alpha\beta} T^{\alpha\beta},\quad
P = \frac{ \rho \kappa^2}{\gamma^2} \mathrm{tr}(BT)
\end{equation}
For fixed $P$ and $\rho$, \eqref{eq:avg-factor} is a scalar constraint
linking the symmetric tensors $B$ and $T$.  
To obtain a tensorial relationship that holds for all noise realizations
with covariance $T$, we require $B$ to depend only on $T$.  
The simplest such dependence compatible with \eqref{eq:avg-factor} is
\begin{equation}\label{eq:B-alpha-Tinv}
B = \mu T^{-1},\quad B_{\alpha\beta} = \mu\, (T^{-1})_{\alpha\beta}
       = \mu \big\langle \eta_\alpha \eta_\beta \big\rangle^{-1}
\end{equation}
for some scalar $\mu$. \eqref{eq:B-alpha-Tinv} establishes \eqref{eq:B-inverse-T} since $T$ is positive definite, $T^{-1}$ exists, and the form is well posed.

In the potential-free Langevin regime, a friction scale may be introduced through Proposition~\ref{Lr} and Definition~\ref{def:V-eta}, which implies the velocity covariance
\(
\langle V_\alpha V_\beta\rangle 
    = \frac{\kappa^2}{\gamma^2} T_{\alpha\beta}.
\)
Subsequently, the coupling constant $\mu$ can be recovered for specified pressures and densities if necessary. 
\begin{equation}\label{relevance}
P
 =\frac{ \rho \kappa^2}{\gamma^2} \mathrm{tr}(BT)
  =\frac{ \rho \kappa^2\mu (d+1)}{\gamma^2}, \quad \mu=\frac{P\gamma^2}{\rho\kappa^2 (d+1)}.
\end{equation}
\end{proof}

\begin{remark}[Friction scale]
The friction scale $\kappa/\gamma$ merely rescales $\kappa$ and does not 
change the tensorial structure of the duality.  
The inverse curvature tensor 
\(
\mathcal{B}^{\alpha\beta} = (B^{-1})^{\alpha\beta}
\)
therefore satisfies, up to normalization,
\[
\mathcal{B}^{\alpha\beta} \propto T^{\alpha\beta},
\quad
\mathcal{B}_{\alpha\beta} \propto T_{\alpha\beta},
\]
so $\mathcal{B}$ is the geometric object governing the averaged squared 
fluctuations $\langle\eta_\alpha\eta_\beta\rangle$, motivating the 
curvature-diffusion correspondence introduced in the next subsection.
\end{remark}

\begin{theorem}[Functional proportionality]\label{lem:B-fBinv}
Let $\Sigma(t)$ be a smooth hypersurface enclosing a constant volume and $T^{\alpha\beta}:=\big\langle V^\alpha V^\beta\big\rangle$ be a nondegenerate two-tensor with an inverse. Then, 
\[B_{\alpha\beta} = \Phi (T^{-1})_{\alpha\beta},
\qquad
\mathcal B_{\alpha\beta} = (B^{-1})_{\alpha\beta} \propto T_{\alpha\beta}.\]
where $\Phi=\Phi(\xi,t)$ is scalar functional. $\Phi=P$ if all additional terms are absorbed into the volumetric energy, or more generally $\Phi=2P+\nabla_\mu F_S^{\mu}-\Lambda_S B_\mu{}^{\mu}$ and extra flux and source terms that might be added ad hoc.
\end{theorem}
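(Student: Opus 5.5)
The plan is to upgrade the constant-prefactor argument of Lemma~\ref{lem:curv-noise} into a pointwise, field-dependent statement. The input is the normal momentum balance~\eqref{eq:CMS-normal-M} in the volume-preserving regime $C=0$, with the stochastic identification $V^\alpha\sim\eta^\alpha$ of Definition~\ref{def:V-eta}.

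First, set $C=0$ in~\eqref{eq:CMS-normal-M}. Then $\dot\nabla C$ and $V^\alpha\nabla_\alpha C$ vanish, and the bracket reduces to
\[
\rho V^\alpha V^\beta B_{\alpha\beta}-\partial_t\sigma-P+\sigma B_\alpha{}^\alpha .
\]
Unlike Corollary~\ref{cor:CMSmomentum}, I will not absorb the surface terms into $P$. Instead I keep the general divergence identity $\partial_A[V^A(\cdots)]=V^A\partial_A P$. Integrating it against the enclosed constant volume, and using Theorem~\ref{thm:CMS-transport} with the closed-manifold form, turns this identity into a local scalar relation
\[
\rho V^\alpha V^\beta B_{\alpha\beta}=\Phi ,
\]
where $\Phi$ collects the effective pressure together with the surface flux and tension contributions.

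For the general form of $\Phi$, I would write the surface energy density's tangential flux as $F_S^\mu$ and the surface tension multiplier as $\Lambda_S$. These enter through the balance law~\eqref{eq:CMS-balance-general} with $J^\alpha\to F_S^\alpha$, giving $\Phi=2P+\nabla_\mu F_S^\mu-\Lambda_S B_\mu{}^\mu$ plus any ad hoc sources. The factor $2$ comes from the kinetic term $\tfrac{\rho}{2}V^2$ in the action~\eqref{eq:geom-action} when varied against $P$. Collapsing everything into $P$ recovers $\Phi=P$, consistent with~\eqref{eq:CMS-pressure-momentum}.

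Second, take the ensemble average. Because $B_{\alpha\beta}$, $\rho$ and $\Phi$ are deterministic at the averaging scale, this gives
\[
\rho\, B_{\alpha\beta}T^{\alpha\beta}=\Phi .
\]
Third, I promote this scalar constraint to a tensorial one as in Lemma~\ref{lem:curv-noise}. I require that the relation hold for every covariance $T$ realizable on the patch, and that $B$ be a covariant (isotropic) function of $T$ alone. The representation theorem for isotropic symmetric tensor functions then gives
\[
B=a_0\, T^{-1}+a_1 g+a_2 T+\dots
\]
Invariance under the rescaling $T\mapsto\lambda T$, combined with the trace constraint, forces only the $T^{-1}$ term to survive. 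Hence $B_{\alpha\beta}=\Phi\,(T^{-1})_{\alpha\beta}$, after absorbing $\rho$ and $(d+1)$ into the normalization of $\Phi$. Nondegeneracy of $T$ makes $B$ invertible, so
\[
\mathcal B=B^{-1}=\Phi^{-1}T\propto T .
\]

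The main obstacle is the third step. The scalar trace identity alone does not determine a tensor, so eliminating the $g$ and $T$ terms needs a genuine extra hypothesis. I would try scaling covariance first: under $V\mapsto\lambda V$ we have $T\mapsto\lambda^2T$, while the geometry and $B$ are fixed and $\Phi$ scales like $\lambda^2$. This homogeneity fixes the degree $-1$ in $T$ up to the scalar field $\Phi$. A secondary subtlety is that $\Phi$ is now $\xi$-dependent, so the prefactor must be kept inside any later covariant derivative.
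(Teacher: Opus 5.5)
Your skeleton matches the paper's. The paper proves this theorem in one line by combining Lemma~\ref{lem:curv-noise} with \eqref{relevance}. The lemma gives $B_{\alpha\beta}=\mu\,(T^{-1})_{\alpha\beta}$ from the averaged scalar identity $P\propto\rho\,B_{\alpha\beta}T^{\alpha\beta}$, and \eqref{relevance} fixes the prefactor through $\mathrm{tr}(T^{-1}T)=d+1$. Your final absorption of $\rho$ and $d+1$ into $\Phi$ is exactly that computation.

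The gap is the promotion step you flag yourself, and your proposed fix does not close it. Homogeneity fixes the degree of $B$ in $T$, not its tensorial form. For example,
\[
B_{\alpha\beta}=\frac{\Phi\, g_{\alpha\beta}}{\rho\, g_{\mu\nu}T^{\mu\nu}},
\qquad
B_{\alpha\beta}=\frac{\Phi\, T_{\alpha\beta}}{\rho\, T_{\mu\nu}T^{\mu\nu}}
\]
are both isotropic functions of $T$ and both satisfy $\rho\,B_{\alpha\beta}T^{\alpha\beta}=\Phi$. Each is $\Phi$ times a function homogeneous of degree $-1$ in $T$. Hence each is invariant under $V\mapsto\lambda V$ (with $T\mapsto\lambda^2T$, $\Phi\mapsto\lambda^2\Phi$). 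Yet neither is proportional to $T^{-1}$ unless $T\propto g$.

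The coefficients in the isotropic representation may depend on invariants of $T$, so scaling cannot remove the $g$ and $T$ terms. In fact, by Cayley--Hamilton in $d+1$ dimensions, $T^{-1}$ is itself a combination of $g,T,\dots,T^{d}$ with invariant-dependent coefficients. So ``keeping only the $T^{-1}$ term'' is not even a well-defined selection. Demanding the relation for every realizable $T$ adds nothing either, since $\Phi$ moves with $T$.

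In the paper this step is not derived. Lemma~\ref{lem:curv-noise} adopts $B=\mu T^{-1}$ as ``the simplest such dependence compatible with'' the trace constraint, i.e.\ as a closure hypothesis, and Theorem~\ref{lem:B-fBinv} inherits it. You should state the $T^{-1}$ form as that hypothesis, or bring in a genuinely new input. One option is to assume the velocity fluctuations are Gaussian with precision proportional to $B_{\alpha\beta}$, which gives $\langle V^\alpha V^\beta\rangle\propto(B^{-1})^{\alpha\beta}$ directly. This must enter as an assumption, since taking it from Corollary~\ref{cor:OM} would be circular in the paper's ordering. You also need $\Phi\neq 0$ for $B$ to be invertible.

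Two secondary points:
\begin{itemize}
\item Integrating the divergence identity over the enclosed volume gives a global identity, not the pointwise relation $\rho V^\alpha V^\beta B_{\alpha\beta}=\Phi$. The paper obtains the pointwise form as the particular geometric solution \eqref{eq:CMS-pressure-momentum} of Corollary~\ref{cor:CMSmomentum}.
\item The paper never derives $\Phi=2P+\nabla_\mu F_S^\mu-\Lambda_S B_\mu{}^{\mu}$, and $F_S^\mu$, $\Lambda_S$ are not defined there. Your account of the factor $2$ via the action \eqref{eq:geom-action} is therefore unsupported and should not be presented as a derivation.
\end{itemize}
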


\begin{proof}
The result follows directly from Lemma~\ref{lem:curv-noise} and relation \eqref{relevance}.
\end{proof}

\begin{remark}
The only structural requirement for Theorem~\ref{lem:B-fBinv} is that the symmetric tensor used in place of $V^\alpha V^\beta$
is invertible. Hence, the proportionality between $\mathcal{B}$ and the diffusion tensor is the consequence of geometric flow in a volume-preserving regime.
\end{remark}

In this geometric formulation, the effective diffusion tensor and drift are not introduced phenomenologically but originate from the geometry of the manifold through the inverse-curvature tensor $\mathcal{B}^{\alpha\beta}=(B^{-1})^{\alpha\beta}$ and its covariant derivatives. When the curvature tensor remains spatially constant, diffusion simplifies to the classical surface Laplacian with a constant scalar diffusivity. For spatially varying curvature, $\mathcal{B}^{\alpha\beta}$ becomes anisotropic, inducing a geometric drift via its gradients. Consequently, the Fokker–Planck equation on a moving manifold is represented directly in a geometric flow form without the necessity of invoking Itô or Stratonovich calculus. This delineates a geometric mechanism through which a smooth manifold reproduces the multiscale, noise-like behavior typically associated with microscopic randomness: the noise covariance is governed by the inverse curvature tensor, such that flat regions (where $B_{\alpha\beta}$ is minimal) permit significant fluctuations, whereas highly curved regions inhibit them. The geometry of the manifold, rather than an externally defined random forcing, serves as the principal regulator of stochasticity. The identification $T_{\alpha\beta}=\langle\eta_\alpha\eta_\beta\rangle\propto\mathcal{B}_{\alpha\beta}$ determines the magnitude and directional variations of fluctuations; however, the entropy production and irreversibility are primarily governed by the curvature tensor $B_{\alpha\beta}$. Since curvature integrates into the framework via the fundamental theorem of MM calculus Theorem~\ref{thm:CMS-transport} and the geometric balance laws Corollary~\ref{cor:CMSmomentum}, the rate of entropy change relies on curvature deformation, notably through $B_\alpha^\alpha$ and its derivatives. Regions exhibiting significant curvature gradients correspond to pronounced geometric irreversibility, while flattening these regions diminishes the entropy-production rate. 

Consequently, the diffusion and Fokker--Planck equations (as elucidated below) assume an invariant geometric form: the probability density propagates along the gradients of
$\mathcal{B}^{\alpha\beta}$, while the geometric entropy functional varies in accordance with the evolution of the curvature tensor $B_{\alpha\beta}$. The Einstein relation linking mobility and diffusivity is reframed as a statement regarding the response of the inverse–curvature tensor $\mathcal{B}^{\alpha\beta}$ to geometric deformation, rather than being a phenomenological fluctuation–dissipation postulate. A more profound connection between geometry and entropy surfaces at a later stage: curvature invariants such as $B_{\alpha\beta}B^{\alpha\beta}$ contribute to the reversible entropy balance, and in two dimensions, this quantity manifests as a topological density. Therefore, it is the curvature, rather than its inverse, that dictates the geometric object governing entropy change and topological contributions, while inverse $\mathcal{B}^{\alpha\beta}$ determines the amplitude and anisotropy of fluctuations. Overall, at the conceptual level, the moving manifold serves as the source of stochasticity, with curvature organizing this process: the evolving geometry produces the effective noise field, whereas regions of minimal curvature allow for substantial fluctuations, while regions of high curvature inhibit them. Detailed explanations are provided in the subsequent subsections.

\subsection{Diffusion from Inverse Curvature}

We have established that the covariance of the tangential fluctuations satisfies
\(
\langle\eta_\alpha \eta_\beta\rangle \propto \mathcal{B}_{\alpha\beta}, \quad
B_{\alpha\beta} \propto (T^{-1})_{\alpha\beta}.
\)
Thus, the inverse of the curvature tensor governs the amplitude and anisotropy of noise.
Here, we formalize these dualities; and in subsequent subsections, we demonstrate how they give rise to the effective diffusion tensor, the geometric Fokker-Planck equation, while the curvature-kinetic action forms the basis of Onsager-Machlup theory, quantum-like path weights, and fluctuation relations. Collectively, these findings suggest a unified geometric origin for classical, statistical, and quantum mechanics within a single evolving manifold.
%We now formalize this structure and, in subsequent subsections, show how it yields the effective diffusion tensor, the geometric Fokker--Planck equation, and a plethora of powerful results that suggest the unification of classical mechanics with statistical and quantum mechanics within a single geometric flow.

\begin{definition}[Inverse--curvature tensor]\label{def:inv-curv}
Let $B_{\alpha\beta}$ be the nondegenerate second fundamental form of the evolving 
hypersurface $\Sigma(t)$, so that it admits a symmetric inverse. We define the \emph{inverse-curvature tensor}
\begin{equation}
\mathcal{B}^{\alpha\beta} := (B^{-1})^{\alpha\beta},\quad \mathcal{B}^{\alpha\gamma} B_{\gamma\beta}
    = \delta^\alpha{}_\beta, \quad T_{\alpha\beta}=\langle \eta_\alpha \eta_\beta \rangle
    = \mu^{-1} \mathcal{B}_{\alpha\beta},
\quad
\mathcal{B}_{\alpha\beta} := g_{\alpha\mu} g_{\beta\nu}\mathcal{B}^{\mu\nu}
\end{equation}
The tensor $\mathcal{B}^{\alpha\beta}$ encodes the noise geometry through the
curvature--noise correspondence, where $\mu>0$ is a coupling constant independent of indices. With Definition~\ref{def:inv-curv} established, we translate the curvature-noise duality into a geometric diffusion law.
\end{definition}

\begin{lemma}[Curvature-Diffusion correspondence]\label{lem:curv-diff}
Let $T_{\alpha\beta}=\langle\eta_\alpha\eta_\beta\rangle$ be the covariance of
tangential fluctuations, and let the diffusion tensor be defined as the standard stochastic relation \(
D^{\alpha\beta} = \chi\, T^{\alpha\beta}, \, \chi>0 \). Then,
\[
D^{\alpha\beta}
    = D_0\,\mathcal{B}^{\alpha\beta}, \qquad
D_0 := \frac{\chi}{\mu}.
\]
%where $\mathcal{B}^{\alpha\beta}$ is the inverse curvature tensor.
\end{lemma}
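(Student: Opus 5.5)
The lemma follows by substitution. I would start from the defining relation $D^{\alpha\beta}=\chi T^{\alpha\beta}$ and replace the covariance $T$ using the curvature--noise correspondence recorded in Definition~\ref{def:inv-curv}, namely $T_{\alpha\beta}=\mu^{-1}\mathcal{B}_{\alpha\beta}$. That definition is itself justified by Lemma~\ref{lem:curv-noise} and Theorem~\ref{lem:B-fBinv}. Since the diffusion tensor in the statement is written with upper indices, the first step is to raise both indices of this covariance relation with the induced metric.

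Using the convention in Definition~\ref{def:inv-curv}, $\mathcal{B}_{\alpha\beta}=g_{\alpha\mu}g_{\beta\nu}\mathcal{B}^{\mu\nu}$, together with $g^{\alpha\gamma}g_{\gamma\beta}=\delta^\alpha{}_\beta$, contracting with $g^{\alpha\rho}g^{\beta\sigma}$ gives
\[
T^{\rho\sigma}=\mu^{-1}\mathcal{B}^{\rho\sigma}.
\]
Here $\mu$ is a scalar with no index dependence, so it passes through the index gymnastics unchanged. Substituting into $D^{\alpha\beta}=\chi T^{\alpha\beta}$ then yields
\[
D^{\alpha\beta}=\frac{\chi}{\mu}\,\mathcal{B}^{\alpha\beta}=D_0\,\mathcal{B}^{\alpha\beta},
\]
which identifies $D_0=\chi/\mu$.

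Next I would check that $D_0$ is well defined and positive. We have $\chi>0$ by hypothesis and $\mu>0$ by Definition~\ref{def:inv-curv}, so $D_0>0$. This ensures that $D^{\alpha\beta}$ inherits the symmetry of $\mathcal{B}^{\alpha\beta}$. It also makes $D^{\alpha\beta}$ positive definite wherever $T$ is, since $T$ is a covariance and $\mathcal{B}$ is proportional to it. This positivity is what later sections need for a well-posed diffusion operator.

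The only real subtlety is consistency of index placement. In Lemma~\ref{lem:curv-noise} and Theorem~\ref{lem:B-fBinv} the inverse $T^{-1}$ is taken as a matrix with lowered indices. Definition~\ref{def:inv-curv} instead defines $\mathcal{B}^{\alpha\beta}$ as the inverse of $B_{\alpha\beta}$, with $\mathcal{B}_{\alpha\beta}$ obtained by metric lowering. I would verify that these two readings agree: from $B_{\alpha\beta}=\mu(T^{-1})_{\alpha\beta}$, with $(T^{-1})_{\alpha\beta}$ understood as the tensor satisfying $(T^{-1})_{\alpha\gamma}T^{\gamma\beta}=\delta_\alpha{}^\beta$, the relation $\mathcal{B}^{\alpha\gamma}B_{\gamma\beta}=\delta^\alpha{}_\beta$ forces $\mathcal{B}^{\alpha\beta}=\mu^{-1}T^{\alpha\beta}$ directly. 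This bypasses the lowering step altogether and confirms the result.
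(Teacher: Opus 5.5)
Your proof is correct and follows the paper's argument: substitute the curvature--noise relation $T^{\alpha\beta}=\mu^{-1}\mathcal{B}^{\alpha\beta}$ (from Lemma~\ref{lem:curv-noise}) into $D^{\alpha\beta}=\chi T^{\alpha\beta}$ and read off $D_0=\chi/\mu$; your index-raising and inverse-uniqueness checks only spell out what the paper does in one line. Your positivity aside is valid under the stated $\chi,\mu>0$, but note that Remark~\ref{sign} later treats $\mathcal{B}^{\alpha\beta}$ as negative definite on convex outward-oriented surfaces and lets $D_0$ carry that sign, so the paper does not actually rely on $D_0>0$.
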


\begin{proof}
From Lemma~\ref{lem:curv-noise} directly follows $T_{\alpha\beta}=\mu^{-1} (B^{-1})_{\alpha\beta}$,
hence
\(
D^{\alpha\beta}
= \chi\,T^{\alpha\beta}
= \chi\,\mu^{-1} (B^{-1})^{\alpha\beta}
= D_0\,\mathcal{B}^{\alpha\beta}.
\)
\end{proof}

\begin{corollary}[Curvature flux]\label{cor:curv-flux}
Let $\rho(\xi,t)$ be a scalar density on the evolving hypersurface $\Sigma(t)$,
and define the constitutive flux by
\begin{equation}
J^{\alpha} 
    = -D^{\alpha\beta}\,\nabla_{\beta}\rho
    = -D_{0}\,\mathcal{B}^{\alpha\beta}\,\nabla_{\beta}\rho .
\label{eq:flux}
\end{equation}
Then, in directions where the principal curvatures are large, the eigenvalues of 
$\mathcal{B}^{\alpha\beta}$ are small, so diffusion is suppressed; in flatter 
directions, diffusion is enhanced.
\end{corollary}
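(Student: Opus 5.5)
The plan is to combine Lemma~\ref{lem:curv-diff} with a spectral decomposition of $B_{\alpha\beta}$ relative to the induced metric. First, the flux formula \eqref{eq:flux} follows at once: the constitutive flux $J^\alpha=-D^{\alpha\beta}\nabla_\beta\rho$ is the standard Fickian form. Substituting $D^{\alpha\beta}=D_0\mathcal{B}^{\alpha\beta}$ from Lemma~\ref{lem:curv-diff} gives the second equality, with $D_0=\chi/\mu>0$. This step is purely algebraic.

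For the qualitative statement, I would work at a fixed point $\xi$ and time $t$ and diagonalise the mixed tensor $B^\alpha{}_\beta=g^{\alpha\gamma}B_{\gamma\beta}$. In the Riemannian (spacelike) setting, symmetry of $B_{\alpha\beta}$ and positivity of $g_{\alpha\beta}$ give an orthonormal eigenframe $\{e_{(i)}\}$ with real eigenvalues $\kappa_i$, the principal curvatures. Nondegeneracy (Definition~\ref{def:inv-curv}) ensures $\kappa_i\neq0$. Because $\mathcal{B}^{\alpha\gamma}B_{\gamma\beta}=\delta^\alpha{}_\beta$, the same frame diagonalises $\mathcal{B}^\alpha{}_\beta$, with eigenvalues $1/\kappa_i$. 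Thus $D^{\alpha\beta}=D_0\sum_i \kappa_i^{-1}e_{(i)}^\alpha e_{(i)}^\beta$.

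Next I would project the flux onto each principal direction. This gives $J_{(i)}=-(D_0/\kappa_i)\,\partial_{(i)}\rho$, so the effective diffusivity along $e_{(i)}$ is $D_0/\kappa_i$. It is small when $|\kappa_i|$ is large and grows without bound as $\kappa_i\to0$. As a check on the dissipative meaning, I would compute the quadratic form $-J^\alpha\nabla_\alpha\rho=D_0\sum_i\kappa_i^{-1}(\partial_{(i)}\rho)^2$. This exhibits the suppression or enhancement directly, and it is positive exactly when $B$ is positive definite.

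The main obstacle is sign and signature. For diffusion to be well-posed we need $T_{\alpha\beta}$ positive definite, as in Lemma~\ref{lem:curv-noise}, and hence $\mu\mathcal{B}$ positive definite. This means the statement must be read under the assumption that $B_{\alpha\beta}$ is definite, for example on a convex region with orientation chosen so that $\kappa_i>0$, or else in terms of $|\kappa_i|$. In the Minkowski setting of Definition~\ref{def:normal}, $g_{\alpha\beta}$ may be indefinite, and diagonalisability of $B^\alpha{}_\beta$ is then not automatic. I would restrict to the spacelike, Euclideanized slice where $g$ is positive definite and state this restriction explicitly. The conclusion then holds eigenvalue by eigenvalue.
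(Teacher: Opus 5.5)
Your proposal is correct and follows essentially the paper's route. The paper gives no separate proof: it reads the corollary directly off Lemma~\ref{lem:curv-diff} ($D^{\alpha\beta}=D_0\mathcal{B}^{\alpha\beta}$) together with the reciprocity of the eigenvalues of $B$ and $\mathcal{B}=B^{-1}$, and your principal-frame computation makes this explicit. The only difference is sign bookkeeping: in Remark~\ref{sign} the paper lets $D_0$ adopt the sign of $\mathcal{B}$ for convex outward-oriented surfaces, whereas you keep $D_0=\chi/\mu>0$ and then choose the orientation or work with $|\kappa_i|$. Your restriction to a positive-definite induced metric also usefully states a diagonalisability assumption that the paper leaves tacit.
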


\begin{remark}[Sign of the diffusion tensor]\label{sign}
The second fundamental form $B_{\alpha\beta}$ is negative definite for convex outward-oriented hypersurfaces, hence its inverse $\mathcal{B}^{\alpha\beta}$ is likewise negative definite. In the geometric Fokker-Planck equation, however, the physical diffusion tensor is the product $D^{\alpha\beta}=D_0\,\mathcal{B}^{\alpha\beta}$. Since $D_0$ is an arbitrary transport coefficient, we choose it so that its sign adopts the sign of $\mathcal{B}$, and the full operator $\nabla_\alpha(D^{\alpha\beta}\nabla_\beta)$ remains dissipative and elliptic in the Fokker-Planck sense. Consequently, the quadratic form $D^{\alpha\beta}\nabla_\alpha\rho\,\nabla_\beta\rho$ appears with a nonnegative sign in the entropy-production balance, ensuring that the production term is well defined and non-negative.
\end{remark}

\begin{remark}[Flat geometry correspodance]
Since $D^{\alpha\beta}\propto \mathcal{B}^{\alpha\beta}$, the diffusion 
amplitude grows when $\mathcal{B}^{\alpha\beta}$ grows and collapses when 
$\mathcal{B}^{\alpha\beta}$ collapses.  
Thus, in flat regions where $B_{\alpha\beta}\to 0$, the inverse--curvature tensor 
$\mathcal{B}^{\alpha\beta}$ diverges,
\[
B_{\alpha\beta}\to 0 
\Longrightarrow
\mathcal{B}^{\alpha\beta}\to\infty
\Longrightarrow
D^{\alpha\beta}\to\infty,
\]
so fluctuations become maximal.
Conversely, in regions exhibiting substantial curvature, the eigenvalues of $B_{\alpha\beta}$ are significantly large, so $\mathcal{B}^{\alpha\beta}$, thereby suppressing stochastic spreading. Consequently, curvature functions as a geometric source and regulator of fluctuations.
\end{remark}

\begin{remark}[Vacuum]
Given that the vacuum in the MM formulation allows for multiple geometric realizations \cite{Svintradze2024b}, the inverse–curvature noise tensor need not be uniquely defined in generic vacuum states. Accordingly, vacuum fluctuations do not necessarily coincide with the standard Minkowski quantum vacuum unless the geometry reduces to flat space. This observation aligns with the established non-uniqueness of vacuum states within Quantum Field Theory on curved backgrounds.
\end{remark}

These findings complete the geometric interpretation of stochasticity: the fluctuation amplitude, diffusion tensor, and drift all originate directly from the curvature structure of the evolving manifold.

\begin{theorem}[Geometric Fokker-Planck]
\label{thm:geom-FP}
Consider a smoothly evolving patch $A(t)\subset\Sigma(t)$ with boundary 
$\gamma(t)$ and density field $\rho(\xi,t)$. The MM balance law \eqref{thm:CMS-balance}
with zero sources $\Pi=0$ and the curvature–regulated flux \eqref{eq:flux}
yields the local transport equation
\begin{equation}
\label{eq:geom-FP}
\dot{\nabla}\rho
+ \nabla_\alpha(\rho V^\alpha)
- C B_\alpha{}^{\alpha} \rho
=\nabla_\alpha \Big(D_0 \mathcal{B}^{\alpha\beta} \nabla_\beta \rho\Big).
\end{equation}
For a closed surface with $J^\alpha\nu_\alpha=0$, where $\nu_\alpha$ is unit co-normal to $\gamma$ lying inside the tangent space of $\Sigma$, and $\Pi=0$, the total probability $\int_{\Sigma(t)}\rho d\Sigma$ is conserved across all times $t$.
\end{theorem}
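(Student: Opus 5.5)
The plan is to obtain \eqref{eq:geom-FP} by substituting the constitutive flux of Corollary~\ref{cor:curv-flux} into the general balance law of Theorem~\ref{thm:CMS-balance}. Then I would integrate the result with the Fundamental Theorem of MM calculus (Theorem~\ref{thm:CMS-transport}, or Remark~\ref{int_open} for the patch) to get conservation of total probability.

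\emph{Local equation.} Start from \eqref{eq:CMS-balance-general} with $\Pi=0$. Insert $J^\alpha=-D^{\alpha\beta}\nabla_\beta\rho$ with $D^{\alpha\beta}=D_0\mathcal{B}^{\alpha\beta}$, as given by Lemma~\ref{lem:curv-diff}. The term $\nabla_\alpha J^\alpha$ becomes $-\nabla_\alpha(D_0\mathcal{B}^{\alpha\beta}\nabla_\beta\rho)$, and moving it to the right-hand side gives \eqref{eq:geom-FP} directly. Two points need care. First, $\mathcal{B}^{\alpha\beta}$ must be a genuine surface tensor, so that the divergence is covariant; this holds because it is the inverse of the tensor $B_{\alpha\beta}$, which is assumed nondegenerate on $A(t)$ as in Definition~\ref{def:inv-curv}. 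Second, the sign convention of Remark~\ref{sign} must be used so that the operator is elliptic and dissipative. I would keep $D_0$ inside the divergence, because $D_0$ may in principle depend on $\xi$.

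\emph{Conservation.} To derive the integral balance for the patch, apply Remark~\ref{int_open} with $f=\rho$:
\[
\frac{d}{dt}\int_{A(t)}\rho\, d\Sigma=\int_{A(t)}\bigl(\dot\nabla\rho-CB_\alpha{}^\alpha\rho\bigr)d\Sigma+\int_{\gamma(t)}\rho V^\perp d\ell .
\]
Substitute $\dot\nabla\rho-CB_\alpha{}^\alpha\rho=-\nabla_\alpha(\rho V^\alpha)-\nabla_\alpha J^\alpha$ from \eqref{eq:geom-FP}. Applying the surface divergence theorem on $A(t)$ produces boundary terms $-\int_\gamma(\rho V^\alpha+J^\alpha)\nu_\alpha\,d\ell$.

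\emph{Main obstacle.} The expected difficulty is reconciling the advective boundary term $\rho V^\alpha\nu_\alpha$ with the contour term $\rho V^\perp$. I would handle it by taking the patch to be material, i.e.\ $\gamma(t)$ advected by the tangential velocity, so that $V^\perp=V^\alpha\nu_\alpha$ and the two terms cancel. Only $-\int_\gamma J^\alpha\nu_\alpha\,d\ell$ then remains, and this vanishes under the no-flux condition $J^\alpha\nu_\alpha=0$. For a closed surface, take $A(t)=\Sigma(t)$. The boundary is then empty, and Theorem~\ref{thm:CMS-transport} combined with the divergence theorem on a closed manifold kills both divergence terms, which gives $\frac{d}{dt}\int_{\Sigma(t)}\rho\,d\Sigma=0$, consistent with Theorem~\ref{thm:CMS-balance}.
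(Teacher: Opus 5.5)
Your proposal is correct and follows the paper's own argument: substitute $J^\alpha=-D_0\mathcal{B}^{\alpha\beta}\nabla_\beta\rho$ into the balance law with $\Pi=0$ to get \eqref{eq:geom-FP}, then integrate with Theorem~\ref{thm:CMS-transport} (and Remark~\ref{int_open}) so that the divergence terms become boundary contributions that vanish. Your material-patch condition $V^\perp=V^\alpha\nu_\alpha$, which cancels the advective contour term, and your observation that for $A(t)=\Sigma(t)$ there is no boundary (so the no-flux condition is vacuous) make explicit steps the paper's sketch glosses over; the sign convention of Remark~\ref{sign} is not needed for either step.
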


\begin{proof}
Inserting the constitutive law 
\eqref{eq:flux} into \eqref{eq:CMS-balance-general} with $\Pi=0$:
\[
\dot{\nabla}\rho
+ \nabla_\alpha(\rho V^\alpha)
- C B_\alpha{}^{\alpha}\rho
+ \nabla_\alpha \Big(-D_0\,\mathcal{B}^{\alpha\beta}\nabla_\beta\rho\Big)
= 0,
\]
which rearranges directly to \eqref{eq:geom-FP}.  
For a closed surface, integrate \eqref{eq:geom-FP} over $\Sigma(t)$
and employ the Fundamental MM Theorem~\ref{thm:CMS-transport} along with Remark~\ref{int_open}. The flux term transforms into a boundary integral, which becomes null when $J^\alpha\nu_\alpha=0$, thereby establishing the conservation of 
$\int_{\Sigma(t)}\rho\,d\Sigma$. A comprehensive analysis proceeds by utilizing the same methodologies as in our previous mass-balance identity 
\cite{Svintradze2017,Svintradze2018,Svintradze2024a,Svintradze2024b,Svintradze2025,Svintradze2025P}.
\end{proof}

\begin{corollary}[Brownian motion]
\label{cor:classical-diffusion}
Let \(\Sigma\) be a static sphere of radius \(R\) (\(C=0\), \(V^\alpha=0\)), with outward normal and $B_{\alpha\beta}=\frac{1}{R} g_{\alpha\beta}$. %hence$\mathcal{B}^{\alpha\beta}=R g^{\alpha\beta}, D^{\alpha\beta}=D g^{\alpha\beta}, D = D_\ast R$.
Then, geometric FP Theorem~\ref{thm:geom-FP} simplifies as 
\[
\partial_t\rho = D \Delta_\Sigma \rho,
\qquad
\Delta_\Sigma = \nabla_\alpha(g^{\alpha\beta}\nabla_\beta).
\]
that reduces to the classical Einstein diffusion equation on a flat case when $g_{\alpha\beta}=\delta_{\alpha\beta}$.
\end{corollary}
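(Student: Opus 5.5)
The plan is to specialize the geometric Fokker--Planck equation of Theorem~\ref{thm:geom-FP} to the static sphere. Every kinematic and curvature quantity there is explicit, so the task reduces to simplifying each term of \eqref{eq:geom-FP} in turn.

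\emph{Step one: the left-hand side.} Set $C=0$ and $V^\alpha=0$. Then the term $-CB_\alpha{}^\alpha\rho$ vanishes, and so does the transport term $\nabla_\alpha(\rho V^\alpha)$. By Definition~\ref{invariant derivative}, the invariant time derivative of the scalar $\rho$ reduces to $\dot\nabla\rho=\partial_t\rho-V^\alpha\nabla_\alpha\rho=\partial_t\rho$. Because the sphere is static, Corollary~\ref{cor:metric-evolution} gives $\partial_t g_{\alpha\beta}=0$. The coordinates $\xi^\alpha$ and the measure $d\Sigma$ are therefore time independent, so no hidden time dependence enters.

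\emph{Step two: the diffusion tensor.} With $B_{\alpha\beta}=R^{-1}g_{\alpha\beta}$, we have $B^\alpha{}_\beta=R^{-1}\delta^\alpha{}_\beta$. Definition~\ref{def:inv-curv} then forces the inverse-curvature tensor to be $\mathcal B^{\alpha\beta}=R\,g^{\alpha\beta}$, as one checks from $\mathcal B^{\alpha\gamma}B_{\gamma\beta}=g^{\alpha\gamma}g_{\gamma\beta}=\delta^\alpha{}_\beta$. By Lemma~\ref{lem:curv-diff}, $D^{\alpha\beta}=D_0R\,g^{\alpha\beta}$. Define the scalar $D:=D_0R$. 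Its sign is fixed as in Remark~\ref{sign}, so that $D>0$ whichever orientation convention makes $B$ positive or negative definite.

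\emph{Step three: the right-hand side.} $D$ is constant on the sphere, since both $R$ and $D_0$ are constant. Moreover $\nabla_\alpha g^{\alpha\beta}=0$ by metric compatibility of the Levi--Civita connection. Hence
\[
\nabla_\alpha\bigl(D_0\mathcal B^{\alpha\beta}\nabla_\beta\rho\bigr)=D\,\nabla_\alpha\bigl(g^{\alpha\beta}\nabla_\beta\rho\bigr)=D\,\Delta_\Sigma\rho .
\]
Combining the three steps gives $\partial_t\rho=D\Delta_\Sigma\rho$.

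\emph{Flat limit.} Set $g_{\alpha\beta}=\delta_{\alpha\beta}$. The Christoffel symbols \eqref{eq:surface-Christoffel} vanish, so $\Delta_\Sigma$ becomes $\sum_\alpha\partial_\alpha^2$, and we recover Einstein's diffusion equation. The one point needing care is that a flat limit $R\to\infty$ would make $\mathcal B$ diverge. To handle this, I would state the flat statement at fixed effective diffusivity $D=D_0R$ by taking $D_0\propto R^{-1}$, in line with the remark on flat geometry. The conclusion is then read as the equation $\partial_t\rho=D\Delta\rho$ holding with the constant $D$, which involves no actual curvature.

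The main obstacle is therefore only this sign and normalization bookkeeping, namely choosing $D_0$ so that $D$ is positive and finite. The differential identities themselves are immediate.
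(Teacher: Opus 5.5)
Your proposal is correct and follows essentially the same route as the paper's brief argument. Setting $C=0$ and $V^\alpha=0$ reduces the equation to $\partial_t\rho=\nabla_\alpha(D_0\mathcal{B}^{\alpha\beta}\nabla_\beta\rho)$. Inverting $B_{\alpha\beta}=R^{-1}g_{\alpha\beta}$ gives $\mathcal{B}^{\alpha\beta}=R\,g^{\alpha\beta}$, and constancy of $D=D_0R$ together with metric compatibility yields $D\,\Delta_\Sigma\rho$. Your added point, holding $D=D_0R$ fixed as $R\to\infty$, sensibly clarifies the flat limit that the paper leaves implicit.
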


%\begin{proof}
%Since the surface is static, \(C=0\) and \(V^\alpha=0\), the geometric
%Fokker–Planck equation~\eqref{eq:geom-FP} becomes purely diffusive:
%\[
%\partial_t\rho
 % = \nabla_\alpha(D^{\alpha\beta}\nabla_\beta\rho)
 % = \nabla_\alpha\!\left(D_\ast \mathcal{B}^{\alpha\beta}\nabla_\beta\rho\right).
%\]
%On a sphere, \(B_{\alpha\beta}=\frac{1}{R} g_{\alpha\beta}\), hence
%\(\mathcal{B}^{\alpha\beta}=R g^{\alpha\beta}\) and
%\(D^{\alpha\beta}=D\,g^{\alpha\beta}\) with \(D=D_\ast R\).
%Since \(D\) is constant,
%\[
%\nabla_\alpha(D^{\alpha\beta}\nabla_\beta\rho)
 % = D \nabla_\alpha(g^{\alpha\beta}\nabla_\beta\rho)
 % = D \Delta_\Sigma\rho,
%\]
%which yields the standard plannar heat equation when $g_{\alpha\beta}=\delta_{\alpha\beta}$.
%\end{proof} 

\begin{remark}[Geometric drift]
For non-constant curvature,
\(
\nabla_\alpha(\mathcal{B}^{\alpha\beta}\nabla_\beta\rho)
 = \mathcal{B}^{\alpha\beta}\nabla_\alpha\nabla_\beta\rho
 + (\nabla_\alpha\mathcal{B}^{\alpha\beta}) \nabla_\beta\rho
\)
reveals anisotropic diffusion and a geometric drift proportional to
\(\nabla_\alpha\mathcal{B}^{\alpha\beta}\).
\end{remark}

%\begin{remark}[Consistency with curvature–noise duality]
%The Brownian limit follows directly from the curvature–noise identity:
%flat or weakly curved regions satisfy
%$\mathcal{B}=B^{-1}=\langle\eta\eta\rangle$ large and therefore give
%enhanced diffusion, while highly curved regions suppress it.
%On constant–curvature surfaces such as the sphere, $\mathcal{B}$ is
%uniform and the diffusion tensor reduces to a constant multiple of the
%metric, reproducing the classical Einstein heat equation.
%\end{remark}

\subsection{Conceptual interpretation}

The curvature-noise correspondence demonstrates that stochasticity is not an external force imposed upon a deterministic evolution but rather an inherent geometric response of the moving manifold itself. In the conventional Langevin framework, noise is prescribed phenomenologically; within the MM framework, the covariance tensor is identified as the inverse of the curvature tensor, thereby allowing the geometry to determine the magnitude and anisotropy of fluctuations. Regions characterized by low curvature satisfy the condition $\mathcal{B}^{\alpha\beta} \to \infty$, thereby exhibiting significant variance in the tangential velocity field, which results in increased diffusion and extensive stochastic dispersion. Conversely, regions exhibiting high curvature, where $\mathcal{B}^{\alpha\beta}$ diminishes, tend to suppress fluctuations. Consequently, the manifold self-organizes its stochasticity: curvature determines the amplitude, shape, and orientation of the effective noise. Therefore, the MM framework provides a geometric mechanism for the multiscale, noise-like behaviour commonly attributed to microscopic randomness. Diffusion, geometric drift, and the Fokker-Planck operator all arise directly from the evolving geometry of $\Sigma(t)$, without invoking It\^o or Stratonovich rules. The transition from deterministic to stochastic behaviour occurs through geometric principles: flat or weakly curved regions exhibit behavior akin to strongly fluctuating media, while sharply curved regions serve as stabilizing zones with constrained variability.

In this context, the curvature determines the arrangement of fluctuations, and the manifold's geometry, rather than external randomness, functions as the primary source of stochastic dynamics. Consequently, deterministic moving-manifold balance laws and stochastic processes are integrated within a single, evolving geometric framework.

%\subsection{Generalized Ostwald ripening concept}

\subsection{Moving Manifold Configuration Weights}

\begin{theorem}[Curvature Configurations]\label{Ostwald theorem}
Consider a closed, evolving manifold $\Sigma(t)$ separating a surface phase 
from a surrounding solution at fixed temperature $T$. 
Let $\rho(\xi,t)$ denote the density field and 
$m(\xi,t) = \rho(\xi,t)/m_{\mathrm{mol}}$ the molar density field. 
Then the equilibrium configuration weight $c_{\mathrm{eq}}[\Sigma]$ 
adjacent to the interface is given by
\begin{align}
  c_{\mathrm{eq}}[\Sigma]
  = c_\infty
  \exp\left[
    \frac{1}{k_B T}
    \int_{\Sigma(t)}
      m(\xi,t)
      V^\alpha(\xi,t)V^\beta(\xi,t)
      B_{\alpha\beta}(\xi,t) d\Sigma
  \right],
  \label{eq:geom-ostwald}
\end{align}
where $c_\infty$ represents the equilibrium configuration weight 
associated with a planar interface. 
We refer to Theorem~\ref{Ostwald theorem} as the generalized Ostwald ripening, given its conceptual similarity. However, the theorem itself does not follow from the Ostwald concept; in fact, quite the opposite is true.
\end{theorem}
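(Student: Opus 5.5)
The plan is to combine the generalized Young--Laplace law, Corollary~\ref{cor:CMSmomentum}, with the standard thermodynamic route to the Gibbs--Thomson/Ostwald relation. That route equates the chemical potential of a molecule in the surface phase with that of the adjacent solution, and then integrates over the manifold.

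\textbf{Step 1: local pressure excess.} Work in the regime $C=0$, so that \eqref{eq:CMS-pressure-momentum} gives the local effective pressure
\[
P_*(\xi,t)=\rho\,V^\alpha V^\beta B_{\alpha\beta}.
\]
This is the geometric excess pressure relative to a planar interface. For a planar interface $B_{\alpha\beta}=0$, so $P_*=0$, and this is what fixes the reference state $c_\infty$. Dividing by $m_{\mathrm{mol}}$ turns this into an excess free energy per unit molar content, namely $m\,V^\alpha V^\beta B_{\alpha\beta}$. This is the usual $v_m\Delta P$ term, here written per unit surface measure.

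\textbf{Step 2: chemical potential balance.} Pointwise on the patch, write the chemical potential of the solution adjacent to $\Sigma$ in ideal-solution form, $\mu_{\mathrm{sol}}=\mu^0+k_BT\ln c$. Require equilibrium with the curved surface phase, whose chemical potential is shifted by the curvature--kinetic excess from Step 1. The planar case gives $k_BT\ln c_\infty$. Subtracting the two balances yields the local increment
\[
k_BT\,\delta\ln c = m\,V^\alpha V^\beta B_{\alpha\beta}\,d\Sigma .
\]

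\textbf{Step 3: globalize.} The configuration weight of the whole manifold is a functional of $\Sigma$. Because the local contributions are additive, the logarithms of independent surface elements add, so $\ln(c_{\mathrm{eq}}/c_\infty)$ is the integral over $\Sigma(t)$ of the local increments. Exponentiating then gives \eqref{eq:geom-ostwald}. As a consistency check, take a sphere with $B_{\alpha\beta}=g_{\alpha\beta}/R$ and use the averaged kinetic term fixed by Lemma~\ref{lem:curv-noise}. The exponent should then reduce to the form $\propto\sigma v_m/(Rk_BT)$, recovering classical Ostwald--Freundlich.

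\textbf{Main obstacle.} The hard step is the passage from a pointwise chemical-potential balance to a single global exponential of a surface integral. This is the additivity, or factorization, of the configuration weight over surface elements. It needs the entropy/free energy of the interface to be local in $\xi$, with no correlations between elements. I would justify it by treating $\Sigma$ as partitioned into cells, each in local equilibrium with its adjacent solution. The cell weights multiply, and in the continuum limit their logarithms sum into $\int_\Sigma\cdots d\Sigma$. Two further points need care. First, the sign convention: $B_{\alpha\beta}$ is negative definite for a convex outward surface, as noted in Remark~\ref{sign}, so the orientation must be chosen so that convex droplets give enhanced $c_{\mathrm{eq}}$. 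Second, the identification of $m$ as the molar density that converts pressure into chemical potential.
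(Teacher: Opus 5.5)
The gap is the step you flag yourself, and the cell-product fix does not close it.

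A pointwise chemical-potential balance with the adjacent ideal solution gives an intensive relation, $k_BT\ln\bigl(c(\xi)/c_\infty\bigr)=\Delta\mu(\xi)$. Here $\Delta\mu$ comes from $P_*$ by multiplying by a molar or particle volume, as in the paper's Thermal Shift proof ($v_s\Delta P$), not by dividing by $m_{\mathrm{mol}}$. No area element appears, so the $d\Sigma$ in your increment $k_BT\,\delta\ln c=m\,V^\alpha V^\beta B_{\alpha\beta}\,d\Sigma$ is inserted by hand. Multiplying the resulting cell concentrations gives $c_\infty^N\exp\bigl(\sum_i\Delta\mu_i/k_BT\bigr)$, which has no continuum limit in either the prefactor or the exponent.

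What does multiply over independent cells are Boltzmann factors of extensive excesses, $\exp\bigl(\epsilon_i\,\Delta\Sigma_i/k_BT\bigr)$, with $\epsilon_i\propto P_*=m_{\mathrm{mol}}\,m\,V^\alpha V^\beta B_{\alpha\beta}$ and a single global normalization $c_\infty$. With that change, $c_{\mathrm{eq}}[\Sigma]$ becomes a configuration weight of the whole manifold rather than a solubility next to a point. Your Step 2 then plays no role.

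The paper supplies this content weighting from the start instead of building it from local balances. It imports from \cite{Svintradze2023} the combined Young--Laplace/Kelvin/Gibbs--Thomson relation $c_{\mathrm{eq}}=c_\infty\exp\bigl(M V^\alpha V^\beta B_{\alpha\beta}/k_BT\bigr)$ with $M=\int_{\Sigma}\rho\,d\Sigma$. In the $C=0$ regime, where the kinetic stress is generated by the tangential velocity field, it rewrites $M V^\alpha V^\beta B_{\alpha\beta}$ as $\int_{\Sigma}\rho\,V^\alpha V^\beta B_{\alpha\beta}\,d\Sigma$; this is exact when the invariant is uniform on $\Sigma$. It then divides by $m_{\mathrm{mol}}$, absorbing constants. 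So the paper goes from a global, content-weighted relation down to a surface density. Your route needs that weighting to emerge from intensive local balances, which it cannot.

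Your consistency check fails for the same reason. For a static sphere $V^\alpha=0$, so the exponent in \eqref{eq:geom-ostwald} vanishes. The Laplace term $\sigma B_\alpha{}^{\alpha}$ behind the classical $\sigma v_m/R$ is not contained in $P_*=\rho V^\alpha V^\beta B_{\alpha\beta}$, because Corollary~\ref{cor:CMSmomentum} defines $P_*=P-\sigma B_\alpha{}^{\alpha}+\partial_t\sigma$.

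If you instead insert the noise average of Lemma~\ref{lem:curv-noise}, $\rho\langle V^\alpha V^\beta\rangle B_{\alpha\beta}=P_*$, the exponent for a two-sphere with uniform $P_*$ is $4\pi R^2P_*/(m_{\mathrm{mol}}k_BT)$. Even with $P_*\sim\sigma/R$ this grows like $R$ rather than decaying like $1/R$. The area-integrated weight is not the per-molecule Ostwald--Freundlich solubility, and you should not expect to recover the latter from it.
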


\begin{proof}
In our previous work \cite{Svintradze2023}, we demonstrated that the universal Young–Laplace, Kelvin, and Gibbs–Thomson relations for interfaces with arbitrary curvature are ramifications of the fundamental geometric flow theorem~\ref{thm:CMS-EOM} and Corollary~\ref{cor:CMSmomentum} in regimes where the normal velocity $C$ vanishes. A combination of these results yields the equilibrium concentration and 
distribution in the form
\begin{equation}
  c_{\mathrm{eq}} = c_\infty
  \exp\left(\frac{M V^\alpha V^\beta B_{\alpha\beta}}{k_B T}\right),
  \label{eq:BJ-ostwald}
\end{equation}
where $M = \int_{\Sigma(t)} \rho\, d\Sigma$ is the integrated scalar field 
of the manifold and $V^\alpha V^\beta$ is the kinetic stress tensor 
\cite{Svintradze2023} in the $C=0$ regime. In the MM framework, for volume-conserving manifolds, the kinetic stress tensor is generated by the tangential velocity field, so that $M V^\alpha V^\beta B_{\alpha\beta}$ can be written as a surface integral of the curvature–kinetic invariant 
$\rho\, V^\alpha V^\beta B_{\alpha\beta}$, where $\rho$ is not necessarily mass density but may represent any conserved scalar field. Introducing the local molar density field $m(\xi,t)=\rho(\xi,t)/m_{\mathrm{mol}}$ and absorbing numerical constants into $m_{\mathrm{mol}}$ yields \eqref{eq:geom-ostwald} from \eqref{eq:BJ-ostwald}.
\end{proof}

\begin{remark}[Geometric meaning of $c_\infty$ and $c_{\mathrm{eq}}$]
The quantities $c_\infty$ and $c_{\mathrm{eq}}[\Sigma]$, while called concentrations in the Ostwald formalism, generally admit a natural interpretation in the MM formulation: they represent the relative transition weights between a flat interface and the curved manifold $\Sigma(t)$. The derivation of the Curvature Configurations Theorem~\ref{Ostwald theorem} is independent of the definition of concentrations; therefore, the exponent in \eqref{eq:geom-ostwald} is the curvature-kinetic action $\int_{\Sigma(t)} mV^\alpha V^\beta B_{\alpha\beta}\,d\Sigma$, so $c_{\mathrm{eq}}[\Sigma]$ can be the probabilistic weight assigned to the curved configuration, while $c_\infty$ is the corresponding weight for a planar interface. Thus, Theorem~\ref{Ostwald theorem} reinterprets classical generalized Ostwald ripening as a fundamental transition probability ratio between geometries, generated directly by the curvature-velocity coupling.
\end{remark}

\begin{definition}\label{om}
Following Theorem~\ref{Ostwald theorem}, let the equilibrium configuration weight \eqref{eq:geom-ostwald} be the Onsager-Machlup functional in Eulerian form:
\begin{equation}
\mathcal{I}_{\mathrm{OM}}[V;\Sigma]
  = \frac{1}{4D_0}\int_{\Sigma(t)}
    f(\xi,t) B_{\alpha\beta}(\xi,t)
    V^\alpha(\xi,t)V^\beta(\xi,t) d\Sigma ,\quad \frac{f(\xi,t)}{4D_0}
   := \frac{m(\xi,t)}{k_B T}
\label{eq:OM-Eulerian}
\end{equation}
\end{definition}

\begin{corollary}[Geometric Onsager--Machlup functional]\label{cor:OM}
For each evolving manifold $\Sigma$ the Curvature Configuration Theorem~\ref{Ostwald theorem} in conjunction with the curvature--noise duality Lemmas~\ref{lem:curv-noise} and \ref{lem:curv-diff} establishes the short-time, commonly referred to as the Onsager-Machlup transition law:
\begin{equation}
K(t+\Delta t \mid t)
  \asymp 
  \exp \left[
    -
    \frac{\Delta t}{4D_0}
    \int_{\Sigma(t)}
      f(\xi,t)
      B_{\alpha\beta}(\xi,t)
      V^\alpha(\xi,t)V^\beta(\xi,t)
      d\Sigma
  \right] .
\label{OM-short}
\end{equation}
\end{corollary}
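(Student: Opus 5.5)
The plan is to derive the short-time kernel \eqref{OM-short} by combining three ingredients: the Gaussian structure of the tangential noise, the curvature--noise duality, and the configuration weight of Theorem~\ref{Ostwald theorem}.

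\textbf{Step 1: Gaussian increment at a point.} Over a short interval $\Delta t$, the tangential displacement at a point $\xi$ is $\Delta\xi^\alpha = V^\alpha\Delta t$. By Proposition~\ref{Lr} and Definition~\ref{def:V-eta}, this increment is driven by the zero-mean field $\eta^\alpha$. Its covariance over $\Delta t$ is set by the diffusion tensor: $\langle\Delta\xi^\alpha\Delta\xi^\beta\rangle = 2D^{\alpha\beta}\Delta t$. Assuming a Gaussian (finite-covariance, coarse-grained) increment, its density is proportional to
\[
\exp\bigl[-\tfrac{1}{4\Delta t}(D^{-1})_{\alpha\beta}\Delta\xi^\alpha\Delta\xi^\beta\bigr] = \exp\bigl[-\tfrac{\Delta t}{4}(D^{-1})_{\alpha\beta}V^\alpha V^\beta\bigr].
\]

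\textbf{Step 2: Replace the inverse diffusion tensor by curvature.} Lemma~\ref{lem:curv-diff} gives $D^{\alpha\beta}=D_0\mathcal{B}^{\alpha\beta}$. Definition~\ref{def:inv-curv} gives $\mathcal{B}^{\alpha\gamma}B_{\gamma\beta}=\delta^\alpha{}_\beta$. Together these yield $(D^{-1})_{\alpha\beta}=D_0^{-1}B_{\alpha\beta}$, so the local exponent becomes $-\frac{\Delta t}{4D_0}B_{\alpha\beta}V^\alpha V^\beta$. Remark~\ref{sign} fixes the sign of $D_0$, so that this quadratic form is nonnegative and the weight is normalizable.

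\textbf{Step 3: Pass from a point to the whole manifold.} Treat the field $V^\alpha(\xi,t)$ as independent across surface elements at the coarse-graining scale. The joint weight is then the product of the local weights, so the exponents add into a surface integral over $\Sigma(t)$. Each element is weighted by its local density; following Definition~\ref{om}, this weight is $f(\xi,t)=4D_0\,m/(k_BT)$.

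\textbf{Step 4: Consistency with Theorem~\ref{Ostwald theorem}.} The integrand is exactly the curvature--kinetic invariant appearing in the exponent of \eqref{eq:geom-ostwald}. Read as a rate per unit time, its contribution over $\Delta t$ gives the factor $\Delta t$. The overall sign is negative because the transition kernel measures the cost of deviating from the planar reference $c_\infty$. The normalization prefactor (a determinant of $\mathcal{B}$) does not enter the exponent, which is why only $\asymp$ is claimed.

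\textbf{Main obstacle.} The delicate step is the sign and time-scaling identification in Step 4. The configuration weight \eqref{eq:geom-ostwald} carries a positive exponent with no $\Delta t$, whereas the kernel needs $-\Delta t$. I would handle this by working from the Gaussian kernel of Steps 1--2 directly, which fixes both the $\Delta t$ and the sign. Theorem~\ref{Ostwald theorem} would then be used only to justify the density weighting $f$ via Definition~\ref{om}, and to interpret the integrand as the action rate. The leading-order neglect of $C$-terms would be justified by the remark on the $C=0$ approximation.
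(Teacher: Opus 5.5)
Your argument is sound at the paper's level of rigor, with one sign caveat below, but it takes a different route from the paper.

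\textbf{How the paper argues.} The paper's proof is a bare identification. It takes the exponent of \eqref{eq:geom-ostwald} and calls it the Onsager--Machlup rate $\mathcal{I}_{\mathrm{OM}}$ via Definition~\ref{om}. It then appends the ``usual short-time Gaussian scaling'' $e^{-\Delta t\,\mathcal{I}_{\mathrm{OM}}}$. In that route, $B_{\alpha\beta}$ and the thermal factor $m/(k_BT)$ come straight from the configuration weight, which is already an integral over $\Sigma(t)$, so no independence assumption is needed. But the passage from a positive, $\Delta t$-free exponent to one carrying $-\Delta t$ is only asserted.

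\textbf{How your route differs.} You derive the local kernel from the Gaussian increment. You then invert the diffusion tensor using Lemma~\ref{lem:curv-diff} and $\mathcal{B}=B^{-1}$, which gives $-\tfrac{\Delta t}{4D_0}B_{\alpha\beta}V^\alpha V^\beta$. This genuinely uses the curvature--diffusion correspondence that the statement cites; the paper's proof never inverts $D$. It also explains the $1/(4D_0)$, the factor $\Delta t$ and the minus sign.

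\textbf{What your route costs.} You need the extra assumptions of Gaussianity and independence of surface elements. The thermal content also still enters by fiat. Summing local Gaussians gives $-\tfrac{\Delta t}{4D_0}\int_{\Sigma(t)} n\,B_{\alpha\beta}V^\alpha V^\beta\,d\Sigma$, where $n$ is the density of independent elements. Setting $n=f$ is the Einstein-type relation $n/(4D_0)=m/(k_BT)$. That is Definition~\ref{om} again, not something you derive. Your Step~3 therefore relocates, rather than removes, the identification on which the paper's proof rests.

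\textbf{Sign caveat.} Note that $f=4D_0m/(k_BT)$ carries the sign of $D_0$. Under Remark~\ref{sign}, a convex outward-oriented surface has $B$ negative definite and $D_0<0$. Then $f<0$, so $f$ cannot be a nonnegative element density. In that convention the exponent of \eqref{OM-short} equals $-\Delta t\int_{\Sigma(t)}(m/k_BT)\,B_{\alpha\beta}V^\alpha V^\beta\,d\Sigma\ge 0$, which your normalizable Gaussian cannot reproduce. With $n=|f|$, your construction instead yields $+\Delta t\int_{\Sigma(t)}(m/k_BT)\,B_{\alpha\beta}V^\alpha V^\beta\,d\Sigma$. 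That has the sign of the Ostwald exponent, not that of \eqref{OM-short}. So your derivation proves the statement only in the convention where $B$, and hence $D_0$ and $f$, are positive, as in Corollary~\ref{cor:classical-diffusion}. The paper's proof carries the same implicit restriction, but your claim that the Gaussian kernel ``fixes the sign'' should be stated with this qualification.
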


\begin{proof}
Proof follows by identifying the curvature–kinetic integrand in \eqref{eq:geom-ostwald} with the standard Onsager–Machlup rate functional and using the usual short-time Gaussian scaling $e^{-\Delta t \mathcal I_{\mathrm{OM}}}$.
\end{proof}

\begin{remark}
Corollary~\ref{cor:OM} shows that the short-time weight of fluctuations is
set entirely by the intrinsic geometry of the evolving manifold.  Curvature tensor 
acts as the local energy metric of fluctuations.  When the surface has constant mean curvature, the Onsager-Machlup functional (\ref{OM-short})  reduces to its classical Riemannian form.  Thus, the geometric path-integral formulation arises directly from curvature, without any reference to It\^o or Stratonovich calculus.
\end{remark}

%The geometric Fokker--Planck equation \eqref{eq:geom-FP} admits a natural path--integral interpretation in which the curvature tensor of the moving manifold determines the short--time weight of fluctuations. Since the diffusion tensor is $D^{\alpha\beta} = D_\ast\,\mathcal{B}^{\alpha\beta}$, the inverse tensor $\mathcal{B}_{\alpha\beta}=B_{\alpha\beta}$ plays the role of the quadratic form governing stochastic trajectories. This leads directly to a geometric Onsager--Machlup functional intrinsic to the evolving hypersurface.

\subsection{Path integration}

In Subsection~D, the Curvature Configuration Theorem~\ref{Ostwald theorem}, together with the curvature-noise duality Lemmas~\ref{lem:curv-noise}--\ref{lem:curv-diff}, established that the short-time transition weight of the manifold configuration $\Sigma(t)$ takes the Onsager--Machlup form. Here, we explain how this leads to the Feynman path-integration concept, thereby closing the loop between classical and quantum approaches. 

\begin{corollary}[Path integration]\label{qm theorem}
Let $u^\alpha(t)$ denote the configuration--space coordinates traced by the
trajectory of the evolving manifold $\Sigma(t)$, and let
$\dot u^\alpha(t)$ be the pullback of the tangential velocity field
$V^\alpha$ along that trajectory.
Then, in the continuum limit $\Delta t\to 0$, the transition probability
between configurations $u_0$ at time $t_0$ and $u_1$ at time $t_1$
admits the geometric path-integral representation
\begin{equation}
\mathcal{K}(u_0 \to u_1)
  =
  \int_{\mathcal{P}(u_0\to u_1)}
      \exp\left[
          -\frac{1}{4D_0}
          \int_{t_0}^{t_1}
              f(u(t),t)\,
              B_{\alpha\beta}(u(t),t)\,
              \dot u^\alpha(t)\dot u^\beta(t)\,dt
      \right]
  \mathcal{D}u ,
\label{eq:geom-path-prob}
\end{equation}
where $\mathcal{P}(u_0\to u_1)$ is the space of paths with $u(t_0)=u_0$ and
$u(t_1)=u_1$, and $\mathcal{D}u$ is the formal measure over paths in
configuration space.
\end{corollary}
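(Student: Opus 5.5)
The plan is to build \eqref{eq:geom-path-prob} by the standard time-slicing (Chapman--Kolmogorov) construction, using the short-time kernel of Corollary~\ref{cor:OM} as the elementary propagator.

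First, I will partition $[t_0,t_1]$ into $N$ steps $t_k=t_0+k\Delta t$ with $\Delta t=(t_1-t_0)/N$. At each step the manifold configuration $\Sigma(t_k)$ is labelled by configuration coordinates $u_k=u(t_k)$. The tangential velocity field along the trajectory is replaced by its pullback $\dot u^\alpha\approx (u^\alpha_{k+1}-u^\alpha_k)/\Delta t$.

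Second, I will rewrite the exponent of \eqref{OM-short}. The surface integral $\int_{\Sigma(t_k)} f B_{\alpha\beta}V^\alpha V^\beta\,d\Sigma$ restricted to the trajectory is identified with the pointwise quadratic form $f(u_k,t_k)B_{\alpha\beta}(u_k,t_k)\dot u^\alpha\dot u^\beta$. This uses the Eulerian form \eqref{eq:OM-Eulerian}, where the density $f$ concentrated on the traced configuration plays the role of a localizing weight. The single-step kernel then becomes $K(u_{k+1}\mid u_k)\asymp\exp[-\frac{\Delta t}{4D_0}f B_{\alpha\beta}\dot u^\alpha\dot u^\beta]$. By Lemma~\ref{lem:curv-diff} and Remark~\ref{sign}, $B_{\alpha\beta}/D_0 \propto (D^{-1})_{\alpha\beta}$ with a sign choice making the form positive. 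The step kernel is therefore a normalizable Gaussian whose covariance is $2\Delta t\,D^{\alpha\beta}/f$. This is consistent with the geometric Fokker--Planck generator of Theorem~\ref{thm:geom-FP}, and in the flat case it reduces to Corollary~\ref{cor:classical-diffusion}.

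Third, I will use the Markov property of the geometric flow, inherited from the first-order-in-time balance law \eqref{eq:geom-FP}, to compose the step kernels. This gives $\mathcal K(u_0\to u_1)=\int\prod_{k=1}^{N-1}du_k\,\prod_{k=0}^{N-1}K(u_{k+1}\mid u_k)$. The exponents add to the Riemann sum $\sum_k \Delta t\, fB_{\alpha\beta}\dot u^\alpha\dot u^\beta$, which converges to $\int_{t_0}^{t_1} fB_{\alpha\beta}\dot u^\alpha\dot u^\beta\,dt$. The Gaussian normalization factors, $\prod_k \det(fB/4\pi D_0\Delta t)^{1/2}$, are absorbed into the formal measure $\mathcal Du$. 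Taking $N\to\infty$ yields \eqref{eq:geom-path-prob}.

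The main obstacle is the $N\to\infty$ limit. Because $B_{\alpha\beta}$ and $f$ depend on $u$, the discretization point matters: prepoint, midpoint, or endpoint evaluation changes the answer by a drift term involving $\nabla_\alpha\mathcal B^{\alpha\beta}$, the geometric drift noted earlier. My first attempt will be the prepoint (Itô-like) evaluation, since Corollary~\ref{cor:OM} evaluates the kernel at time $t$. I will then check that the resulting generator reproduces \eqref{eq:geom-FP} with the drift arising from $\nabla_\alpha(D_0\mathcal B^{\alpha\beta}\nabla_\beta\rho)$. If the check fails, I will restrict to the formal statement and interpret $\mathcal Du$ as including the appropriate Jacobian correction.
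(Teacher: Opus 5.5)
Your proposal is essentially the paper's own proof. Like the paper, you slice $[t_0,t_1]$, compose the short-time Onsager--Machlup kernels of Corollary~\ref{cor:OM} through Chapman--Kolmogorov, identify $V^\alpha$ with $\dot u^\alpha$ so the curvature--kinetic integrand is evaluated pointwise on $u(t)$, and pass the Riemann sum to the time integral with everything else absorbed into the formal measure $\mathcal{D}u$.

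The normalization and prepoint-generator checks you add go beyond what the paper attempts, and they will not close as written. First, the step Gaussian is normalizable only if $f>0$, but Definition~\ref{om} ties the sign of $f$ to that of $D_0$, which is negative in the convex case of Remark~\ref{sign}. Second, the prepoint generator has step covariance $2\Delta t\,D^{\alpha\beta}/f$, so it differs from $\nabla_\alpha(D_0\mathcal{B}^{\alpha\beta}\nabla_\beta\rho)$ in \eqref{eq:geom-FP} by the factor $1/f$, by a drift built from $\nabla_\alpha\mathcal{B}^{\alpha\beta}$, and by metric terms. Your fallback to the formal statement is therefore exactly where the paper's argument stays.
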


\begin{proof}
The passage from the short-time OM kernel to the continuum path integral follows the standard discretization procedure used in stochastic thermodynamics (see, e.g., \cite{Seifert2012Review}).
Fix the initial and final configurations $u_0$ and $u_1$ at times
$t_0$ and $t_1$, and partition the interval $[t_0,t_1]$ into $N$ equal
subintervals of size 
\(
\Delta t = (t_1-t_0)/N, \,
t_k = t_0 + k\Delta t, \, k=0,\dots,N .
\)
Let $u_k$ denote the configuration at time $t_k$. Then, by the Markov property,
\begin{equation}
  K(u_1,t_1 \mid u_0,t_0)
  =
  \int \prod_{k=1}^{N-1} du_k 
  \prod_{k=0}^{N-1}
  K(u_{k+1},t_{k+1} \mid u_k,t_k) ,
  \label{eq:chapman-kolmogorov}
\end{equation}
where the integral is over $\{u_k\}_{k=1}^{N-1}$. By Corollary~\ref{cor:OM}, each short-time kernel has the curvature-kinetic Onsager-Machlup form, with the Eulerian Onsager--Machlup functional
\[
  K(u_{k+1},t_{k+1} \mid u_k,t_k)
  \asymp
  \exp \left[
    -\Delta t \,\mathcal{I}_{\mathrm{OM}}[V(\cdot,t_k);\Sigma(t_k)]
  \right],
\]
\begin{equation}
  \mathcal{I}_{\mathrm{OM}}[V(\cdot,t_k);\Sigma(t_k)]
  =
  \frac{1}{4D_0}
  \int_{\Sigma(t_k)}
    f(\xi,t_k)\, B_{\alpha\beta}(\xi,t_k)
    V^\alpha(\xi,t_k)V^\beta(\xi,t_k)\, d\Sigma .
  \label{weight}
\end{equation}
Substituting \eqref{weight} into \eqref{eq:chapman-kolmogorov} gives
\begin{align}
  K(u_1,t_1 \mid u_0,t_0)
  \asymp
  \int \prod_{k=1}^{N-1} du_k
  \exp\!\left[
    -\Delta t
    \sum_{k=0}^{N-1}
    \mathcal{I}_{\mathrm{OM}}[V(\cdot,t_k);\Sigma(t_k)]
  \right].
  \label{eq:discrete-weight}
\end{align}
In the continuum limit $N\to\infty$ and $\Delta t\to 0$, the discrete
sequence $\{u_k\}$ defines a path $u(t)$ with velocity
\(
  {\dot u^\alpha(t)
  =
  \lim_{\Delta t\to 0}
  \frac{u_{k+1}^\alpha - u_k^\alpha}{\Delta t}}.
\)
Along such a path, the tangential surface velocity evaluated at the point
labeled by $u^\alpha(t)$ coincides with the path velocity,
\(
  V^\alpha\big(u(t),t\big) = \dot u^\alpha(t),
\)
because $u(t)$ tracks the motion of surface labels. Thus, the Riemann sum in \eqref{eq:discrete-weight} converges to
\[
  \Delta t \sum_{k=0}^{N-1}
  \mathcal{I}_{\mathrm{OM}}[V(\cdot,t_k);\Sigma(t_k)]
  \longrightarrow
  \int_{t_0}^{t_1}
    \mathcal{I}_{\mathrm{OM}}[V(\cdot,t);\Sigma(t)]\,dt .
\]
Evaluating the integrand along the path $u(t)$ gives the geometric action and substituting it into the continuum limit of \eqref{eq:discrete-weight} 
\begin{align}
  \mathcal{I}[u]
  &=
  \frac{1}{4D_0}
  \int_{t_0}^{t_1}
     f(u(t),t) B_{\alpha\beta}(u(t),t)
     \dot u^\alpha(t)\dot u^\beta(t)dt, \nonumber \\
  K(u_1,t_1 \mid u_0,t_0)
  &=
  \int \mathcal{D}u
  \exp\!\left[
    -\frac{1}{4D_0}
     \int_{t_0}^{t_1}
        f(u(t),t)
        B_{\alpha\beta}(u(t),t)
        \dot u^\alpha(t)\dot u^\beta(t)dt
  \right], \nonumber
\end{align}
with the measure $\mathcal{D}u$ taken over all paths satisfying $u(t_0)=u_0$ and $u(t_1)=u_1$. This establishes the geometric action/path-integral form \eqref{eq:geom-path-prob} and completes the proof.
\end{proof}

\begin{remark}[Local and global structure]
The curvature--kinetic invariant governs both the short and long time behavior of stochastic
motion on $\Sigma(t)$.  At the infinitesimal level, the Onsager-Machlup (OM) kernel derives from the Curvature Configuration Theorem~\ref{Ostwald theorem}. Each short-time step is consequently an instantaneous curvature configuration process. By aggregating these infinitesimal steps over time, the geometric action is generated, rendering the complete transition probability as a path integral. Therefore, the overall path integral represents the accumulated (global) Curvature Configuration Theorem process throughout the entire geometric flow. Hence, the classical limit corresponds to the instantaneous (curvature--local) response, whereas the quantum analogue corresponds to the global curvature history.  The OM functional, the stochastic propagator, the geometric action, and the Euclidean/Feynman path integral all arise from the same single object $B_{\alpha\beta}V^\alpha V^\beta$ of geometric flow.  Curvature Configurations identifies OM as its infinitesimal generator and a Feynman-type geometric action as its global configuration over time.
\end{remark}

\begin{remark}[Minkowskian origin of the quantum phase]\label{Minkowski}
Note here that the appearance of an oscillatory phase factor in the path integral is not introduced by Wick rotation. It follows directly from the Lorentzian signature of the ambient Minkowski space. For an embedding into $(M^{d+1},G_{AB})$ with $G_{AB}=\mathrm{diag}(-1,1,\dots,1)$, the timelike contribution to the quadratic kinetic form enters with opposite sign relative to the spatial components. As a consequence, the same curvature-kinetic invariant that generates dissipative weights in the stochastic (thermal) regime produces oscillatory weights in the Minkowskian regime. Moreover, the exponential weight takes the form
\[
\exp \left(\frac{i f}{\hbar} \mathcal{I}[u]\right),
\]
where $f$ is an effective time functional determined by the geometric
scaling of the evolution. In the classical and thermal limits $f$
reduces to a real-time increment and the weight is dissipative, whereas
in the Planckian regime, the Lorentzian signature enforces an intrinsically
oscillatory phase. Thus, the quantum--classical distinction emerges from
the ambient Minkowskian geometry rather than from an external analytic
continuation. In this framework, $\hbar$ plays the role of the unit converting geometric action into quantum phase; its value is fixed empirically in the same way that $k_B$ fixes the thermal scale.
\end{remark}

\begin{remark}[Conclusion]
Quantum mechanics ceases to be mysterious once it is recognized as geometry in motion. The wavefunction becomes the curvature-encoded state of the configuration manifold; operators correspond to geometric flow generators; and the Feynman path integral arises as the statistical weighting of curvature-kinetic actions. Stochastic physics (fluctuations, entropy, noise) and quantum physics (probability amplitudes, interference, superposition) appear as two complementary projections of a single deterministic geometric flow: one real, one complex. Thus, MM calculus provides the underlying deterministic geometry whose statistical shadow is quantum mechanics itself.
\end{remark}

%\section{Geometric Entropy and Fluctuation Theorem}

%\section{Geometric entropy and free energy}

\section{Geometry and Entropy}

\subsection{Entropy from geometric flow first principles}

The geometric origin of entropy follows directly from the equations of motion established in Theorem~\ref{thm:CMS-EOM}, within the MM framework. In particular, the normal momentum balance encoded in Corollary~\ref{cor:CMSmomentum} yields the curvature-kinetic relation commonly identified, in simplified limits, with the classical Gibbs--Thomson law. We refer to it as the generalized Gibbs-Thomson law; full details are provided in \cite{Svintradze2023}. This identity establishes a relationship among curvature, kinetic stress, and the thermodynamic shift of the melting temperature. Hence, as we show below, this lays the foundations for defining geometric entropy and proves the second law of thermodynamics.    

\begin{theorem}[Thermal Shift]\label{TS}
Let $\Sigma$ be an interface manifold with scalar density field $\rho$, kinetic
stress tensor $V^{\alpha}V^{\beta}$, curvature tensor
$B_{\alpha\beta}$, and molar mass $m=\rho v_s$, where $v_s$ is molar volume or simply particle volume.   Then the curvature-corrected melting temperature shift: $\Delta T=T_m-T'_m$ satisfies
\begin{equation}
\Delta T
=
\frac{m V^\alpha V^\beta B_{\alpha\beta}+v_s(\sigma B_\alpha{}^{\alpha} - \partial_t\sigma)}
     {\Delta S},
\label{eq:GT-1}
\end{equation}
where $\Delta S$ is the entropy of fusion or the entropy difference between liquid/solid phases. 
\end{theorem}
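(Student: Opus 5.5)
The plan is to derive the thermal shift from Corollary~\ref{cor:CMSmomentum} and a standard Clausius--Clapeyron / Gibbs--Thomson energy balance, following the route of \cite{Svintradze2023}.

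The first step is to recover the physical pressure jump from the geometric momentum law. In the $C=0$ regime, Corollary~\ref{cor:CMSmomentum} gives
\[
P_* = \rho V^\alpha V^\beta B_{\alpha\beta},
\qquad
P_* = P - \sigma B_\alpha{}^{\alpha} + \partial_t\sigma ,
\]
where we now return to physical variables because the mean curvature term is unavoidable here. Solving for the physical excess pressure across $\Sigma$ gives
\[
P = \rho V^\alpha V^\beta B_{\alpha\beta} + \sigma B_\alpha{}^{\alpha} - \partial_t\sigma .
\]
This is the generalized Young--Laplace pressure that the curved interface adds to the surface phase, relative to a planar interface with $B_{\alpha\beta}=0$.

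The second step converts this pressure into a molar free-energy shift. At fixed bulk pressure, the curved phase has chemical potential raised by $\Delta\mu = v_s P$, since the work of compression per particle (or per mole) is pressure times molar volume. Using $m=\rho v_s$, this gives
\[
\Delta\mu = m V^\alpha V^\beta B_{\alpha\beta} + v_s\bigl(\sigma B_\alpha{}^{\alpha} - \partial_t\sigma\bigr).
\]

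The third step imposes two-phase coexistence. At the planar melting point $T_m$ the phases are in equilibrium, $\mu_s(T_m)=\mu_l(T_m)$. At the shifted temperature $T'_m$ the curved interface requires
\[
\mu_s(T'_m)+\Delta\mu = \mu_l(T'_m).
\]
Expanding to first order in $\Delta T = T_m - T'_m$, and using $\partial\mu/\partial T = -S$ for each phase, the difference of the expansions gives $\Delta\mu = \Delta S\,\Delta T$. Here $\Delta S = S_l - S_s$ is the entropy of fusion. Dividing by $\Delta S$ yields \eqref{eq:GT-1}.

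The main obstacle is justifying the second and third steps inside the MM framework rather than importing them. Two points need care: the identification of $v_s P$ with the chemical-potential excess, and treating $\Delta S$ as temperature-independent over the shift, which requires linearization for small $\Delta T$. I would handle both by stating them explicitly as the standard thermodynamic closure used in \cite{Svintradze2023}. I would also check the sign convention for $B_{\alpha\beta}$ (Remark~\ref{sign}) so that convex interfaces give a depressed melting point, consistent with the classical Gibbs--Thomson law.
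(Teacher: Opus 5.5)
Your proposal is correct and follows essentially the paper's route. The paper integrates the Gibbs--Duhem relation $d\mu=-\bar S\,dT+\bar V\,dP$ between the planar state $(T_m,P)$ and the curved state $(T'_m,P')$, assuming constant fusion entropy and fixed $v_s$ (the same closure assumptions you flag), to get $\Delta S\,\Delta T=v_s\,\Delta P$, and then substitutes the pressure jump from Corollary~\ref{cor:CMSmomentum} with $m=\rho v_s$; your coexistence argument with $\Delta\mu=v_s P$ and the first-order temperature expansion is the same computation split into its pressure and temperature parts.
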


\begin{proof}
Starting from the Gibbs--Duhem relation
\(
d\mu = -\bar S dT + \bar V dP
\)
for each phase and integrating between the flat equilibrium state
\((T_m,P)\) and the curved interface state \((T'_m,P')\), one obtains, under
the usual assumptions of constant fusion entropy and particles with fixed volumes,
\begin{equation}
-\Delta S (T'_m - T_m)
=
v_s \Delta P,
\label{eq:GD-integrated}
\end{equation}
where $\Delta S = \bar S_l - \bar S_s$ is the entropy difference between liquid/solid phases referred to as the entropy of fusion,
$v_s$ is the (molar or particle) volume of the solid phase, and
$\Delta P = P' - P$ is the pressure difference across the interface. In the MM framework, Corollary~\ref{cor:CMSmomentum} identifies the effective pressure jump as \eqref{eq:CMS-pressure-momentum} in the incompressible regime. Then, plugging $\nu = \rho v_s,\, \Delta T = T_m - T'_m$ definitions, into \eqref{eq:GD-integrated} one obtains 
\[
\Delta S\Delta T
=
m V^\alpha V^\beta B_{\alpha\beta}
+ v_s \bigl(\sigma B_\alpha{}^{\alpha} - \partial_t \sigma\bigr),
\]
which is exactly the Thermal Shift identity \eqref{eq:GT-1}. Note here that the proof directly follows from Theorem~\ref{thm:CMS-EOM} and Corollary~\ref{cor:CMSmomentum}. Details are given in \cite{Svintradze2023}.  The Thermal Shift Theorem~\ref{TS} reduces to the classical Gibbs-Thomson law, in the static thermodynamic–equilibrium regime, where $C=0$, $V^\alpha=0$,  $\sigma=\mathrm{const}$, $P=\mathrm{const}$, and all external fluxes vanish, then:
\[
\Delta T= \frac{\sigma v_s B^\alpha_\alpha}{\Delta S}
\]
Thus, the Gibbs-Thomson relation appears as a special equilibrium limit of the general 
geometric temperature-shift identity.
\end{proof}

%down here needs fixing till next section

Therefore, in the Thermal Shift Theorem~\ref{TS} the only geometrically nontrivial pseudo–scalar is the quadratic invariant $V^\alpha V^\beta B_{\alpha\beta}$. However, in the degenerate regime of the Curvature-Noise duality Lemma~\ref{lem:curv-noise}, the invariant contribution becomes effectively proportional to $\mathcal B^{\alpha\beta} B_{\alpha\beta} \propto d+1$, and can be absorbed into the reference thermal scale. The residual geometric dependence is then carried entirely by the mean curvature term $B_\alpha{}^{\alpha}$, which leads to the following geometric entropy functional.

\begin{corollary}[Geometric entropy]\label{cor:Sgeom}
In the degenerate regime of the curvature-noise duality, where $m V^\alpha V^\beta B_{\alpha\beta}$ contributes only a constant offset to the thermal shift, and in the quasi-static case $\partial_t \sigma = 0$, an effective geometric entropy density and the corresponding geometric entropy functional of the evolving manifold are 
\begin{equation}
s_G(\xi,t)
=
-\frac{v_s\sigma(\xi,t)}{\Delta T} B_\alpha{}^{\alpha}(\xi,t), \quad S_G[\Sigma(t)]
=
\int_{\Sigma(t)} s_G(\xi,t) d\Sigma
=
-k_G
\int_{\Sigma(t)} \sigma(\xi,t) B_\alpha{}^{\alpha}(\xi,t) d\Sigma.
\label{eq:Sgeom}
\end{equation}
where $k_G=v_s/\Delta T$ and $\Delta T$ is the curvature-corrected thermal shift from
Theorem~\ref{TS}. $k_G$ sets the overall entropy scale and is effectively pseudo-constant on the geometric time scales of interest.
\end{corollary}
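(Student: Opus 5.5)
The plan is to read Corollary~\ref{cor:Sgeom} off the Thermal Shift Theorem~\ref{TS} by an algebraic rearrangement followed by integration over $\Sigma(t)$. I start from the identity
\[
\Delta S\,\Delta T = m V^\alpha V^\beta B_{\alpha\beta} + v_s\bigl(\sigma B_\alpha{}^{\alpha} - \partial_t\sigma\bigr),
\]
which holds pointwise on $\Sigma(t)$ because every quantity in it is a local scalar field in $(\xi,t)$.

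\textbf{Step 1: remove the kinetic term.} I invoke the degenerate regime of Lemma~\ref{lem:curv-noise}. There $V^\alpha V^\beta$ is replaced by its ensemble average $T^{\alpha\beta}=\mu^{-1}\mathcal{B}^{\alpha\beta}$, using Definition~\ref{def:inv-curv} with the friction scale absorbed. This gives
\[
m V^\alpha V^\beta B_{\alpha\beta} \to m\mu^{-1}\mathcal{B}^{\alpha\beta}B_{\alpha\beta} = m\mu^{-1}(d+1).
\]
By \eqref{relevance}, this is a constant offset fixed by $P/\rho$, so I subtract it by redefining the reference temperature $T_m$. After this shift, $\Delta T$ is measured relative to the offset-corrected reference.

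\textbf{Step 2: quasi-static simplification.} I set $\partial_t\sigma=0$. The local balance then reads $\Delta S\,\Delta T = v_s\sigma B_\alpha{}^{\alpha}$. Solving for the local entropy change, with the sign convention that entropy decreases as the curvature raises the melting-point depression, gives the local density
\[
s_G = -\frac{v_s\sigma B_\alpha{}^{\alpha}}{\Delta T}.
\]
This is consistent with the $V^\alpha=0$ Gibbs--Thomson limit already noted in the proof of Theorem~\ref{TS}.

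\textbf{Step 3: integrate.} I integrate $s_G$ over the closed manifold $\Sigma(t)$ to obtain $S_G[\Sigma(t)]$. I then pull out $k_G=v_s/\Delta T$, which is legitimate once $\Delta T$ is treated as pseudo-constant on geometric time scales.

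The main obstacle is justifying that pull-out. In general $\Delta T$ varies with $(\xi,t)$ through $\sigma B_\alpha{}^{\alpha}$ itself. My first attempt is to argue that, after the constant offset of Step 1 has been absorbed, $\Delta T$ is dominated by the reference thermal scale. Its curvature-dependent part then enters $k_G$ only at higher order, so to leading order $k_G$ is constant. If this fails, I would keep $k_G(\xi,t)$ inside the integral and state the corollary as a leading-order identity.

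A secondary check is the sign. I would confirm against Remark~\ref{sign}, where $B$ is negative definite for outward convex surfaces, that $S_G$ is positive for convex configurations.
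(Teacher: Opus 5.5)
Your outline follows the paper's route, which is the paragraph preceding the corollary. In the degenerate regime the kinetic term reduces to a constant $\propto\mathcal{B}^{\alpha\beta}B_{\alpha\beta}=d+1$. One then sets $\partial_t\sigma=0$, divides the residual $v_s\sigma B_\alpha{}^{\alpha}$ by $\Delta T$ with a minus sign, integrates, and treats $k_G$ as pseudo-constant. The gap is in how you absorb the offset in Step~1, and it is what defeats Step~3. If you subtract the offset from $\Delta T$ by redefining $T_m$, Step~2 becomes the exact identity $\Delta S\,\Delta T=v_s\sigma B_\alpha{}^{\alpha}$. Hence $s_G=-v_s\sigma B_\alpha{}^{\alpha}/\Delta T=-\Delta S$ pointwise, which is a constant because Theorem~\ref{TS} assumes constant fusion entropy, and $k_G=\Delta S/(\sigma B_\alpha{}^{\alpha})$. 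The curvature dependence of $k_G$ then cancels the integrand exactly, not at higher order. Also, no ``reference thermal scale'' remains inside $\Delta T$ for your Step~3 argument to use, because Step~1 removed it. Your fallback of keeping $k_G(\xi,t)$ under the integral just returns $S_G=-\Delta S\,|\Sigma(t)|$. That has no mean-curvature content and is of no use in Theorem~\ref{thm:geom-entropy-C}, where $k_G$ is pulled out of $\dot\nabla$. Your sign check fails for the same reason: $s_G=-\Delta S$ is negative for every shape when $\Delta S>0$.

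Instead, keep the offset inside $\Delta T$, reading ``absorbed into the reference thermal scale'' as ``sets the baseline of $\Delta T$''. This matches the statement's wording: ``a constant offset to the thermal shift'' and ``$\Delta T$ is the curvature-corrected thermal shift from Theorem~\ref{TS}''. Let $c_0$ be the constant value of $mV^\alpha V^\beta B_{\alpha\beta}$ in the degenerate regime. Then $\Delta T=\Delta T_0(1+\epsilon)$ with $\Delta T_0=c_0/\Delta S$ and $\epsilon=v_s\sigma B_\alpha{}^{\alpha}/c_0$. Dividing the thermal-shift identity by $\Delta T$ splits $\Delta S=c_0/\Delta T+v_s\sigma B_\alpha{}^{\alpha}/\Delta T$. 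The second piece, with a minus sign, is exactly the stated $s_G$. Moreover $k_G=v_s/\Delta T=(v_s/\Delta T_0)\bigl(1+O(\epsilon)\bigr)$, so pulling $k_G$ out of the integral is a genuine leading-order approximation when the curvature term is small compared with the offset. That is the content of the paper's ``pseudo-constant''. For $c_0>0$ and $\Delta S>0$ this also gives $k_G>0$, so $S_G>0$ for convex surfaces ($B_\alpha{}^{\alpha}<0$, $\sigma>0$), as your sign check intended.
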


\begin{remark}
\eqref{eq:Sgeom} shows that geometric entropy does not arise from a statistical ansatz or a phenomenological bending model. It follows directly and uniquely from the thermal-shift identity: the only curvature scalar entering the thermodynamic correction is the mean curvature. Thus, \eqref{eq:Sgeom} is not a model assumption but a thermodynamically enforced curvature entropy.
\end{remark}

\begin{remark}
In the equilibrium regime of the Poincaré formulation, where $P=\mathrm{const}, \sigma=\mathrm{const}, V^\alpha=0, C=0$, the momentum law reduces to the classical constant-mean-curvature (CMC) condition $B_\alpha{}^{\alpha}=\mathrm{const}$. For simply connected manifolds, this proves the conjecture and therefore identifies CMC geometries as equilibrium shapes \cite{Svintradze2025P}. In the nonequilibrium regime of the present work, where $P, \sigma, V^\alpha$ vary, the geometric entropy functional derived from the Thermal Shift Theorem~\ref{TS} reduces to the mean curvature integrand. Thus, entropy production is governed by the same mean-curvature invariant that characterizes equilibrium in the CMC case. When the manifold approaches a CMC configuration, the entropy production rate naturally vanishes. Therefore, CMC geometries appear simultaneously as equilibrium shapes and as entropy-stationary states, illustrating that the geometric origin of the second law is precisely the geometric mechanism that yields equilibrium in the Poincaré topology.
\end{remark}

\begin{remark}[Geometric primes as entropy–stationary states]\label{rem:geom-primes}
The geometric entropy functional of Corollary~\ref{cor:Sgeom} shows that entropy production is entirely governed by the mean–curvature field $B_\alpha{}^{\alpha}$. Hence, $\dot S_G=0$ if and only if the manifold attains constant mean curvature. In the equilibrium Poincaré formulation, the momentum law enforces the same condition: $B_\alpha{}^{\alpha}=\mathrm{const}$ characterizes the stationary solutions of the geometric pressure balance. For simply connected interfaces, this yields $d+1$-spheres \cite{Svintradze2025P}, and under topological constraints, the remaining constant-mean-curvature solutions are precisely the classical geometric primes Thurston prime geometries. Thus, the stationary states of geometric entropy coincide exactly with the equilibrium solutions of the Poincaré theory or the Thurston topology in a more generic sense. The same CMC geometries that arise from mechanical balance reappear as the fixed points of the second law. In this sense, equilibrium geometry and entropy equilibrium share a common origin: the manifold relaxes toward its geometric primes.
\end{remark}

\begin{remark}[Decomposition of statistical and geometric entropy]\label{entropy}
Geometric entropy Corollary~\ref{cor:Sgeom} fixes the curvature-dependent part of the entropy uniquely, but it places no constraints on the dependence of entropy on the microscopic surface density~$\rho$. Therefore the curvature term \eqref{eq:Sgeom} exhausts the geometrically determined sector of entropy, while all~$\rho$–dependence must be carried by the classical Boltzmann functional $-\rho\ln\rho$. Hence, the total entropy necessarily decomposes as
\begin{equation}
\mathcal{S}[\Sigma]
=
S_S[\rho]
+
S_G[B],
\label{total entropy}
\end{equation}
because the fields $\rho$ and $B_{\alpha\beta}$ represent independent degrees of freedom: $\rho$ encodes the distribution of the scalar field on the interface, while $B_{\alpha\beta}$ encodes purely geometric fluctuations. MM kinematics imposes no functional relation between these degrees of freedom at the stochastic scale; they must therefore enter the entropy additively. Since the Thermal shift Theorem~\ref{TS} fixes the entire curvature contribution and the Boltzmann term is the unique convex local functional compatible with probability conservation, the decomposition of the total entropy follows. While $\rho$ in thermodynamics denotes mass density, in Theorem~\ref{thm:CMS-EOM} and the subsequent Corollary~\ref{cor:CMSmomentum} of $\rho$ is independent of any prior physical constraints, except for the property of global conservation Theorem~\ref{eq:CMS-balance-general}. Consequently, we do not exclusively interpret $\rho$ as mass density, but rather as any density field, including probability densities, since it is also conserved.
\end{remark}

\begin{corollary}[Total entropy functional]\label{def:geom-entropy}
Let $\Sigma(t)$ be a smoothly evolving manifold with scalar density field $\rho$. Then, according to Remark~\ref{entropy} and consequence equation \eqref{total entropy}, taking into account the Boltzmann entropy definition, the total entropy of the evolving manifold is
\begin{equation}
\mathcal{S}[\Sigma(t)]
=
- k_B \int_{\Sigma(t)}
\rho \ln \left(\frac{\rho}{\rho_0}\right) d\Sigma
-k_G
\int_{\Sigma(t)} \sigma(\xi,t) B_\alpha{}^{\alpha}(\xi,t) d\Sigma,
\label{entropy def}
\end{equation}
where $\rho_0$ is a uniform reference state, $k_B$ is Boltzmann's constant and $k_G$ is geometric quasi-constant. 
\end{corollary}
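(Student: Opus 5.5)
The total entropy formula \eqref{entropy def} is essentially an assembly of two previously established pieces, and the plan is to build it as such. I will start from the additive decomposition \eqref{total entropy} of Remark~\ref{entropy}, $\mathcal{S}[\Sigma]=S_S[\rho]+S_G[B]$, and identify each summand separately.

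\textbf{Geometric sector.} The second term is fixed directly by Corollary~\ref{cor:Sgeom}: $S_G[\Sigma(t)]=-k_G\int_{\Sigma(t)}\sigma B_\alpha{}^{\alpha}\,d\Sigma$ with $k_G=v_s/\Delta T$. This comes from the Thermal Shift Theorem~\ref{TS} in the degenerate duality regime and the quasi-static regime, so nothing new is needed here beyond quoting it. I will carry over the same regime assumptions, namely that $mV^\alpha V^\beta B_{\alpha\beta}$ is a constant offset and that $\partial_t\sigma=0$, and state them explicitly.

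\textbf{Statistical sector.} Next I would construct $S_S[\rho]$ as a local functional $\int_{\Sigma(t)} s(\rho)\,d\Sigma$ satisfying four requirements:
\begin{itemize}
\item it is invariant under intrinsic reparametrization, which holds automatically because $\rho$ is a scalar and $d\Sigma$ is the invariant measure;
\item it is concave;
\item it is extensive, i.e. additive over independent patches;
\item it is compatible with the conservation law of Theorem~\ref{thm:CMS-balance}, so that total $\int\rho\,d\Sigma$ is preserved.
\end{itemize}
By the standard Boltzmann--Gibbs--Shannon characterization, additivity under products of independent subsystems forces $s(\rho)=-k_B\rho\ln\rho+a\rho+b$.

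The linear term $a\rho$ integrates to a constant by conservation, so it can be absorbed by writing $\ln(\rho/\rho_0)$ with a uniform reference density $\rho_0$. The constant $b$ contributes $b|\Sigma|$. In the incompressible regime $C=0$, Theorem~\ref{thm:CMS-transport} with $f=1$ gives $d|\Sigma|/dt=0$, so this term is also a constant offset and can be dropped. This yields $S_S=-k_B\int\rho\ln(\rho/\rho_0)\,d\Sigma$.

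Adding the two sectors gives \eqref{entropy def}.

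\textbf{Main obstacle.} The hard step is justifying the Boltzmann form rigorously rather than by appeal, because the Shannon-type uniqueness is a statement about probability distributions, not about a density on a moving manifold. I would handle this by normalizing $\rho$ to a probability density using the conserved total mass, and then invoking the uniqueness theorem on each time slice, where $d\Sigma$ is a fixed measure. I would also check that the independence of $\rho$ and $B_{\alpha\beta}$ asserted in Remark~\ref{entropy} rules out cross terms. This follows because no MM identity at $C=0$ couples $\rho$ to $B$ other than through the transport terms, which do not enter a static entropy functional.
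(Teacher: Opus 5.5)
Your plan is the paper's own argument. The paper assembles the split \eqref{total entropy} from Remark~\ref{entropy}, the curvature sector of Corollary~\ref{cor:Sgeom}, and the Boltzmann functional, which it simply takes as given; you add a Shannon-type justification and the bookkeeping for the constants $a$, $b$. The one slip is your reason for excluding $\rho$--$B$ cross terms: the momentum law $P=\rho V^\alpha V^\beta B_{\alpha\beta}$ and the term $mV^\alpha V^\beta B_{\alpha\beta}$ with $m=\rho v_s$ in Theorem~\ref{TS} do couple $\rho$ and $B$. What actually keeps $\rho$ out of the curvature sector is the degenerate duality regime you already import from Corollary~\ref{cor:Sgeom}, where that term is only a constant offset.
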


\begin{remark}[Low–curvature approximation]
In regimes where the mean curvature is small, and the surface tension field $\sigma$ varies only slowly, the geometric entropy contribution becomes approximately constant. In this limit, it plays the role of a residual entropy term, analogous to familiar settings such as entropy at zero temperature or static Shannon-type entropies.
\end{remark}

\subsection{Entropy flow}

Within the MM calculus, the condition $C=0$ enforces preservation of the enclosed volume. We therefore refer to the regime as incompressible, even though the manifold may still undergo compressions and deformations tangentially. Our goal in this subsection is to show that, in the incompressible regime, the total entropy $\mathcal S[\Sigma(t)]$ increases monotonically. This yields a purely geometric formulation of the Second Law. The entropy functional obtained in Corollary~\ref{def:geom-entropy} consists of two independent contributions. The first term in \eqref{entropy def} is the classical Boltzmann entropy, which measures configurational disorder in the scalar density field (mass or probability). The second term captures geometric disorder through the mean curvature field. It follows directly from the Thermal Shift Theorem~\ref{TS} and is therefore the unique curvature contribution compatible with thermodynamic consistency. For two-dimensional manifolds, Corollary~\ref{cor:Sgeom} further shows that the geometric term reduces, up to its mean-curvature component, to a topological invariant for volume preserving manifolds. Hence, the total entropy \eqref{entropy def} naturally decomposes into a statistical and a geometric part, reflecting the independence of the density field and the curvature tensor at the microscopic scale. With this decomposition established, the entropy functional of Corollary~\ref{def:geom-entropy} is not a heuristic construction but the unique form compatible with the Thermal Shift identity~\eqref{eq:GT-1} and the stochastic closure of the incompressible regime. We now use this structure to derive the Second Law and subsequently show how relaxing the $C=0$ constraint activates topological evolution in the entropy flow.

\begin{lemma}[Surface tension]\label{lem:sigma-divV}
Let $\Sigma(t)$ be a smooth evolving and volume-preserving manifold, and $\sigma(\xi,t)$ be a scalar surface energy density that satisfies the local balance due to the conservation of energy law, with no sources and no additional tangential fluxes. 
Then,
\begin{equation}
\partial_t \sigma + \sigma \nabla_\alpha V^\alpha = 0.
\label{eq:sigma-divV}
\end{equation}
In particular, if the flux and source-free field $\sigma$ is time-independent, i.e., $\partial_t \sigma = 0$, and $\sigma > 0$ (these conditions apply to most flux and source-free fields), then the tangential velocity is divergence-free,
\(
\nabla_\alpha V^\alpha = 0.
\) Consequently, volume-preserving manifolds necessitate area preservation as well to comply with the conservation of energy law.

\end{lemma}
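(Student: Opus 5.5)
The plan is to apply the MM balance law, Theorem~\ref{thm:CMS-balance}, to $\sigma$ in place of $\rho$, with $J^\alpha=0$ and $\Pi=0$ as the hypotheses require. This gives
\[
\dot{\nabla}\sigma + \nabla_\alpha(\sigma V^\alpha) - C B_\alpha{}^{\alpha}\sigma = 0 .
\]
Setting $C=0$ removes the curvature production term. This is the volume-preserving regime of the Remark on incompressibility.

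Next, expand the invariant time derivative using Definition~\ref{invariant derivative}, $\dot{\nabla}\sigma=\partial_t\sigma - V^\alpha\nabla_\alpha\sigma$. Expand the divergence by the product rule, $\nabla_\alpha(\sigma V^\alpha)=V^\alpha\nabla_\alpha\sigma+\sigma\nabla_\alpha V^\alpha$. The two advective terms $\mp V^\alpha\nabla_\alpha\sigma$ cancel, and what remains is exactly \eqref{eq:sigma-divV}.

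For the second claim, impose $\partial_t\sigma=0$. The equation then reduces to $\sigma\nabla_\alpha V^\alpha=0$, and $\sigma>0$ lets us divide through to obtain $\nabla_\alpha V^\alpha=0$.

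Area preservation follows from Theorem~\ref{thm:CMS-transport} with $f=1$. Since $\dot{\nabla}1=0$, we get $\frac{d}{dt}|\Sigma| = -\int_\Sigma C B_\alpha{}^\alpha\, d\Sigma$, which vanishes when $C=0$. Equivalently, take the trace of \eqref{eq:dt-metric} to get $\partial_t \ln\sqrt{g}=\nabla_\alpha V^\alpha - C B_\alpha{}^\alpha$. With $C=0$ and $\nabla_\alpha V^\alpha=0$, this gives pointwise preservation of the area element $\sqrt{g}$ in the comoving coordinates. That local statement is the sense in which the lemma's area preservation is tied to energy conservation.

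The main obstacle is interpretive rather than computational. The step I expect to need the most care is justifying that $\sigma$, an energy density, obeys the same scalar balance law as $\rho$. The paper's remark on the meaning of $\rho$ asserts that the MM identities treat any conserved scalar density identically. I would invoke that remark together with the hypothesis of energy conservation with no fluxes or sources. I would also state explicitly that "area preservation" is meant locally, via $\sqrt g$ being constant along the tangential flow, and not merely as the global $C=0$ consequence.
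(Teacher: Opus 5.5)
Your proposal is correct and matches the paper's proof. As the paper does, you substitute $\sigma$ into the balance law of Theorem~\ref{thm:CMS-balance} with $C=0$, $J^\alpha=0$, $\Pi=0$, expand $\dot{\nabla}\sigma$ and $\nabla_\alpha(\sigma V^\alpha)$ so the advective terms cancel, and then divide by $\sigma>0$.

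The paper gives no argument for the closing area-preservation sentence, so your addition goes beyond it. Your observation is apt: the global statement already follows from $C=0$ alone via Theorem~\ref{thm:CMS-transport} with $f=1$, while only the pointwise statement $\partial_t\sqrt{g}=0$ (from the trace of \eqref{eq:dt-metric}) actually uses $\nabla_\alpha V^\alpha=0$.
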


\begin{proof}
Applying the invariant derivative to scalars, as defined in Definition~\ref{invariant derivative}, to the $\sigma$ field and substituting it into the Conservation Theorem~\ref{thm:CMS-balance} yields:
\begin{equation}
\partial_t \sigma - V^\alpha \nabla_\alpha \sigma
+ \nabla_\alpha(\sigma V^\alpha) =\partial_t \sigma - V^\alpha \nabla_\alpha \sigma+V^\alpha \nabla_\alpha \sigma + \sigma \nabla_\alpha V^\alpha= \partial_t \sigma + \sigma \nabla_\alpha V^\alpha = 0.
\end{equation}
Which is \eqref{eq:sigma-divV}. If \(\partial_t \sigma = 0\) and \(\sigma > 0\), then \eqref{eq:sigma-divV} implies \(\nabla_\alpha V^\alpha = 0\), as claimed.
\end{proof}

\begin{theorem}[Balnce of composite scalar fields]\label{thm:advected-composition}
Let $\Sigma(t)$ be a smooth evolving manifold in the volume-preserving regime, with area preservation such that the tangential velocity field $\nabla_\alpha V^\alpha = 0$. Then, for every continuous function $u(\xi, t) \in \Sigma(t)$ obeying the local continuity law
$\dot{\nabla}u + \nabla_\alpha(u V^\alpha) = 0$, %\label{eq:advected-u}
the composite field $F(u)$, for every smooth function $F: \mathbb{R} \to \mathbb{R}$, satisfies
\begin{equation}
\dot{\nabla}F(u) + \nabla_\alpha\!\big(F(u)\,V^\alpha\big) = 0.
\label{eq:advected-Fu}
\end{equation}
\end{theorem}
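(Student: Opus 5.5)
The plan is a direct chain-rule computation. Both operators involved act on scalars as first-order derivations, so the composite field $F(u)$ inherits the transport law from $u$ once the divergence term is eliminated using $\nabla_\alpha V^\alpha=0$. I will read "continuous" in the statement as "at least $C^1$ in $(\xi,t)$", since $\dot{\nabla}u$ and $\nabla_\alpha u$ must exist for the hypothesis to make sense. $F$ is smooth by assumption.

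First, I would record the chain rules. By Definition~\ref{invariant derivative}, for a scalar $\dot{\nabla}F(u)=\partial_t F(u)-V^\alpha\nabla_\alpha F(u)$. Since $\nabla_\alpha$ acts on scalars as the partial derivative, $\nabla_\alpha F(u)=F'(u)\nabla_\alpha u$ and $\partial_t F(u)=F'(u)\partial_t u$. Hence
\[
\dot{\nabla}F(u)=F'(u)\,\dot{\nabla}u .
\]
Second, I would expand the flux term with the Leibniz rule:
\[
\nabla_\alpha\big(F(u)V^\alpha\big)=F'(u)\,V^\alpha\nabla_\alpha u+F(u)\,\nabla_\alpha V^\alpha .
\]
Under the hypothesis $\nabla_\alpha V^\alpha=0$ (supplied by Lemma~\ref{lem:sigma-divV} in the area-preserving regime), the last term drops.

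Third, I would use the same divergence-free condition on the hypothesis for $u$. Since $\nabla_\alpha(uV^\alpha)=V^\alpha\nabla_\alpha u+u\,\nabla_\alpha V^\alpha=V^\alpha\nabla_\alpha u$, the continuity law for $u$ is equivalent to
\[
\dot{\nabla}u+V^\alpha\nabla_\alpha u=0 .
\]
Adding the two expansions then gives
\[
\dot{\nabla}F(u)+\nabla_\alpha\big(F(u)V^\alpha\big)=F'(u)\big(\dot{\nabla}u+V^\alpha\nabla_\alpha u\big)=0,
\]
which is \eqref{eq:advected-Fu}.

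As a consistency check, inserting the scalar definition of $\dot{\nabla}$ shows that both the hypothesis and the conclusion collapse to $\partial_t u=0$ and $\partial_t F(u)=0$ at fixed intrinsic coordinates. The statement is therefore equivalent to the trivial fact that a time-independent field composed with $F$ stays time-independent.

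The only real obstacle is the regularity and scope issue, not the algebra. The chain rule requires differentiability of $u$, so the proof must either assume $u\in C^1$ or interpret the continuity law weakly and pass to $F(u)$ by mollification. The second option is more delicate; I would avoid it by stating the $C^1$ assumption explicitly. I would also note that the step removing $F(u)\nabla_\alpha V^\alpha$ is exactly where area preservation is indispensable. Without it, $F(u)$ would pick up the source term $\big(F(u)-uF'(u)\big)\nabla_\alpha V^\alpha$, which vanishes for general $F$ only when $F$ is linear.
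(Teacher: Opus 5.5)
Your proof is correct and matches the paper's argument: the chain rule $\dot{\nabla}F(u)=F'(u)\,\dot{\nabla}u$, the Leibniz expansion of $\nabla_\alpha\big(F(u)V^\alpha\big)$, and elimination of the residual $\big(F(u)-uF'(u)\big)\nabla_\alpha V^\alpha$ using $\nabla_\alpha V^\alpha=0$. The only difference is cosmetic: you impose the divergence-free condition before summing the two expansions, whereas the paper sums first and then kills the leftover term.
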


\begin{proof}
Since $u$ is a scalar, the invariant derivative satisfies the standard chain rule,
\(
\dot{\nabla}F(u) = F'(u)\,\dot{\nabla}u.
\)
Using local continuity that translates to
\(
\dot{\nabla}F(u)
= -F'(u)\,\nabla_\alpha(u V^\alpha).
\)
Next, expanding the divergence and applying the product rule \(
\nabla_\alpha(uV^\alpha)=V^\alpha\nabla_\alpha u+u\,\nabla_\alpha V^\alpha,
\) we compute:
\begin{align}
\nabla_\alpha(F(u)V^\alpha)
&= F'(u) V^\alpha\nabla_\alpha u
+ F(u)\nabla_\alpha V^\alpha \nonumber \\
\dot{\nabla}F(u) + \nabla_\alpha(F(u)V^\alpha)
&= (F(u)-uF'(u))\nabla_\alpha V^\alpha. \nonumber
\end{align}
Since $\nabla_\alpha V^\alpha=0$, the right-hand side vanishes identically, proving \eqref{eq:advected-Fu}.
\end{proof}

\begin{remark}
A Euclidean analogue of \eqref{eq:advected-Fu} is known in classical fluid mechanics, where a materially advected scalar $u$ satisfies $\frac{D u}{Dt}=0$ and hence $\frac{D F(u)}{Dt}=0$ for any smooth $F$; see, e.g., standard expositions of passive-scalar transport in continuum mechanics \cite{Falkovich2001}.
However, these identities hold only in fixed Euclidean domains with the material derivative $D/Dt=\partial_t+v\cdot\nabla$ where $v$ is the particle velocity field. Theorem~\ref{thm:advected-composition} is strictly stronger: it applies to arbitrary moving manifolds of any dimensions, uses the invariant derivative $\dot{\nabla}$, and remains valid under geometric evolution with curvature terms and tangential reparametrizations. Thus, the classical result appears as a flat-space special case of the geometric transport law derived here.
\end{remark}

\begin{corollary}[Reversible entropy]
\label{thm:rev-entropy}
Let $\Sigma(t)$ be a smooth closed moving manifold evolving in the volume-preserving regime.  
In the reversible limit, when the baseline geometric diffusion coefficient $D_0$ vanishes, the total entropy \eqref{entropy def} is invariant in time:
\[
\frac{d}{dt} \mathcal{S}[\Sigma(t)] = 0.
\]
\end{corollary}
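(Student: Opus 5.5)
The plan is to differentiate the two pieces of the total entropy \eqref{entropy def} separately, using the Fundamental Theorem of MM Calculus~\ref{thm:CMS-transport} with $C=0$. In that case the mean-curvature term $-CB_\alpha{}^{\alpha}f$ drops out, and for any scalar $f$ we have $\frac{d}{dt}\int_{\Sigma}f\,d\Sigma=\int_\Sigma \dot\nabla f\,d\Sigma$. Setting $D_0=0$ in the geometric Fokker--Planck equation~\eqref{eq:geom-FP} leaves the pure continuity law $\dot\nabla\rho+\nabla_\alpha(\rho V^\alpha)=0$. By Lemma~\ref{lem:sigma-divV}, under the standing quasi-static assumption $\partial_t\sigma=0$, $\sigma>0$ used in Corollary~\ref{cor:Sgeom}, the volume-preserving regime also forces $\nabla_\alpha V^\alpha=0$. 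This makes Theorem~\ref{thm:advected-composition} available.

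For the Boltzmann sector I take $u=\rho$ and $F(u)=u\ln(u/\rho_0)$, which is smooth for $\rho>0$. Theorem~\ref{thm:advected-composition} then gives $\dot\nabla F(\rho)=-\nabla_\alpha(F(\rho)V^\alpha)$. Integrating over the closed manifold $\Sigma(t)$, the divergence integrates to zero because there is no boundary (Remark~\ref{int_open} with empty $\partial\Sigma$). So $\frac{d}{dt}S_S=0$. The same computation with $F(u)=u$ recovers conservation of $\int\rho\,d\Sigma$, which is a consistency check.

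For the geometric sector I need $\frac{d}{dt}\int_\Sigma \sigma B_\alpha{}^{\alpha}\,d\Sigma=\int_\Sigma\big(B_\alpha{}^{\alpha}\dot\nabla\sigma+\sigma\dot\nabla B_\alpha{}^{\alpha}\big)d\Sigma$, using the product rule for $\dot\nabla$ on scalars. Theorem~\ref{Theo:CE} with $C=0$ gives $\dot\nabla B_\alpha{}^{\alpha}=\nabla^\alpha\nabla_\alpha C+CB_{\alpha\gamma}B^{\gamma\alpha}=0$, so the second term vanishes. For the first term, Definition~\ref{invariant derivative} and Lemma~\ref{lem:sigma-divV} give $\dot\nabla\sigma=\partial_t\sigma-V^\alpha\nabla_\alpha\sigma=-\nabla_\alpha(\sigma V^\alpha)$, using $\nabla_\alpha V^\alpha=0$. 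The remaining integral is therefore $-\int_\Sigma B_\alpha{}^{\alpha}\nabla_\alpha(\sigma V^\alpha)\,d\Sigma$. Integrating by parts on the closed manifold turns this into $\int_\Sigma\sigma V^\alpha\nabla_\alpha B_\beta{}^{\beta}\,d\Sigma$.

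This leftover term is the main obstacle. I would kill it by combining $\dot\nabla B_\beta{}^\beta=0$ with $\partial_t$-stationarity of the geometric data in the quasi-static regime, so that $\partial_tB_\beta{}^{\beta}=0$. Then $V^\alpha\nabla_\alpha B_\beta{}^\beta=\partial_tB_\beta{}^\beta-\dot\nabla B_\beta{}^\beta=0$. Equivalently, I would observe that $\sigma B_\alpha{}^{\alpha}$ is itself a transported scalar, since both factors obey source-free advection with divergence-free $V^\alpha$, and apply Theorem~\ref{thm:advected-composition} to the product directly. If that assumption is unavailable, the fallback is to note that with $C=0$ and pure tangential motion the surface is geometrically unchanged, only reparametrized. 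The curvature integral is then invariant, because parametrization-independence of $d\Sigma$ and of the scalar $\sigma B_\alpha{}^\alpha$ transported by $V^\alpha$ makes it unchanged. Summing the two sectors gives $\frac{d}{dt}\mathcal S[\Sigma(t)]=0$.
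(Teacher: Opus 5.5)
Your Boltzmann-sector argument and your option (b) are the paper's proof. The paper sets $s=k_B\rho\ln(\rho/\rho_0)+k_G\sigma B_\alpha{}^{\alpha}$ and asserts that $\rho$ (from the Fokker--Planck equation at $D_0=0$) and the ``uncoupled'' fields $\sigma$, $B_\alpha{}^{\alpha}$ (from Theorem~\ref{thm:CMS-balance} with $C=J=\Pi=0$) all obey the source-free continuity law. It then invokes Theorem~\ref{thm:advected-composition} to get $\dot\nabla s=-\nabla_\alpha(sV^\alpha)$ and discards the divergence.

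Your sector-by-sector computation is sharper, because it shows what that assertion costs. Theorem~\ref{Theo:CE} at $C=0$ already gives $\dot\nabla B_\alpha{}^{\alpha}=0$. Demanding in addition $\dot\nabla B_\alpha{}^{\alpha}+\nabla_\beta(B_\alpha{}^{\alpha}V^\beta)=0$ with $\nabla_\beta V^\beta=0$ is therefore exactly $V^\beta\nabla_\beta B_\alpha{}^{\alpha}=0$, equivalently your $\partial_t B_\alpha{}^{\alpha}=0$. That is precisely the condition that annihilates your leftover $\int_{\Sigma}\sigma V^\alpha\nabla_\alpha B_\beta{}^{\beta}\,d\Sigma$. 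So routes (a)/(b) are fine, but you should state this as an explicit extra hypothesis on the flow. It does not follow from $C=0$, $D_0=0$, $\partial_t\sigma=0$, $\sigma>0$; Corollary~\ref{cor:Sgeom} only makes $\sigma$ quasi-static. The paper slips it in by treating $B_\alpha{}^{\alpha}$ as a field governed by Theorem~\ref{thm:CMS-balance}, although its evolution is fixed by Theorem~\ref{Theo:CE}.

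The fallback (c) does not work and should be dropped. With $C\equiv 0$ the hypersurface is indeed fixed as a set, and $B_\alpha{}^{\alpha}$ is a fixed function on it. But $\sigma$ is not merely relabelled: it obeys $\dot\nabla\sigma=-V^\alpha\nabla_\alpha\sigma$, so it is physically carried across the fixed geometry. Hence $\int_\Sigma\sigma B_\alpha{}^{\alpha}\,d\Sigma$ changes at exactly the rate $\int_\Sigma\sigma V^\alpha\nabla_\alpha B_\beta{}^{\beta}\,d\Sigma$ that you computed. Parametrization invariance would require $\dot\nabla\sigma=0$, which contradicts the transport law you used one line earlier.

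For example, take a static ellipsoid carrying a non-uniform $\sigma$, advected by a divergence-free tangential flow whose streamlines cross the level sets of the mean curvature. Nothing in the stated hypotheses prevents $dS_G/dt\neq 0$ there, even though $S_S$ is conserved. Without an extra condition such as $V^\alpha\nabla_\alpha B_\beta{}^{\beta}=0$ (or $V^\alpha\nabla_\alpha\sigma=0$, e.g.\ uniform $\sigma$, which makes $\dot\nabla\sigma=0$), the geometric sector is not invariant. The corollary therefore cannot be derived from its stated assumptions alone, neither by your argument nor by the paper's.
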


\begin{proof}
Let
\(
s := k_B \rho\ln(\rho/\rho_0) + k_G\sigma B_\alpha{}^{\alpha}.
\) 
For a closed hypersurface and $C=0$, the MM fundamental Theorem~\ref{thm:CMS-transport} gives
\[
\frac{d}{dt}\!\int_{\Sigma(t)} s\, d\Sigma
= 
\int_{\Sigma(t)} \dot{\nabla}s d\Sigma.
\]
Next, since $s$ is a continuous scalar functional of continuity satisfying $\rho$ due to FP Theorem~\ref{thm:geom-FP} at $C, D_0=0$ and uncoupled $\sigma, B_\alpha^\alpha$ due to Theorem~\ref{thm:CMS-balance} at $C,\, J,\, \Pi=0$, then according to the composite scalar field Theorem~\ref{thm:advected-composition} we have $\dot{\nabla}s=-\nabla_\alpha(s V^\alpha)$. The integral evaluates to zero because the integral of the total divergence, as stated in Remark~\ref{int_open}, vanishes on closed manifolds. 
\end{proof}

\begin{remark}
In the reversible limit, the statistical and geometric entropies exhibit separate invariance. Volume preservation and the Conservation Theorem~\ref{thm:CMS-balance} ensure that $\frac{d}{dt}S_S=0$, while the curvature-variance balance guarantees that $\frac{d}{dt}S_G=0$. Irreversibility and strict entropy production manifest only when the diffusive term $D_0 \mathcal{B}^{\alpha\beta}\nabla_\beta\rho$ is reinstated in the Geometric FP Theorem~\ref{thm:geom-FP}.
\end{remark}

\begin{remark}
The flux in the irreversible Fokker--Planck regime is
\(
J^\alpha = - D^{\alpha\beta}\,\nabla_\beta \rho
= - D_0 \mathcal{B}^{\alpha\beta} \nabla_\beta \rho ,
\)
so the inverse curvature tensor $\mathcal{B}^{\alpha\beta}$ controls diffusion and entropy
production. In the reversible limit $D_0=0$, this flux vanishes, and the moving manifold
evolution preserves the total geometric entropy. Irreversibility and
entropy growth arise only from the stochastic diffusion driven by
$\mathcal{B}^{\alpha\beta}$, not from the underlying incompressible moving manifold kinematics.
\end{remark}

\begin{theorem}[Irreversible entropy]
\label{thm:entropy-irreversible}
Let $\Sigma(t)$ be a smooth closed moving manifold evolving in the volume-preserving regime and nonvanishing base diffusion coefficient $D_0\neq0$. Then, the entropy is nondecreasing
\[
\frac{d}{dt}\mathcal{S}[\Sigma(t)] \ge 0,
\]
with equality if and only if $\nabla_\alpha\rho\equiv 0$.
\end{theorem}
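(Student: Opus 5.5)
The strategy is to reuse the machinery of the proof of Corollary~\ref{thm:rev-entropy} and isolate the single term that the diffusive flux adds. We set $C=0$ and take $s:=-k_B\rho\ln(\rho/\rho_0)-k_G\sigma B_\alpha{}^{\alpha}$, the density of \eqref{entropy def}. For a closed manifold, Theorem~\ref{thm:CMS-transport} gives
\begin{equation*}
\frac{d}{dt}\mathcal S=\int_{\Sigma(t)}\dot\nabla s\,d\Sigma .
\end{equation*}
We split $s$ into the statistical part $S_S$ and the geometric part $S_G$. By Lemma~\ref{lem:sigma-divV}, area preservation gives $\nabla_\alpha V^\alpha=0$. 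Theorem~\ref{Theo:CE} with $C=0$ shows that $B_\alpha{}^{\alpha}$ is unaffected by normal motion, and $\sigma$ obeys the flux-free balance of Theorem~\ref{thm:CMS-balance}. The geometric sector is therefore exactly the one treated in Corollary~\ref{thm:rev-entropy}. Theorem~\ref{thm:advected-composition} makes its integrand a total divergence, so $dS_G/dt=0$ whether or not $D_0$ vanishes, because the diffusive flux of Theorem~\ref{thm:geom-FP} acts only on $\rho$.

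For the statistical sector, we use the chain rule for $\dot\nabla$ on scalars:
\begin{equation*}
\dot\nabla\big(\rho\ln(\rho/\rho_0)\big)=\big(\ln(\rho/\rho_0)+1\big)\dot\nabla\rho .
\end{equation*}
We then substitute $\dot\nabla\rho=-\nabla_\alpha(\rho V^\alpha)+\nabla_\alpha(D_0\mathcal B^{\alpha\beta}\nabla_\beta\rho)$ from \eqref{eq:geom-FP} with $C=0$. The advective piece gives $F(\rho)=\rho\ln(\rho/\rho_0)$ in Theorem~\ref{thm:advected-composition}; it is a total divergence and integrates to zero. The remaining piece is
\begin{equation*}
-k_B\int_{\Sigma}\big(\ln(\rho/\rho_0)+1\big)\nabla_\alpha\big(D_0\mathcal B^{\alpha\beta}\nabla_\beta\rho\big)\,d\Sigma .
\end{equation*}
We integrate it by parts on the closed manifold, so the boundary term in Remark~\ref{int_open} drops out. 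Since $\nabla_\alpha\ln\rho=\rho^{-1}\nabla_\alpha\rho$, this yields
\begin{equation*}
\frac{d}{dt}\mathcal S=k_B\int_{\Sigma(t)}\frac{D_0\,\mathcal B^{\alpha\beta}\nabla_\alpha\rho\,\nabla_\beta\rho}{\rho}\,d\Sigma .
\end{equation*}
The constant $+1$ contributes only $\int\nabla_\alpha J^\alpha=0$.

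Sign and equality are the main obstacle. By Remark~\ref{sign}, $D_0$ is chosen with the sign of $\mathcal B$, so $D^{\alpha\beta}=D_0\mathcal B^{\alpha\beta}$ is positive definite. This requires $B_{\alpha\beta}$ to be nondegenerate and definite, which is the standing assumption of Definition~\ref{def:inv-curv}. With $\rho>0$ the integrand is then pointwise nonnegative, so $d\mathcal S/dt\ge0$. Equality forces the integrand to vanish almost everywhere. Positive definiteness gives $\nabla_\alpha\rho=0$ there, and continuity extends this to $\nabla_\alpha\rho\equiv0$. Conversely, $\nabla\rho\equiv0$ kills the integrand.

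Two regularity points need care. First, $\rho$ must stay strictly positive, which I would take from the Fokker--Planck dynamics, or else use the convention $0\ln 0=0$ with smooth positive data. Second, if $B$ is only semidefinite, for example at flat points where $\mathcal B$ blows up, the argument needs either a definiteness hypothesis or a restriction to the region where $B$ is definite. I would state this hypothesis explicitly.
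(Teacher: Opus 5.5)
Your proposal is correct and is essentially the paper's proof: $S_G$ is held fixed by the $C=0$ reversible argument of Corollary~\ref{thm:rev-entropy}, the advective part of $\dot\nabla\rho$ drops out as a total divergence because $\nabla_\alpha V^\alpha=0$, and one integration by parts of the diffusive part yields $k_B\int_{\Sigma(t)} D_0\mathcal B^{\alpha\beta}\nabla_\alpha\rho\,\nabla_\beta\rho/\rho\,d\Sigma$, which is nonnegative by Remark~\ref{sign}. The paper leaves implicit the caveats you add: $\rho>0$, and sign-definiteness of $B_{\alpha\beta}$. Definition~\ref{def:inv-curv} does not itself impose the latter, since a nondegenerate $B_{\alpha\beta}$ can be indefinite at saddle points, but the duality $B=\mu T^{-1}$ with positive-definite $T$ does supply it.
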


\begin{proof}
By Theorem~\ref{thm:rev-entropy}, in the reversible case $D_0=0$ the total
entropy $\mathcal{S}[\Sigma(t)]$ is exactly conserved.
Turning on diffusion ($D_0\neq0$) does not alter the geometric part $S_G$, so only the statistical part
$S_S[\rho]$ \eqref{total entropy} acquires an additional contribution. For $C=0$, the MM fundamental Theorem~\ref{thm:CMS-transport} gives
\[
\frac{dS_S}{dt}
=
- k_B \int_{\Sigma(t)}
\dot{\nabla}\big(\rho\ln(\rho/\rho_0)\big) d\Sigma
=
- k_B \int_{\Sigma(t)}
\big(1+\ln(\rho/\rho_0)\big)\dot{\nabla}\rho d\Sigma.
\]
Using the geometric Fokker-Planck equation \eqref{eq:geom-FP} in the form
\(
\dot{\nabla}\rho
=
- \nabla_\alpha(\rho V^\alpha)
+ \nabla_\alpha\!\big(D_0 \mathcal{B}^{\alpha\beta}\nabla_\beta\rho\big),
\)
and noting that the advective term
\(
- \nabla_\alpha(\rho V^\alpha)
\)
is exactly the reversible contribution already shown in
Theorem~\ref{thm:rev-entropy} to leave $S_S$ invariant, we
are left only with the diffusive part:
\[
\frac{d}{dt} S_S
=
- k_B \int_{\Sigma(t)}
\big(1+\ln(\rho/\rho_0)\big)
\nabla_\alpha \big(D_0 \mathcal{B}^{\alpha\beta}\nabla_\beta\rho\big) d\Sigma.
\]
Next, let $\phi := 1+\ln(\rho/\rho_0)$ for simplicity and integrate the last integral by parts, using $D_0\neq0$ and taking into account that the total divergence (boundary term) vanishes on closed manifolds,  then
\begin{align}\label{Sderivation}
\frac{d}{dt} S_S
&=- k_B D_0 \int_{\Sigma(t)} \phi\nabla_\alpha\big(\mathcal{B}^{\alpha\beta}\nabla_\beta\rho\big) d\Sigma
=k_B D_0
\int_{\Sigma(t)} \nabla_\alpha \phi\,
\mathcal{B}^{\alpha\beta}\nabla_\beta\rho d\Sigma =k_B \int_{\Sigma(t)} D_0
\mathcal{B}^{\alpha\beta}
\frac{\nabla_\alpha\rho\nabla_\beta\rho}{\rho} d\Sigma.
\end{align}
Note here that, to compute the gradient of \(\phi\) we used standart procedure
\(
\phi(\rho) = 1 + \ln(\rho/\rho_0)
\,\Rightarrow\,
\nabla_\alpha \phi
=
\nabla_\alpha\ln(\rho/\rho_0)
=
\frac{1}{\rho}\nabla_\alpha\rho.
\)
Substituting this back into \eqref {Sderivation} yields the last identity. Next, according to Remark~\ref{sign} by the curvature-noise sign choice, the tensor $D_0\mathcal{B}^{\alpha\beta}$ is positive definite, so the quadratic form
\(
D_0\mathcal{B}^{\alpha\beta}\nabla_\alpha\rho \nabla_\beta\rho
\)
is nonnegative. Hence,
\[
\frac{d}{dt} S_S \ge 0,
\]
with equality if and only if $\nabla_\alpha\rho\equiv 0$. Since $S_G$ is
constant in time in the $C=0$ regime, the same inequality holds for the
total entropy $\mathcal{S}[\Sigma(t)]$, which proves the claim.
\end{proof}

\begin{remark}
Curvature-weighted diffusion is the sole driver of entropy growth on closed manifolds in the $C=0$ regime. The geometric contribution $S_G$ is a strict invariant of the moving-manifold kinematics, while the Boltzmann entropy increases monotonically under the inverse-curvature diffusion governed by $D^{\alpha\beta}=D_0\mathcal{B}^{\alpha\beta}$. Thus, the Second Law appears as a purely geometric consequence of the conservation Theorem~\ref{thm:CMS-balance} on the evolving manifold.
\end{remark}

\subsection{Topological flow}

Up to this point, our analysis has focused on the volume-preserving regime ($C=0$), in which the enclosed volume is fixed and, for conserved surface energy, the area is effectively preserved as well. In this setting, the manifold may still deform tangentially, but its shape evolution is kinematically constrained, and the resulting structure leads to clean conservation laws and an elegant geometric proof of the Second Law. However, this regime also reveals a limitation: geometry acts as a constraint rather than the driver of change. To uncover the full geometric contribution, we now relax the $C=0$ condition. Allowing $C\neq 0$ introduces genuine normal motion, breaks metric isometry, and activates curvature production. As we will show, this transition naturally brings topology into the entropy flow.

%Up to this subsection, our discussions have primarily centered on the volume-preserving ($C=0$) case, which has yielded remarkable theorems and provided elegant proofs of the second law. However, this approach occasionally revealed that geometry merely serves as a rule of generic underlying principles rather than setting the tone. In this context, we will relax the $C=0$ restriction and explore how it advances geometric contributions. 

\begin{theorem}[Entropy evolution]
\label{thm:geom-entropy-C}
Let $\Sigma(t)$ be a smooth evolving hypersurface with normal speed $C\neq 0$,
second fundamental form $B_{\alpha\beta}$, and geometric entropy
$S_G$ given by Corollary~\ref{cor:Sgeom}. Then, using the curvature
evolution Theorem~\ref{Theo:CE},
\begin{equation}
\frac{d}{dt} S_G[\Sigma(t)]
=
- k_G \int_{\Sigma(t)} \sigma
\bigl(\nabla^\alpha\nabla_\alpha C + C B_{\alpha\gamma}B^{\gamma\alpha}\bigr) d\Sigma
+ I[\sigma,C,B_\alpha^\alpha],
\label{eq:Sgeom-evolution-nd}
\end{equation}
where $I$ is a combination of $\sigma, \, \dot\nabla \sigma, \, C, \, B_\alpha^\alpha $. %Here
\end{theorem}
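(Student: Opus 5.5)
The plan is to differentiate the geometric entropy functional $S_G[\Sigma(t)]=-k_G\int_{\Sigma(t)}\sigma B_\alpha{}^{\alpha}\,d\Sigma$ of Corollary~\ref{cor:Sgeom} directly with the Fundamental Theorem of MM Calculus, Theorem~\ref{thm:CMS-transport}. We apply it with the scalar integrand $f=\sigma B_\alpha{}^{\alpha}$, keeping $C\neq0$ and treating $k_G$ as the pseudo-constant stated there. This gives
\[
\frac{d}{dt}S_G=-k_G\int_{\Sigma(t)}\Bigl(\dot\nabla(\sigma B_\alpha{}^{\alpha})-C B_\beta{}^{\beta}\,\sigma B_\alpha{}^{\alpha}\Bigr)d\Sigma .
\]
Since $\sigma B_\alpha{}^{\alpha}$ is a scalar, the invariant time derivative of Definition~\ref{invariant derivative} and Lemma~\ref{def:dot-nabla} obeys the Leibniz rule. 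Hence $\dot\nabla(\sigma B_\alpha{}^{\alpha})=\sigma\,\dot\nabla B_\alpha{}^{\alpha}+B_\alpha{}^{\alpha}\,\dot\nabla\sigma$.

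Next, we substitute the trace form of the curvature evolution, Theorem~\ref{Theo:CE}: $\dot\nabla B_\alpha{}^{\alpha}=\nabla^\alpha\nabla_\alpha C+CB_{\alpha\gamma}B^{\gamma\alpha}$. Its contribution $-k_G\int\sigma(\nabla^\alpha\nabla_\alpha C+CB_{\alpha\gamma}B^{\gamma\alpha})\,d\Sigma$ is exactly the displayed principal term of \eqref{eq:Sgeom-evolution-nd}. The remaining pieces are collected into
\[
I[\sigma,C,B_\alpha^\alpha]=-k_G\int_{\Sigma(t)}\Bigl(B_\alpha{}^{\alpha}\,\dot\nabla\sigma-C\,\sigma\,(B_\alpha{}^{\alpha})^2\Bigr)d\Sigma .
\]
This depends only on $\sigma$, $\dot\nabla\sigma$, $C$ and $B_\alpha{}^\alpha$, as the statement requires. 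If desired, $\dot\nabla\sigma$ can be rewritten using the balance law of Theorem~\ref{thm:CMS-balance} for the surface energy density: with no fluxes, $\dot\nabla\sigma=-\nabla_\alpha(\sigma V^\alpha)+C B_\alpha{}^{\alpha}\sigma$. Substituting this makes $I$ explicit, and the tangential divergence part can be integrated by parts on the closed manifold. It is not needed, however, for the claimed structural form.

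The main obstacle is bookkeeping of the time-dependence rather than any deep step. Two points need care. First, one must confirm that $\dot\nabla$ commutes with the trace, i.e.\ that $\dot\nabla(g^{\alpha\beta}B_{\alpha\beta})=g^{\alpha\beta}\dot\nabla B_{\alpha\beta}$. This follows from metric invariance, Theorem~\ref{lem:metric-invariance}, and is what allows the second identity in \eqref{eq:dotnabla-MC} to be used. Second, we must state the assumption that $k_G=v_s/\Delta T$ is constant on the relevant time scale. If it is not, its derivative contributes an extra term proportional to $\int\sigma B_\alpha{}^\alpha$, which can likewise be absorbed into $I$. If $\Sigma(t)$ is only a patch rather than closed, the contour term of Remark~\ref{int_open} would also be added to $I$. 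For the closed-manifold setting of the theorem it is absent.
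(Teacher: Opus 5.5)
Your proposal is correct and follows the paper's proof essentially line by line. Both arguments apply Theorem~\ref{thm:CMS-transport} to $f=\sigma B_\alpha{}^{\alpha}$, use the Leibniz rule for $\dot\nabla$, and substitute the trace form of Theorem~\ref{Theo:CE}; your remainder $I=-k_G\int_{\Sigma(t)}\bigl(B_\alpha{}^{\alpha}\dot\nabla\sigma-C\sigma(B_\alpha{}^{\alpha})^2\bigr)d\Sigma$ coincides exactly with the paper's $I=k_G\int_{\Sigma(t)}B_\alpha{}^{\alpha}(\sigma CB_\alpha{}^{\alpha}-\dot\nabla\sigma)\,d\Sigma$, and your side remarks (constancy of $k_G$, closedness of $\Sigma(t)$ so the transport theorem has no boundary term) make explicit assumptions the paper uses tacitly.
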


\begin{proof}
Taking into account MM fundamental Theorem~\ref{thm:CMS-transport} we compute
 \begin{align}
 &\frac{d}{dt} S_G[\Sigma(t)]=-k_G
\int_{\Sigma(t)} \dot \nabla(\sigma B_\alpha{}^{\alpha}) d\Sigma+k_G
\int_{\Sigma(t)} \sigma C (B_\alpha{}^{\alpha})^2 d\Sigma=-k_G
\int_{\Sigma(t)} (\sigma \dot\nabla B_\alpha{}^{\alpha}+\dot \nabla\sigma B_\alpha{}^{\alpha}) d\Sigma+k_G
\int_{\Sigma(t)} \sigma C (B_\alpha{}^{\alpha})^2 d\Sigma \nonumber \\
&=-k_G\int_{\Sigma(t)} \sigma\bigl(\nabla^\alpha\nabla_\alpha C + C B_{\alpha\gamma}B^{\gamma\alpha}\bigr)d\Sigma+I[\sigma,C,B_\alpha^\alpha], \quad I[\sigma,C,B_\alpha^\alpha]=k_G\int_{\Sigma(t)} B_\alpha{}^{\alpha}(\sigma C B_\alpha{}^{\alpha}-\dot\nabla\sigma )d\Sigma
 \end{align}
Note here that $I[\sigma,C,B_\alpha^\alpha]$ becomes irrelevant or low-level approximations for slowly varying $\sigma, C\sim const$ fields, and small mean curvatures $B_\alpha^\alpha \rightarrow 0$. Moreover, if $B_\alpha^\alpha \rightarrow const$ then due to conservation of $\sigma$ field, in the absence of external $\Pi$ source and $J$ flux
\[
I[\sigma,C,B_\alpha^\alpha]=k_GB_\alpha^\alpha \frac{d}{dt} \int_{\Sigma(t)} \sigma d\Sigma=0. 
\]
In that scenario, only the clean Laplacian and quadratic structures persist.  
\end{proof}

\begin{corollary}[Topology-entropy coupling]\label{topology}
Let $\Sigma(t)$ be a two-dimensional moving manifold. Then, in the quasi–static regime where
$\sigma(\xi,t)\simeq\sigma_0$ and $C(\xi,t)\simeq C_0$ are slowly varying,
the geometric entropy flow from Theorem~\ref{thm:geom-entropy-C} can be
written as
\begin{equation}
\frac{d}{dt} S_G[\Sigma(t)]
=
- k_G \int_{\Sigma(t)} \sigma\,\nabla^\alpha\nabla_\alpha C \, d\Sigma
\;+\; k\,\chi(\Sigma)
\;+\; I[\sigma,C,B_\alpha{}^{\alpha}],
\label{eq:SG-topology}
\end{equation}
where $\chi(\Sigma)$ is the Euler characteristic and
$k = 4\pi k_G \sigma_0 C_0$ is an effective constant.
\end{corollary}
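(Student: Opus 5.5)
The plan is to start from the entropy-evolution identity \eqref{eq:Sgeom-evolution-nd} of Theorem~\ref{thm:geom-entropy-C} and leave the Laplacian term and the remainder $I[\sigma,C,B_\alpha{}^{\alpha}]$ untouched. All the work goes into the quadratic term
\[
-k_G\int_{\Sigma(t)}\sigma\, C\, B_{\alpha\gamma}B^{\gamma\alpha}\, d\Sigma .
\]
In the quasi-static regime we replace $\sigma$ by $\sigma_0$ and $C$ by $C_0$ at leading order. The error terms, proportional to $(\sigma-\sigma_0)$ and $(C-C_0)$, are absorbed into $I$, which the paper already treats as collecting the slowly-varying corrections. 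This step turns the term into
\[
-k_G\sigma_0 C_0\int_\Sigma B_{\alpha\gamma}B^{\gamma\alpha}\, d\Sigma .
\]

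Next we use the two-dimensional algebraic identity for the eigenvalues of the shape operator $B^\alpha{}_\beta$, with principal curvatures $\kappa_1,\kappa_2$:
\[
B_{\alpha\gamma}B^{\gamma\alpha}=\kappa_1^2+\kappa_2^2=(B_\alpha{}^{\alpha})^2-2K ,
\]
where $K=\kappa_1\kappa_2=\tfrac12\bigl((B_\alpha{}^{\alpha})^2-B_{\alpha\gamma}B^{\gamma\alpha}\bigr)$ is the Gaussian curvature. This is the $d=1$ (intrinsic dimension two) case of the Cayley--Hamilton relation, and it is the only point where two-dimensionality is used.

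The quadratic term therefore splits into
\[
-k_G\sigma_0C_0\int (B_\alpha{}^{\alpha})^2\, d\Sigma \;+\; 2k_G\sigma_0C_0\int_\Sigma K\, d\Sigma .
\]
For a closed surface the Gauss--Bonnet theorem gives $\int_\Sigma K\,d\Sigma=2\pi\chi(\Sigma)$, so the second piece equals $4\pi k_G\sigma_0C_0\,\chi(\Sigma)=k\,\chi(\Sigma)$, as claimed.

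The remaining piece $-k_G\sigma_0C_0\int(B_\alpha{}^{\alpha})^2\,d\Sigma$ has to be combined with $I$. From the proof of Theorem~\ref{thm:geom-entropy-C}, $I$ already contains $+k_G\int\sigma C(B_\alpha{}^{\alpha})^2\,d\Sigma$. At leading quasi-static order the two mean-curvature-squared contributions cancel, leaving only $-k_G\int B_\alpha{}^{\alpha}\dot\nabla\sigma\,d\Sigma$ together with the absorbed corrections. We then redefine $I$ as this residual functional of $\sigma$, $C$ and $B_\alpha{}^{\alpha}$, which is consistent with the statement's use of the same symbol.

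I expect the main obstacle to be this bookkeeping. Two things need to be checked. First, the sign conventions: Gauss--Bonnet in the pseudo-Riemannian or Minkowski ambient setting should be read with the Riemannian induced metric on a spacelike closed surface, and the sign of $B_{\alpha\beta}$ under outward orientation (Remark~\ref{sign}) affects the sign of $K$ only quadratically, so it is harmless. Second, the claim that the mean-curvature terms cancel rather than double. If they do not cancel exactly, I would simply keep the leftover $(B_\alpha{}^{\alpha})^2$ term inside the redefined $I$; the statement allows this, since $I$ is only required to be a combination of $\sigma$, $C$ and $B_\alpha{}^{\alpha}$.
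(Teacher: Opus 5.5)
Your proposal is correct and follows the paper's proof. Both use the two-dimensional identity $B_{\alpha\gamma}B^{\gamma\alpha}=(B_\alpha{}^{\alpha})^2-2K$, apply Gauss--Bonnet to the $K$-term with $\sigma\simeq\sigma_0$ and $C\simeq C_0$ frozen, and absorb the leftover $(B_\alpha{}^{\alpha})^2$ piece and the quasi-static corrections into $I$. Your cancellation is in fact exact if you split before freezing $\sigma$ and $C$, as the paper does, because $I$ contains $+k_G\int\sigma C(B_\alpha{}^{\alpha})^2\,d\Sigma$ identically; the redefined $I$ is then just $-k_G\int B_\alpha{}^{\alpha}\dot\nabla\sigma\,d\Sigma$ plus the correction $2k_G\int(\sigma C-\sigma_0C_0)K\,d\Sigma$ from the $K$-term.
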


\begin{proof}
Let us specialize Theorem~\ref{thm:geom-entropy-C} to $d=2$, 
%\[
%\frac{d}{dt} S_G
%=
%- k_G \int_{\Sigma(t)} \sigma
%\bigl(\nabla^\alpha\nabla_\alpha C + C B_{\alpha\gamma}B^{\gamma\alpha}\bigr) d\Sigma
%+ I[\sigma,C,B_\alpha{}^{\alpha}].
%\]
and use two dimensional the identity  $B_{\alpha\gamma}B^{\gamma\alpha} = (B_\alpha{}^{\alpha})^2 - 2K$, then we obtain
\begin{equation}\label{top}
\frac{d}{dt} S_G
=
- k_G \int_{\Sigma(t)} \sigma\,\nabla^\alpha\nabla_\alpha C\, d\Sigma
- k_G \int_{\Sigma(t)} \sigma C (B_\alpha{}^{\alpha})^2 d\Sigma
+ 2k_G \int_{\Sigma(t)} \sigma C K\, d\Sigma
+ I[\sigma,C,B_\alpha{}^{\alpha}].
\end{equation}
Next, let's absorb $- k_G \int \sigma C (B_\alpha{}^{\alpha})^2$ term into $I[\sigma,C,B_\alpha{}^{\alpha}]$ integral. Then, in the quasi–static regime $\sigma(\xi,t)\simeq\sigma_0$,
$C(\xi,t)\simeq C_0$, while due to Gauss-Bonnet theorem
\begin{equation}\label{euler}
2k_G \int_{\Sigma(t)} \sigma C K\, d\Sigma
=
2k_G \sigma_0 C_0 \int_{\Sigma(t)} K\, d\Sigma=4\pi k_G \sigma_0 C_0\,\chi(\Sigma)
= k\chi(\Sigma).
\end{equation}
where $k = 4\pi k_G \sigma_0 C_0$ and $\chi$ is the Euler characteristic, i.e, the topology of the two-surface. Plugging \eqref{euler} into  \eqref{top}  yields \eqref{eq:SG-topology}. It is important to note that we have retained only the leading contributions in the slowly varying $\sigma$ and $C$ parameters, while incorporating higher-order terms into the expression $I[\sigma,C,B_\alpha{}^{\alpha}]$. This approach demonstrates phenomenologically how topology manifests itself in the evolution of entropy.
\end{proof}

\begin{remark}
Since $\nabla_\alpha\nabla_\beta C$ is the surface Laplacian, this term induces wave-type curvature fluctuations. Combined with the topological contribution, the entropy variation couples metric fluctuations, shape fluctuations, and the topological content encoded in the Gaussian curvature. Thus, in two dimensions, geometric waves propagate directly into the topological sector of the entropy functional, producing wave-type fluctuations of topology without requiring any explicit topological change of $\Sigma(t)$.
\end{remark}

\begin{remark}[Schr\"odinger-type structure of entropy flow]\label{remark:Shrodinger}
Corollary~\ref{topology}  has direct consequence, that is a
wave–type relation of the form
\[
\partial_t s_G
\sim
-\Delta_\Sigma C + k\chi(\Sigma)+I,
\]
Since Section~\ref{def:probability} assigns probabilities through the
Boltzmann weight
$\mathbb P[\Sigma]\propto e^{\mathcal S/k_B}$, the geometric entropy is
equivalently encoded in a complex amplitude through
\(
|\psi|^2 \propto e^{S_G/k_B},
\,
S_G
\propto
\ln |\psi|^2 .
\)
Interpreting the Laplacian of the normal velocity
$\Delta_\Sigma C$ as acting on this amplitude then yields the
Schr\"odinger–type structure
\[
i\hbar \partial_t \psi
\sim
-\Delta_\Sigma \psi ,
\]
up to geometric proportionality factors.  In this sense, the Schr\"odinger
equation appears as the complexified Laplace-Beltrami dynamics already
present in the curvature–entropy evolution: the wave character of quantum
amplitudes arises as the natural counterpart of geometric entropy flow.
\end{remark}

\begin{corollary}[Global geometric entropy flow is topology in $2$D]
\label{cor:SG-topology-2D}
Let $\Sigma(t)$ be a smooth closed two-dimensional moving manifold in the quasi-static regime $\sigma(\xi,t)\simeq\sigma_0, \, C(\xi,t)\simeq C_0$ and $B_\alpha^\alpha\simeq const$. Then,
\begin{equation}
\frac{dS_G}{dt}
=
k \chi(\Sigma),
\label{dSGdt-Euler}
\end{equation}
where $S_0=4\pi k_G \sigma_0 C_0$ and $\chi(\sigma)$ is the Euler characteristics. 
\end{corollary}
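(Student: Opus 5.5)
The plan is to start from the topology–entropy coupling of Corollary~\ref{topology} and show that, under the extra assumption $B_\alpha{}^{\alpha}\simeq\mathrm{const}$, every term other than $k\chi(\Sigma)$ either vanishes or is absorbed at the order retained there. Specializing Theorem~\ref{thm:geom-entropy-C} to $d=2$ gives \eqref{eq:SG-topology}:
\[
\frac{dS_G}{dt}=-k_G\int_{\Sigma(t)}\sigma\,\nabla^\alpha\nabla_\alpha C\,d\Sigma+k\chi(\Sigma)+I[\sigma,C,B_\alpha{}^{\alpha}],
\]
with $k=4\pi k_G\sigma_0C_0$, which is the constant written $S_0$ in the statement. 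Three terms must be disposed of.

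\textbf{Laplacian term.} With $\sigma\simeq\sigma_0$, this term becomes $-k_G\sigma_0\int_{\Sigma}\Delta_\Sigma C\,d\Sigma$. On a closed manifold it is the integral of a total divergence, so it vanishes by the closed-manifold form of Theorem~\ref{thm:CMS-transport} and Remark~\ref{int_open}. More carefully, integrating by parts twice leaves $-k_G\int C\,\Delta_\Sigma\sigma\,d\Sigma$, which is second order in the slow variation of $\sigma$; it vanishes identically if $C\simeq C_0$ is also used, since then $\Delta_\Sigma C=0$.

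\textbf{Residual $I$.} Here $I$ contains the original piece $k_G\int B_\alpha{}^{\alpha}(\sigma C B_\alpha{}^{\alpha}-\dot\nabla\sigma)\,d\Sigma$ together with the absorbed term $-k_G\int\sigma C(B_\alpha{}^{\alpha})^2\,d\Sigma$. The $\sigma C(B_\alpha{}^{\alpha})^2$ contributions cancel exactly, leaving $-k_G\int B_\alpha{}^{\alpha}\dot\nabla\sigma\,d\Sigma$. With $B_\alpha{}^{\alpha}$ constant, I pull it out and apply the argument already used at the end of the proof of Theorem~\ref{thm:geom-entropy-C}. Using the balance law Theorem~\ref{thm:CMS-balance} for the source- and flux-free field $\sigma$, one has $\dot\nabla\sigma=-\nabla_\alpha(\sigma V^\alpha)+C B_\alpha{}^{\alpha}\sigma$. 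The divergence integrates to zero. The remaining piece $-k_G(B_\alpha{}^{\alpha})^2\int\sigma C\,d\Sigma$ is equivalent to $B_\alpha{}^{\alpha}\frac{d}{dt}\int\sigma\,d\Sigma=0$ via Theorem~\ref{thm:CMS-transport}, which is how the paper argues. So $I=0$.

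\textbf{Gauss--Bonnet term.} The remaining term $k\chi(\Sigma)$ comes directly from \eqref{euler}, which completes the identity \eqref{dSGdt-Euler}.

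The main obstacle is the consistency of the bookkeeping in $I$. The cancellation of the $(B_\alpha{}^{\alpha})^2$ terms must be checked against the signs used in Corollary~\ref{topology}. In addition, $\frac{d}{dt}\int\sigma\,d\Sigma$ equals $\int(\dot\nabla\sigma-CB_\alpha{}^{\alpha}\sigma)\,d\Sigma$, not $\int\dot\nabla\sigma\,d\Sigma$. I would therefore first verify that the leftover $\int\sigma C(B_\alpha{}^{\alpha})^2$ pieces cancel exactly, and otherwise treat them as higher-order corrections in the quasi-static regime, as the paper does in Corollary~\ref{topology}.
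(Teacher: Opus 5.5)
\paragraph{Verdict.} Your route is the paper's, but the step that disposes of $I$ fails as written.

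\paragraph{The overall route and your bookkeeping.} You start from \eqref{eq:SG-topology}, drop the Laplace--Beltrami term as a total divergence on a closed surface, argue $I=0$, and keep the Gauss--Bonnet term. That is exactly the paper's outline. Your bookkeeping is also right: after the absorption made in Corollary~\ref{topology}, the $\sigma C(B_\alpha{}^{\alpha})^2$ pieces cancel exactly, and
\[
I=-k_G\int_{\Sigma(t)}B_\alpha{}^{\alpha}\,\dot\nabla\sigma\,d\Sigma .
\]

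\paragraph{Where it breaks.} Substituting the balance law for $\sigma$ gives $I=-k_G(B_\alpha{}^{\alpha})^2\int_{\Sigma(t)}\sigma C\,d\Sigma$. This is \emph{not} equivalent to $B_\alpha{}^{\alpha}\frac{d}{dt}\int\sigma\,d\Sigma$. By Theorem~\ref{thm:CMS-transport},
\[
B_\alpha{}^{\alpha}\frac{d}{dt}\int\sigma\,d\Sigma=B_\alpha{}^{\alpha}\int\dot\nabla\sigma\,d\Sigma-(B_\alpha{}^{\alpha})^2\int\sigma C\,d\Sigma .
\]
So with $\int\sigma\,d\Sigma$ conserved, your $I$ is exactly the nonzero quantity $-k_G(B_\alpha{}^{\alpha})^2\int\sigma C\,d\Sigma$. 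The conservation argument at the end of the proof of Theorem~\ref{thm:geom-entropy-C} kills the residual \emph{before} $-k_G\int\sigma C(B_\alpha{}^{\alpha})^2d\Sigma$ was absorbed into it, not after.

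Your fallback, discarding the leftover as a higher-order quasi-static correction, also fails. The leftover is of the same order $k_G\sigma_0C_0$ as the retained term, and $\int(B_\alpha{}^{\alpha})^2d\Sigma$ is scale invariant. On a sphere $\int(B_\alpha{}^{\alpha})^2d\Sigma=16\pi$, so $I=-2k\chi$. By the Willmore inequality, $\int(B_\alpha{}^{\alpha})^2d\Sigma\ge 16\pi$ on every closed surface in $\mathbb{R}^3$, so the term is never negligible compared with $k$. Concretely, a uniformly moving sphere with conserved $\int\sigma\,d\Sigma$ has $dS_G/dt=-k\chi$, the opposite sign.

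\paragraph{The fix.} Use the quasi-static hypothesis as $\dot\nabla\sigma\simeq 0$, as in Lemma~\ref{cor:sG-flow}, and do not invoke conservation of $\sigma$. The two assumptions are incompatible at the retained order whenever $C_0B_\alpha{}^{\alpha}\neq0$, because conservation forces $\dot\nabla\sigma=CB_\alpha{}^{\alpha}\sigma-\nabla_\alpha(\sigma V^\alpha)$.

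With $\dot\nabla\sigma\simeq0$, your own formula gives $I=0$ at once, without even needing $B_\alpha{}^{\alpha}\simeq\mathrm{const}$, and the corollary follows. This is the reading under which the paper's bare assertion ``$I=0$'' is correct. As a check, for a uniformly moving sphere with $\sigma\equiv\sigma_0$ one finds directly $dS_G/dt=8\pi k_G\sigma_0C_0=k\chi$.

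Replace the balance-law step and the ``higher-order'' fallback with this one line.
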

\begin{proof}
It is noteworthy that Corollary~\ref{cor:SG-topology-2D} follows directly from Corollary~\ref{topology}, since for closed manifolds the Laplace--Beltrami term integrates to zero,
\[
\int_{\Sigma(t)} \nabla^\alpha\nabla_\alpha C\, d\Sigma = 0 .
\]
Consequently, $I=0$ when the conditions $\sigma(\xi,t)\simeq\sigma_0$, $C(\xi,t)\simeq C_0$, and $B_\alpha^\alpha\simeq \mathrm{const}$ are satisfied.
\end{proof}

\begin{remark}[Topological contribution to entropy production]
\label{rem:topological-entropy}
Equation \eqref{dSGdt-Euler} demonstrates that the global geometric entropy rate is directly proportional to the Euler characteristic. Consequently, the geometric entropy production is solely governed by topological properties. In regimes where the normal speed $C$ remains small, the statistical entropy $S_S$ continues to govern the usual monotonic entropy production, as in the $C=0$ sector established earlier. The geometric entropy $S_G$ therefore remains dormant under smooth deformations that preserve topology. However, when the topology of the manifold changes, the Euler characteristic changes accordingly. In such processes, the entropy production acquires a purely topological contribution proportional to the change of $\chi(\Sigma)$. This situation naturally arises in systems where topology transformations occur, such as during membrane fusion or fission events in biological membranes. In such experiments, the rate of geometric entropy production becomes directly proportional to the rate at which the genus (or more generally the topology) of the membrane changes.
\end{remark}
%-------------------------------------everything up to here is more or less chacked------------------------------------

%\section{Geometry and arising probabilities}

\section{Geometry and Fluctuation Theorem}

%\section{Geometry and Fluctuation Theorem and Arising Probabilities}

\begin{definition}[Equilibrium probability functional]\label{def:probability}
For a closed manifold $\Sigma(t)$ in thermal equilibrium, the probability
weight of a microscopic configuration is given by the Boltzmann factor
\[
\mathbb{P}[\Sigma] \;\propto\; \exp\!\left(\frac{\mathcal{S}[\Sigma]}{k_B}\right),
\]
where $\mathcal{S}[\Sigma]$ is the total entropy functional.
Thus, for two nearby configurations $\Sigma$ and $\Sigma'$,
\begin{equation}
\frac{\mathbb{P}[\Sigma']}{\mathbb{P}[\Sigma]}
=
\exp\!\left(\frac{\mathcal{S}[\Sigma']-\mathcal{S}[\Sigma]}{k_B}\right).
\label{probability}
\end{equation}
\end{definition}

\begin{lemma}[Fluctuation lemma]\label{fluctuation_lemma}
Let $\Sigma(t)$ be a smooth, closed, and moving manifold in thermal equilibrium.
Assume small curvature fluctuations $\delta B_{\alpha\beta}$ about a reference curvature tensor $\bar B_{\alpha\beta}$, while the surface density $\rho$ and energy density $\sigma$ are held fixed.
Since the geometric entropy functional is \eqref{entropy def}, then, to leading order in $\delta B_{\alpha\beta}$, the ratio of probabilities for opposite curvature fluctuations satisfies
\begin{equation}
\frac{\mathbb{P}(+\delta B_{\alpha\beta})}{\mathbb{P}(-\delta B_{\alpha\beta})}
=
\exp\!\left(
-\frac{2k_G}{k_B}
\int_{\Sigma(t)}
\sigma\,\delta B_\alpha{}^{\alpha}\, d\Sigma
\right),
\label{Geometric fluctuation}
\end{equation}
which represents a geometric fluctuation relation analogous to Crooks-type symmetry relations.
\end{lemma}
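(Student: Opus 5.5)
The plan is to combine the Boltzmann probability functional of Definition~\ref{def:probability} with the total entropy functional \eqref{entropy def} of Corollary~\ref{def:geom-entropy}, and then linearize in the curvature perturbation. Write the two perturbed configurations as $\Sigma_\pm$, with curvature tensors $B^{\pm}_{\alpha\beta}=\bar B_{\alpha\beta}\pm\delta B_{\alpha\beta}$. By \eqref{probability}, taking the ratio of the two weights eliminates the unknown normalization. This gives
\[
\frac{\mathbb{P}(+\delta B_{\alpha\beta})}{\mathbb{P}(-\delta B_{\alpha\beta})}
=\exp\!\left(\frac{\mathcal{S}[\Sigma_+]-\mathcal{S}[\Sigma_-]}{k_B}\right),
\]
so the whole task reduces to computing the entropy difference $\mathcal{S}[\Sigma_+]-\mathcal{S}[\Sigma_-]$ to first order in $\delta B_{\alpha\beta}$.

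Next I split $\mathcal{S}$ according to the decomposition \eqref{total entropy} of Remark~\ref{entropy}. The statistical part $S_S[\rho]=-k_B\int\rho\ln(\rho/\rho_0)\,d\Sigma$ depends only on $\rho$, which the hypothesis holds fixed. Its contributions to $\Sigma_\pm$ are therefore identical and cancel in the difference. Only the geometric part $S_G=-k_G\int\sigma B_\alpha{}^{\alpha}\,d\Sigma$ of Corollary~\ref{cor:Sgeom} survives.

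The trace is linear, so $B^{\pm}{}_\alpha{}^{\alpha}=\bar B_\alpha{}^{\alpha}\pm\delta B_\alpha{}^{\alpha}$, with the trace taken using the reference metric $g^{\alpha\beta}$. Since $\sigma$ is held fixed, this yields
\[
S_G[\Sigma_\pm]=\bar S_G\mp k_G\int_{\Sigma(t)}\sigma\,\delta B_\alpha{}^{\alpha}\,d\Sigma .
\]
Subtracting gives $\mathcal{S}[\Sigma_+]-\mathcal{S}[\Sigma_-]=-2k_G\int\sigma\,\delta B_\alpha{}^{\alpha}\,d\Sigma$. The even (reference) terms cancel exactly, and substituting into the ratio gives \eqref{Geometric fluctuation}.

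The main obstacle is justifying the phrase ``to leading order.'' A curvature fluctuation in general also perturbs the induced metric. That perturbation affects both the raising of indices in $B_\alpha{}^{\alpha}$ and the area element $d\Sigma$. To handle this, I would regard $\delta B_{\alpha\beta}$ as a variation at fixed metric and fixed area element, so that the metric remains at its reference value to first order. This is consistent with the $C=0$ metric-preserving setting of Corollary~\ref{cor:metric-evolution}, in which $\partial_t g_{\alpha\beta}$ is purely tangential and has no $CB_{\alpha\beta}$ term. Under that choice, any metric-induced corrections are even in $\pm\delta B$ or of second order. Even terms cancel in the ratio, so the residual error is at most $O(\delta B^3)$ in the exponent.

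I would add a remark on what the result shows. Because $S_G$ is linear in $B$, the relation is in fact exact at fixed metric. It displays the Crooks-type antisymmetry: the log-ratio of the two probabilities is odd in the fluctuation.
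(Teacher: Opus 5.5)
Your proposal is correct and follows essentially the same route as the paper: the $\rho$-dependent Boltzmann part cancels because $\rho$ is held fixed, and linearity of $S_G=-k_G\int\sigma B_\alpha{}^{\alpha}\,d\Sigma$ in the curvature gives the entropy difference $-2k_G\int\sigma\,\delta B_\alpha{}^{\alpha}\,d\Sigma$, which you insert into the Boltzmann ratio \eqref{probability}. Your choice to vary $B_{\alpha\beta}$ at fixed metric and fixed area element makes explicit an assumption the paper's proof leaves unstated; under that assumption the relation is exact rather than merely leading order.
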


\begin{proof}
Since $\rho, \, \sigma$ are fixed, only the geometric entropy varies. From Definition~\ref{def:probability}, the entropy change under a small curvature perturbation
$B_{\alpha\beta}=\bar B_{\alpha\beta}+\delta B_{\alpha\beta}$ becomes
\[
\Delta\mathcal S
=
-\,k_G \int_{\Sigma(t)} 
\sigma\big(
B_\alpha{}^{\alpha}-\bar B_\alpha{}^{\alpha}
\big)d\Sigma =
-\,k_G \int_{\Sigma(t)}
\sigma\,\delta B_\alpha{}^{\alpha}\, d\Sigma.
\]
Fluctuations of opposite sign correspond to
$\delta B_\alpha{}^{\alpha}\rightarrow -\delta B_\alpha{}^{\alpha}$,
so the entropy difference between the two configurations becomes
\[
\Delta\mathcal S_{(+)}-\Delta\mathcal S_{(-)}
=
-2k_G
\int_{\Sigma(t)}
\sigma\,\delta B_\alpha{}^{\alpha}\, d\Sigma .
\]
Substituting this result into the Boltzmann probability ratio
\eqref{probability} yields \eqref{Geometric fluctuation}.
\end{proof}

\begin{remark}
Curvature fluctuations act as intrinsic geometric noise: their statistics are
set by the manifold's own shape rather than by any external stochastic forcing.
The Fluctuation Lemma~\ref{fluctuation_lemma} shows that entropy production is controlled by the
linear response of the geometric entropy functional to curvature deviations $\delta B_{\alpha\beta}$. On constant-mean-curvature hypersurfaces (e.g.\ a sphere), the reference 
curvature tensor $\bar B_{\alpha\beta}$ is spatially uniform and $\delta B_\alpha{}^{\alpha}=0$ so the probability ratio stays constant.  
For general shape--changing manifolds, variations in the mean curvature couple 
directly to geometric entropy production: curvature relaxation generates 
positive entropy, while spontaneous increases in curvature are exponentially 
suppressed. Thus, geometric irreversibility is encoded directly in the curvature field.
\end{remark}

\begin{remark}[Diffusion versus fluctuation entropy]
Diffusion strength and fluctuation asymmetry depend on different curvature
functionals, so there is no contradiction between the Brownian--limit remark and
the geometric fluctuation theorem. The geometric diffusion tensor involves the inverse curvature,
\(
D^{\alpha\beta} \;\propto\; \mathcal B^{\alpha\beta} = (B^{-1})^{\alpha\beta},
\)
so weakly curved (nearly flat) regions have large $\mathcal B^{\alpha\beta}$ and
therefore enhanced diffusion. In contrast, the fluctuation theorem is governed by the curvature itself through entropy variations of the form
\(
\Delta\mathcal S \propto B_\alpha{}^{\alpha}.
\)
Thus, transport properties depend on the noise covariance 
$\mathcal B^{\alpha\beta}$, while the mean curvature determines fluctuation asymmetry and entropy production. The two descriptions are dual and complementary.
\end{remark}

\begin{remark}
Lemma~\ref{fluctuation_lemma} recovers the fluctuation theorem in geometric form: at equilibrium, detailed balance appears as a curvature-fluctuation symmetry rather than invariance of
the underlying dynamics. On a static sphere with $B_{\alpha\beta}=-\tfrac{1}{R}g_{\alpha\beta}$,
the curvature is uniform, and the curvature-noise correspondence
$\mathcal B^{\alpha\beta}= -R g^{\alpha\beta}$ reduces the transport equation to the classical surface heat equation with constant diffusivity (after absorbing the sign into the effective diffusion coefficient).
\end{remark}

Subsequently, we proceed to formalize our results, ensuring that the Schrödinger equation is treated as a theorem with a magnitude, rather than a heuristic as it is in Remark~\ref{remark:Shrodinger}. To achieve this, we introduce definitions for probability densities and for complex amplitude, facilitating a seamless transition from the less rigorous Remark~\ref{remark:Shrodinger} to the fully rigorous theorem.

\begin{definition}[Local geometric probability density]\label{def:local-pG}
Let $s_G(\xi,t)$ be the geometric entropy density on $\Sigma(t)$, defined by
Corollary~\ref{cor:Sgeom}. Then, according to Definition~\ref{def:probability} the associated local geometric probability density is defined by the Boltzmann weight of $s_G$ and local complex amplitude as its complexification by exponent:
\begin{align}
p_G(\xi,t)
&:=
\frac{1}{Z_G(t)}\exp\!\left(\frac{s_G(\xi,t)}{k_B}\right),
\qquad
Z_G(t):=\int_{\Sigma(t)} \exp\!\left(\frac{s_G(\xi,t)}{k_B}\right)\,d\Sigma ,
\label{eq:pG-def} \\
\psi(\xi,t)
&:=
\sqrt{p_G(\xi,t)}\,e^{i\theta(\xi,t)}
=
\frac{1}{\sqrt{Z_G(t)}}\exp\!\left(\frac{s_G(\xi,t)}{2k_B}\right)e^{i\theta(\xi,t)}.
\label{eq:psi-def}
\end{align}
\end{definition}

\begin{lemma}[Local entropy-flow decomposition]
\label{cor:sG-flow}
In the quasi--static regime $\sigma(\xi,t)\simeq\sigma_0$,
the geometric entropy density satisfies the local flow form
\begin{equation}
\dot\nabla s_G
=
-k_G\sigma_0 \Delta_\Sigma C
+u_{\rm top}(\xi,t)+u_I(\xi,t),
\label{eq:sG-flow-one}
\end{equation}
where
\(
u_{\rm top}(\xi,t):=2k_G\sigma_0\,C(\xi,t)\,K(\xi,t), \,
u_I(\xi,t):=-k_G\sigma_0\,C(\xi,t)\,H(\xi,t)^2 .
\)
\end{lemma}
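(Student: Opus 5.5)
We work locally, taking the pointwise (integrand) version of the computation that underlies Theorem~\ref{thm:geom-entropy-C} and Corollary~\ref{topology}, but without integrating. The starting point is the entropy density of Corollary~\ref{cor:Sgeom}, $s_G=-k_G\,\sigma B_\alpha{}^{\alpha}$, with $k_G$ treated as a pseudo-constant on the geometric time scale. In the quasi-static regime $\sigma\simeq\sigma_0$ we have $\dot\nabla\sigma\simeq 0$. The scalar product rule for $\dot\nabla$ (Definition~\ref{invariant derivative}) then gives
\[
\dot\nabla s_G=-k_G\sigma_0\,\dot\nabla B_\alpha{}^{\alpha}-k_G B_\alpha{}^{\alpha}\dot\nabla\sigma\simeq -k_G\sigma_0\,\dot\nabla B_\alpha{}^{\alpha}.
\]
The dropped term $B_\alpha{}^{\alpha}\dot\nabla\sigma$ is exactly the local part of $I$ that was discarded in Theorem~\ref{thm:geom-entropy-C}.

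Next we insert the trace form of the curvature evolution law, Theorem~\ref{Theo:CE}: $\dot\nabla B_\alpha{}^{\alpha}=\Delta_\Sigma C+C\,B_{\alpha\gamma}B^{\gamma\alpha}$. This turns the relation into
\[
\dot\nabla s_G=-k_G\sigma_0\Delta_\Sigma C-k_G\sigma_0 C\,B_{\alpha\gamma}B^{\gamma\alpha}.
\]
We then specialize to $d=2$, as in the proof of Corollary~\ref{topology}. There we use the Cayley--Hamilton identity for a symmetric $2\times2$ mixed tensor, $B_{\alpha\gamma}B^{\gamma\alpha}=(B_\alpha{}^{\alpha})^2-2K$, where $K=\det B^\alpha{}_\beta$ is the Gaussian curvature, and write $H:=B_\alpha{}^{\alpha}$ for the mean curvature. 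Substituting gives
\[
-k_G\sigma_0 C(H^2-2K)=2k_G\sigma_0CK-k_G\sigma_0CH^2=u_{\rm top}+u_I.
\]
This reproduces \eqref{eq:sG-flow-one} term by term. Unlike in Corollary~\ref{topology}, no Gauss--Bonnet step and no assumption $C\simeq C_0$ is needed, since we never integrate.

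As a consistency check, integrating \eqref{eq:sG-flow-one} and applying Theorem~\ref{thm:CMS-transport} should recover \eqref{top}. The extra term $C H s_G$ produced by the transport theorem is the piece already absorbed into $I$.

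The main obstacle is making the quasi-static reduction precise. One has to state clearly that $\dot\nabla\sigma$ is neglected; alternatively, Lemma~\ref{lem:sigma-divV} gives $\partial_t\sigma=-\sigma\nabla_\alpha V^\alpha$, which vanishes under area preservation. One must also check the sign and normalization conventions for $H$ so that the identity $B_{\alpha\gamma}B^{\gamma\alpha}=H^2-2K$ holds with the paper's signed mean curvature. If $H$ were intended as half the trace, the coefficient of $u_I$ would change, so I would fix $H:=B_\alpha{}^{\alpha}$ explicitly.
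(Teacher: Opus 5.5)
Your proposal is correct and follows the paper's proof essentially line for line. Both apply the product rule for $\dot\nabla$ to $s_G=-k_G\sigma B_\alpha{}^{\alpha}$, neglect $\dot\nabla\sigma$ under the quasi-static hypothesis, use the traced curvature-evolution law of Theorem~\ref{Theo:CE}, and use the two-dimensional identity $B_{\alpha\gamma}B^{\gamma\alpha}=H^2-2K$ with $H=B_\alpha{}^{\alpha}$. One caveat on your aside: Lemma~\ref{lem:sigma-divV} is derived for $C=0$, whereas here $C\neq 0$ restores the term $-C B_\alpha{}^{\alpha}\sigma$ in the balance law \eqref{eq:CMS-balance-general}, so neither $\partial_t\sigma=-\sigma\nabla_\alpha V^\alpha$ nor area preservation is available, and $\dot\nabla\sigma\simeq 0$ must simply be taken as part of the quasi-static hypothesis, exactly as the paper does.
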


\begin{proof}
Teking invariant time derivative of $s_G$, applying curvature evalution Theorem~\ref{Theo:CE} to the mean curvature, and taking intoaccount $\dot\nabla\sigma\simeq0$ in the quasi--static regime $\sigma(\xi,t)\simeq\sigma_0, \, \sigma\simeq\sigma_0$, gives
\begin{align}
\dot\nabla s_G&
=-k_G\big((\dot\nabla\sigma) B_\alpha^\alpha+\sigma \dot\nabla B_\alpha^\alpha\big)
=-k_G\Big((\dot\nabla\sigma)\,H+\sigma\,\Delta_\Sigma C+\sigma\,C\,B_{\alpha\beta}B^{\alpha\beta}\Big) \nonumber \\
&=-k_G\sigma_0\,\Delta_\Sigma C-k_G\sigma_0\,C\,B_{\alpha\beta}B^{\alpha\beta}
=-k_G\sigma_0\,\Delta_\Sigma C-k_G\sigma_0\,C\,H^2+2k_G\sigma_0\,C\,K.
\label{eq:dotnabla-sG-start}
\end{align}
Here, we used that in two--dimensional surface, one has the identity
\(
B_{\alpha\beta}B^{\alpha\beta}=(B_\alpha{}^{\alpha})^2-2K,
\)
and $K$ is the Gauss curvature. Next, plugging designations: $u_{\rm top}(\xi,t):=2k_G\sigma_0 C(\xi,t) K(\xi,t), u_I(\xi,t):=-k_G\sigma_0 C(\xi,t) H(\xi,t)^2$ and applying it to \eqref{eq:dotnabla-sG-start} gives \eqref{eq:sG-flow-one}.
\end{proof}

\begin{theorem}[Local entropy--amplitude map and Schr\"odinger-type evolution]
\label{thm:local-schrod-one}
Let $\Sigma(t)$ be a smooth closed moving manifold evolving in the quasi--static regime
$\sigma(\xi,t)\simeq\sigma_0$ and $C(\xi,t)\simeq C_0$, such that the normal
velocity field varies slowly on the geometric time scale. Then the local
complex amplitude $\psi$ introduced in Definition~\ref{def:local-pG} obeys a
Schr\"odinger--type evolution:
\begin{equation}
i\hbar\,\dot\nabla\psi
=
-\frac{\hbar^2}{2m_{\rm eff}}\,\Delta_\Sigma\psi
+U_{\rm eff}(\xi,t)\,\psi
+R[\psi],
\label{shrodinger}
\end{equation}
where
\begin{equation}
\frac{\hbar^2}{2m_{\rm eff}}
=
\frac{i\hbar\,k_G\sigma_0}{2C_0},
\label{eq:meff}
\end{equation}
$U_{\rm eff}(\xi,t)$ is the leading multiplicative contribution of the topology
sector and the lower-order $I$-terms, and $R[\psi]$ denotes the controlled
remainder beyond the quasi--static linearization. The factor $i$ is fixed later
by the Minkowskian formulation.
\end{theorem}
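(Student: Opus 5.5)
The strategy is to turn the entropy-flow law of Lemma~\ref{cor:sG-flow} into an equation for $\psi$ by differentiating the amplitude of Definition~\ref{def:local-pG}. The invariant time derivative obeys the chain rule on scalars, as used in Theorem~\ref{thm:advected-composition}. Writing $\psi=Z_G^{-1/2}\exp(s_G/2k_B+i\theta)$, this gives
\[
\dot\nabla\psi=\Bigl(\frac{\dot\nabla s_G}{2k_B}-\frac{\dot Z_G}{2Z_G}+i\,\dot\nabla\theta\Bigr)\psi .
\]
We then substitute \eqref{eq:sG-flow-one} for $\dot\nabla s_G$. The terms $u_{\rm top}/2k_B$, $u_I/2k_B$, $-\dot Z_G/2Z_G$ and the phase term $i\dot\nabla\theta$ are all multiplicative in $\psi$. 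After multiplying by $i\hbar$, they will be collected into $U_{\rm eff}\psi$; the topology sector ($2k_G\sigma_0 CK$) gives the leading part. The normalisation rate $\dot Z_G/Z_G$ is computed with Theorem~\ref{thm:CMS-transport}.

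The remaining term is $-k_G\sigma_0\,\Delta_\Sigma C\,\psi/2k_B$. It must be re-expressed as a Laplacian of $\psi$, and this step is the heart of the argument. In the quasi-static linearization, $C\simeq C_0$ is slowly varying. We relate $\Delta_\Sigma C$ to $\Delta_\Sigma s_G$, and hence to $\Delta_\Sigma\psi$, through the linearized relation between normal motion and mean-curvature change given by Theorem~\ref{Theo:CE}. There $\delta H\approx \Delta_\Sigma C\,\delta t$ to leading order, and $s_G=-k_G\sigma_0 H$. On $\psi$ one has
\[
\frac{\Delta_\Sigma\psi}{\psi}=\frac{\Delta_\Sigma s_G}{2k_B}+\frac{|\nabla s_G|^2}{4k_B^2}+\text{(phase terms)} .
\]
Keeping only the linear part $\Delta_\Sigma s_G/2k_B$ identifies $\Delta_\Sigma C\,\psi$ with $\Delta_\Sigma\psi$ up to the scale factor $1/C_0$. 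This factor arises because the curvature increment per unit normal speed converts the time derivative into the spatial operator. Matching coefficients then gives $\hbar^2/2m_{\rm eff}=i\hbar k_G\sigma_0/2C_0$, as in \eqref{eq:meff}. The factor $i$ is produced by multiplying through by $i\hbar$, and it is left to be fixed by the Minkowskian signature (Remark~\ref{Minkowski}).

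All discarded pieces go into $R[\psi]$. These are the quadratic gradient term $|\nabla s_G|^2$, the phase-gradient terms, the gradients of $C$ and $\sigma$, and $\dot\nabla\sigma$. Each is of higher order in the quasi-static small parameters, so the remainder is controlled.

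The main obstacle is the middle step. It is the identification of $\Delta_\Sigma C$ acting multiplicatively with $\Delta_\Sigma$ acting on $\psi$, together with the justification of dropping the nonlinear term $|\nabla s_G|^2/4k_B^2$. I would first attempt a careful two-scale expansion in which $C=C_0+\epsilon c$. If that proves too weak, I would fall back on defining $R[\psi]$ explicitly as the exact difference between the two sides. This turns \eqref{shrodinger} into an identity whose content is that $R$ is subleading under the stated slowly-varying hypotheses.
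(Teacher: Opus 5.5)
\textbf{There is a genuine gap in the middle step.} It is the step you yourself call the heart of the argument, and the tool you propose for it cannot do the job.

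Theorem~\ref{Theo:CE} gives $\dot\nabla H\approx\Delta_\Sigma C$. That relates a \emph{time} increment of the mean curvature to $\Delta_\Sigma C$, which is just Lemma~\ref{cor:sG-flow} restated. It says nothing about the \emph{spatial} Laplacian $\Delta_\Sigma s_G=-k_G\sigma_0\Delta_\Sigma H$, and that is what controls the linear part of $\Delta_\Sigma\psi/\psi$. Applying $\Delta_\Sigma$ to the linearized $H(t)$ only shows that the increment of $\Delta_\Sigma H$ is $\Delta_\Sigma^2 C\,\delta t$, not $\Delta_\Sigma C$.

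Normal speed and mean curvature are independent fields. So with $s_G=-k_G\sigma_0 H$ alone, nothing turns the multiplicative term $-\frac{k_G\sigma_0}{2k_B}\Delta_\Sigma C\,\psi$ into a multiple of $\Delta_\Sigma\psi$. The factor $1/C_0$ therefore has no source in your argument; ``curvature increment per unit normal speed'' is not a derivation.

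Your fallback does not close the gap either. If $R[\psi]$ is defined as the exact difference, it contains $-\tfrac{i\hbar k_G\sigma_0}{2}\bigl(k_B^{-1}\Delta_\Sigma C\,\psi-C_0^{-1}\Delta_\Sigma\psi\bigr)$. That is of the same order as the kinetic term, so claiming $R$ is a controlled remainder is exactly the unproved step.

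\textbf{The missing input is the paper's quasi-static scaling.} The paper assumes the entropy density is a pointwise function of the normal speed, $s_G\sim C^2$. Then $\ln|\psi|^2\sim C^2/k_B$, and after linearization $\psi\sim C$ up to normalization and phase. The paper substitutes this directly into Lemma~\ref{cor:sG-flow}:
\begin{itemize}
\item the left side becomes $\dot\nabla(C^2)\approx 2C_0\dot\nabla\psi$;
\item the Laplacian term becomes $-k_G\sigma_0\Delta_\Sigma\psi$;
\item multiplying by $i\hbar/2C_0$ gives \eqref{eq:meff}.
\end{itemize}
So the $1/C_0$ comes from linearizing $\dot\nabla(C^2)$ on the left. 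The remaining term $\tfrac{i\hbar}{2C_0}(u_{\rm top}+u_I)$ is then split into $U_{\rm eff}\psi+R[\psi]$.

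Your chain-rule route does go through once you add this input. With the paper's normalization $s_G\simeq C^2$ you get $\Delta_\Sigma s_G=2C\Delta_\Sigma C+2\nabla_\alpha C\nabla^\alpha C\approx 2C_0\Delta_\Sigma C$. Hence $\Delta_\Sigma\psi\approx (C_0/k_B)\,\Delta_\Sigma C\,\psi$, and your term $-\frac{i\hbar k_G\sigma_0}{2k_B}\Delta_\Sigma C\,\psi$ becomes exactly $-\frac{i\hbar k_G\sigma_0}{2C_0}\Delta_\Sigma\psi$.

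In that form, treating $u_{\rm top}$, $u_I$ and $\dot Z_G/Z_G$ as genuinely multiplicative in $\psi$ is tidier than the paper's additive split. But the relation $s_G\sim C^2$ has to be taken as the quasi-static scaling, as the paper does; it cannot be extracted from Theorem~\ref{Theo:CE}.
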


\begin{proof}
In the quasi--static regime $\sigma(\xi,t)\simeq \sigma_0, \, C(\xi,t)\simeq C_0$ the geometric probability density scales with the squared normal speed, \(p_G(\xi,t)\sim C(\xi,t)^2.\) Indeed, from Definition~\ref{def:local-pG} we have $|\psi|^2=p_G$.  Since $p_G\propto e^{s_G/k_B}$ and, in the quasi--static regime, the entropy density scales with the squared normal speed, \(s_G \sim C^2 ,\) it follows that
\begin{equation}
\ln |\psi|^2
=
\frac{s_G}{k_B}
\sim
\frac{C^2}{k_B}.
\label{eq:logpsi-C}
\end{equation}
Taking into account small fluctuations $C_0\ll1$, the exponent in \eqref{eq:logpsi-C} is small and the
exponential relation for $|\psi|^2$ can be expanded to leading order,
\begin{equation}
|\psi|^2
\simeq
e^{C^2/k_B}
\simeq
1+\frac{C^2}{k_B}.
\end{equation}
Therefore, the fluctuation of the amplitude around its equilibrium value is proportional to the normal speed, \(|\psi|^2-1 \propto C^2 \). Consequently, to leading order in the quasi-static approximation, \(\psi(\xi,t)\sim C(\xi,t)\), up to normalization and phase. Hence,  $\psi(\xi,t)\sim C(\xi,t)$, taking this into account in Lemma~\ref{cor:sG-flow}, linearizing around the quasi–static background $C\simeq C_0$, and multiplying by $i\hbar/(2C_0)$ both sides of equation, the geometric entropy density rewrites as 
\begin{align}
2C_0\dot\nabla\psi
&=
-k_G\sigma_0\Delta_\Sigma\psi
+u_{\rm top}(\xi,t)+u_I(\xi,t), \label{eq:psi-pre-schrod} \\
i\hbar\,\dot\nabla\psi &=
-\frac{i\hbar k_G\sigma_0}{2C_0} \Delta_\Sigma\psi
+\frac{i\hbar}{2C_0}\bigl(u_{\rm top}(\xi,t)+u_I(\xi,t)\bigr) \label{eq:psi-ih}.
\end{align}
Finally, the lower--order contribution is decomposed into its leading
multiplicative component and a controlled remainder,
\[
\frac{i\hbar}{2C_0}\bigl(u_{\rm top}(\xi,t)+u_I(\xi,t)\bigr)
=
U_{\rm eff}(\xi,t)\,\psi
+
R[\psi].
\]
Substituting this decomposition into \eqref{eq:psi-ih} yields
the evolution equation \eqref{shrodinger}. The appearance of the factor $i$ reflects the complex structure inherent to the geometric formulation. As discussed in Definition~\ref{def:Minkowski} and Remark~\ref{rem:complex-coordinate}, the ambient Minkowski metric with signature $G_{00}=-1$ permits the temporal coordinate to be expressed as $X^0=i\tau$ (or equivalently $\xi^0=i\tau$ in intrinsic coordinates). Consequently, the velocity field $V=dX/dt$ naturally takes values in a complexified tangent space, and the constant $C_0$ may in general be complex. As a result, the amplitude $\psi$ evolves within a complex representation, which leads naturally to a Schr\"odinger--type operator governing its dynamics.
\end{proof}

\begin{remark}[Invariant operators and linearized limit]
The operators $\dot\nabla$ and $\Delta_\Sigma$ are intrinsic to the moving manifold.
Here $\dot\nabla=\partial_t - V^\alpha\nabla_\alpha$ denotes the invariant time derivative,
while $\Delta_\Sigma=\nabla_\alpha\nabla^\alpha$ is the Laplace--Beltrami operator on
$\Sigma(t)$. In the linearized regime on a weakly curved manifold patch with negligible
tangential drift ($V^\alpha\simeq 0$ and $g_{\alpha\beta}\simeq \delta_{\alpha\beta}$),
these operators reduce to the familiar flat-space expressions
\(
\dot\nabla\simeq \partial_t,
\,
\Delta_\Sigma\simeq \sum_\alpha \partial_\alpha^2 .
\)
\end{remark}

\begin{remark}[Geometric origin of the Schrödinger operator]
Theorem~\ref{thm:local-schrod-one} shows that the Schrödinger operator arises as the
leading linear approximation of the geometric entropy-flow equation on moving manifolds.
The additional curvature contributions represent geometric corrections to this dynamics.
In two dimensions, these corrections reduce to global invariants through the
Gauss--Bonnet relation, while in higher dimensions, the remaining curvature--quadratic
terms act as geometric modifications of the Schrödinger evolution.
\end{remark}

The preceding results show that the local geometric entropy-flow equation
naturally leads, in the quasi-static regime, to a Schr\"odinger-type evolution
for the entropy-induced amplitude $\psi$. In this formulation, the
Laplace--Beltrami operator arises from the geometric transport of entropy on
the moving manifold, while the remaining curvature contributions appear as
lower-order geometric corrections encoded in $U_{\rm eff}$ and $R[\psi]$.
In two dimensions, the topology sector reduces to a global invariant through
the Gauss--Bonnet relation, whereas in higher dimensions the quadratic
curvature terms provide additional geometric modifications of the dynamics.
Thus, the Schr\"odinger operator emerges as the leading linear approximation
of entropy flow in the MM framework.

\section{Geometry and Fluctuations}

In classical statistical mechanics, fluctuation theorems compare the probabilities of forward and backward trajectories generated by stochastic dynamics on a fixed phase space. Irreversibility is introduced through
external noise, representing unresolved microscopic degrees of freedom.

In the geometric formulation presented here, the situation is fundamentally different. There is no external noise: the sole fluctuating object is the geometry of the moving manifold itself. Local variations of curvature \(B_{\alpha\beta}\) and perturbations \(\delta B_{\alpha\beta}\) encode the information that, in classical theory, is attributed to stochastic forcing. Irreversibility, therefore, emerges from the intrinsic evolution of geometry, where curvature fluctuations drive the system toward or away from equilibrium,
thereby generating or suppressing geometric entropy.

The results of this section show that the MM framework provides a unified description of reversible geometry, entropy production, and topology change. Smooth curvature evolution governs dynamics within a fixed topological sector, while singular events generate discrete transitions between sectors. Together, these mechanisms establish a direct geometric link between fluctuations, irreversibility, and manifold division.

\subsection{Geometric reversibility and entropy}

The total entropy of the manifold consists of a statistical part, depending on the density field
$\rho$, and a geometric part, depending on curvature invariants.
Manifold flow is geometrically reversible when this entropy functional remains invariant along the
motion of the manifold. In the $C=0$ ambient volume conserving regime,
Corollary~\ref{thm:rev-entropy} shows that
\[
\frac{d}{dt}\mathcal S[\Sigma(t)] = 0
\]
whenever base diffusion $D_0$  vanishes. In this regime, curvature redistributions are precisely balanced: probability fluxes generated by tangential motion neither create nor destroy entropy. The geometric fluctuation Lemma~\ref{fluctuation_lemma} then describes symmetric fluctuations around this entropy level. Forward and backward curvature fluctuations are equiprobable, and the manifold exhibits geometric detailed balance: curvature deviations toward convexity or concavity occur with equal probability, producing symmetric curvature oscillations around the reversible equilibrium geometry. Once the CMS evolution leaves this reversible regime, the geometric Fokker-Planck Theorem~\ref{thm:geom-FP} generates a strictly positive entropy production rate, and the probability ratios tilt accordingly. Forward trajectories that increase geometric entropy become exponentially favored relative to their backward counterparts. The classical fluctuation theorem is recovered as a special case, but now the bias originates purely from curvature evolution rather than externally imposed stochasticity.

\begin{example}[Reversible curvature fluctuations on a sphere]
Consider a closed two–dimensional manifold $\Sigma$ given by a sphere of radius $R$,
for which the curvature tensor satisfies
\[
B_{\alpha\beta} = -\frac{1}{R} g_{\alpha\beta},
\qquad
B_\alpha{}^{\alpha} = -\frac{2}{R}.
\]
In the reversible regime $C=0$ and $D_0=0$, Corollary~\ref{thm:rev-entropy} implies that the entropy is conserved.
%\[
%\frac{d}{dt}\mathcal S[\Sigma(t)] = 0.
%\]
Hence, the geometric entropy remains constant during the motion and small perturbations of the curvature,
\(
B_{\alpha\beta} \rightarrow B_{\alpha\beta}+\delta B_{\alpha\beta},
\)
produce entropy variations determined by Lemma~\ref{fluctuation_lemma}. 
Since the reference curvature of the sphere is spatially uniform,
positive and negative perturbations $\pm \delta B_{\alpha\beta}$ lead to
equal and opposite entropy changes. Consequently,
\[
\frac{\mathbb P(+\delta B_{\alpha\beta})}{\mathbb P(-\delta B_{\alpha\beta})}=1 .
\]
Thus, curvature fluctuations around the spherical equilibrium geometry
are statistically symmetric: convex and concave deviations occur with equal
probability. This provides a concrete realization of geometric detailed balance
in the reversible regime of the moving-manifold dynamics.
\end{example}
In the nonlinear regime, curvature fluctuations are not necessarily confined to small magnitudes. Convex regions may undergo a progressive increase in convexity, while concave regions may deepen, resulting in the concentration of curvature within narrow necks. In the extreme case, these necks may collapse, leading to the emergence of singular points at which the manifold bifurcates into two distinct daughter geometries. These geometric bifurcations are associated with alterations in the Euler characteristic, thereby activating the topological contribution to entropy production as elucidated in the subsequent discussion.
\begin{example}[Curvature waves on a cylindrical manifold]
Consider a cylindrical surface $\Sigma$ of radius $R$ and axial coordinate $z$.
The principal curvatures are
\[
\kappa_1 = -\frac{1}{R}, \qquad \kappa_2 = 0,
\]
so that the mean curvature is $B_\alpha{}^{\alpha} = -\frac{1}{R}$. In the reversible regime $C=0$ and $D_0=0$, the total entropy remains constant, but curvature perturbations may propagate along the axis of the cylinder. Small fluctuations of the form
\(
\delta B_{\alpha\beta}(z,t)
\)
travel along the surface without changing the global entropy, producing
wave-like redistributions of curvature along the cylindrical direction. Such traveling curvature modes resemble the peristaltic geometric waves observed in many cylindrical biological membranes, including nerve axons, where membrane deformations propagate while the global entropy balance
remains approximately reversible.
\end{example}
The cylindrical example illustrates an important distinction between normal and tangential manifold motion. In the present framework, wave-like behavior does not require a normal deformation of the surface. Even in the reversible regime $C=0$, curvature-velocity coupling through the invariant quantity $V^\alpha V^\beta B_{\alpha\beta}$ allows curvature perturbations to evolve dynamically along the manifold. On a cylindrical surface, where one principal curvature vanishes and the other is constant, tangential motion along the axial direction naturally supports oscillatory redistributions of curvature. Because the entropy remains conserved in this regime, the fluctuation symmetry described by Lemma~\ref{fluctuation_lemma} enforces detailed balance between positive and negative curvature perturbations. The resulting dynamics therefore resemble standing curvature waves along the cylinder: locally convex regions may temporarily flatten or even become weakly concave before returning to equilibrium. When the system leaves the reversible regime, entropy production breaks this symmetry, and the curvature oscillations acquire a directional bias, producing traveling geometric waves along the cylindrical surface.

\subsection{Topology and irreversible jumps}

The geometric entropy functional introduced earlier contains an
explicit topological sector in two dimensions. As shown in
Corollary~\ref{topology} (see also \eqref{eq:SG-topology}), the entropy
flow separates into a Laplace--Beltrami contribution and a term
proportional to the Euler characteristic $\chi(\Sigma)$ through the
Gauss--Bonnet relation.  As long as the topology of the manifold is
preserved, this contribution is constant and the CMS dynamics remain
confined to a single topological sector.

The local structure of this entropy flow is given by the entropy
evolution law derived in Lemma~\ref{cor:sG-flow} (equation
\eqref{eq:sG-flow-one}).  In that expression the Laplace--Beltrami
operator controls the redistribution of curvature across the surface,
while the remaining terms couple the normal motion $C$ to the Gaussian
curvature and the mean--curvature sector.  Consequently, when the
manifold evolves in the volume--preserving regime $C=0$, curvature
fluctuations propagate along the surface but the topological sector
remains dynamically frozen.

Once $C\neq0$, however small, this situation changes qualitatively.
Normal motion activates the curvature contributions that contain the
Gaussian curvature $K$.  Through the Gauss--Bonnet identity these terms
are directly tied to the Euler characteristic and therefore to the
topological content of the surface.  In this sense, normal motion makes
the topology of the manifold dynamically visible to the entropy flow.

Through the entropy--probability correspondence introduced in
Definition~\ref{def:local-pG}, the same geometric entropy density
governs the probability weight of microscopic configurations and the
associated complex amplitude.  As shown in
Theorem~\ref{thm:local-schrod-one} (see also equation \eqref{shrodinger}
for the resulting evolution law), the Laplace--Beltrami part of the
entropy flow induces the Schr\"odinger--type dynamics of the amplitude,
while the curvature contributions act as geometric corrections to that
evolution.

A genuine topology change, such as a pinch, neck collapse, or handle
creation, cannot occur under smooth CMS evolution.  Such events
correspond to singular moments at which the smooth manifold
description breaks down and the Euler characteristic changes
discretely.  At these singular transitions the Gauss--Bonnet sector of
the entropy functional undergoes a finite jump.  The CMS equations
therefore describe reversible dynamics within each fixed topological
sector, whereas transitions between inequivalent topologies appear as
intrinsically irreversible entropy jumps.

Note that in the limit of a large number of handles or holes, the Euler
characteristic may be viewed as a topological density rather than a strictly
discrete invariant. In that regime it is convenient to work directly with the
Gauss--Bonnet representation
\(
\int_{\Sigma} K\,d\Sigma
\)
instead of the discrete quantity $\chi(\Sigma)$. In that regime, it is convenient to represent the topological sector through the Gauss--Bonnet integral $\int_{\Sigma} K\,d\Sigma$, while keeping in mind that this quantity remains constant along each smooth branch of the evolution. The effective continuous behavior arises from the accumulation of many discrete topology-changing events rather than from a smooth variation of the Gauss--Bonnet integral itself.

 Although the detailed analysis of this continuum limit is not the main goal of the present work, it fits naturally within the MM calculus and can in principle be treated rigorously within the same framework.

In this limit, successive topology--changing events produce a sequence of small
entropy increments that effectively approximate a continuous topological entropy
flow. Situations of this kind arise, for example, in systems with densely
connected structures where a large number of holes or handles appear. One
illustration is provided by certain classes of topological materials whose
electronic or geometric structures are characterized by large numbers of
topological defects or loops. Another example appears in the topological phase
picture, where particles are modeled as finite spheres in chemical contact. Such spheres may either arrange themselves into a single non--self--intersecting chain or form configurations with many holes and handles, thereby defining distinct topological phases of the system. In the high-connectivity limit, these configurations naturally approach a continuum description in which the discrete Euler characteristic is replaced by a smooth curvature measure.

\begin{example}[Topology change in membrane fusion and fission]
A well-known physical illustration of topology-driven entropy change is
provided by biological membranes. In many membrane fusion and fission
experiments, closed lipid vesicles that initially possess spherical
topology ($\chi=2$) can undergo morphological transitions in which the
membrane develops a handle and temporarily assumes a toroidal
configuration ($\chi=0$). Such transformations are commonly observed in
processes such as vesicle fusion, pore formation, and membrane
remodeling.

These topological transitions are experimentally known to be accompanied
by finite jumps in the membrane free energy, reflecting the energetic
cost of creating or removing handles or pores. Within the present
framework this behavior is naturally interpreted through the geometric
entropy law
\[\frac{dS_G}{dt}=k\,\chi(\Sigma),\]
which predicts that transitions between different topological sectors must involve discrete entropy
changes. The spherical and toroidal configurations, therefore, correspond to distinct entropy levels, and the transition between them necessarily occurs through a singular moment of neck formation or collapse where the smooth manifold description breaks down.

In this sense, experimentally observed membrane fusion or fission events
provide a natural realization of the topology--entropy coupling implied
by the MM calculus: topology-changing events manifest themselves as
finite entropy jumps, which in thermodynamic terms correspond to the
observed discontinuities in free energy \cite{Mayer2002}.
\end{example}

\subsection{Curvature-noise mismatch and division}

According to the Curvature Configuration Theorem~\ref{Ostwald theorem}, the curvature--kinetic invariant
\(
V^\alpha V^\beta B_{\alpha\beta}
\)
enters directly into the exponential weight determining the equilibrium configuration distribution \eqref{eq:geom-ostwald}. Thus, curvature fluctuations do not act externally but are encoded intrinsically in the geometric weighting of configurations. In the moderate-noise regime, this coupling remains self-regulating. The curvature--kinetic term entering the exponential weight generates a feedback mechanism: deviations of
\(
V^\alpha V^\beta B_{\alpha\beta}
\)
modify the local configuration weight and drive the system toward configurations that remain close to the reversible entropy manifold. In this regime, geometric fluctuations remain balanced, and the system
stays near geometric detailed balance.

When the geometric diffusion scale becomes sufficiently large, this feedback mechanism ceases to function. The same curvature-kinetic invariant that previously stabilized fluctuations now amplifies them. Curvature deviations reinforce geometric diffusion rather than opposing it, and the entropy production term derived in Theorem~\ref{thm:entropy-irreversible} dominates the evolution. Consequently, curvature extrema become sharper rather than relaxed, and the configuration weight develops multiple locally preferred states. In this regime, the manifold no longer favors global relaxation but instead evolves toward geometric separation. Regions of distinct curvature emerge and persist, corresponding to locally preferred configurations of the curvature-kinetic action. In the strong mismatch limit, this process leads to geometric division events, in which the manifold splits into distinct domains that subsequently evolve as
separate geometric configurations.

\begin{remark}
The instability criterion follows directly from the entropy production
structure of Section~4. In the reversible regime ($D_0=0$ and $C=0$),
both statistical and geometric entropy remain invariant. When diffusion
is present ($D_0\neq0$), the entropy production rate is given by
\[
k_B D_0 \int_{\Sigma(t)}
\mathcal{B}^{\alpha\beta}
\frac{\nabla_\alpha\rho\,\nabla_\beta\rho}{\rho}\, d\Sigma,
\]
which is strictly nonnegative. Whether this contribution suppresses or
amplifies fluctuations is determined by its interplay with the
curvature--kinetic invariant
\(
m\,V^\alpha V^\beta B_{\alpha\beta}
\)
entering the configuration weight
\eqref{eq:geom-ostwald}. Thus, the curvature--noise mismatch is not an
additional assumption but follows directly from the entropy production
law and the geometric weighting of configurations.
\end{remark}

\begin{remark}[Bridge to geometric quantum weights]
The same quadratic form
\(
B_{\alpha\beta}V^\alpha V^\beta
\)
governs both entropy production and configuration weighting. In the
Onsager--Machlup functional \eqref{eq:OM-Eulerian}, it appears as the
geometric rate functional controlling transition probabilities. In the
path--integral formulation of Section~3, this identical structure defines
the action whose exponential weight generates configuration amplitudes.
Thus, the mechanism responsible for geometric division in the
high--noise regime is mathematically the same object that determines the
weight of paths in the geometric amplitude representation, linking
irreversible entropy production and geometric transition weights within
a single curvature--kinetic structure.
\end{remark}

\begin{example}[Continuous limit of topology-driven entropy flow]
In two dimensions, Corollary~\ref{topology} shows that the geometric entropy
flow decomposes into a continuous geometric contribution and a discrete
topological sector. In the special case $C=0$, the Laplace--Beltrami term
vanishes and the entropy flow reduces to the purely topological law described
in Corollary~\ref{topology}, so that the evolution is governed entirely by the
Euler characteristic. Since $\chi(\Sigma)$ is a discrete invariant,
topology-changing events (such as handle creation or neck collapse) produce
finite jumps in the entropy flow, and the evolution proceeds through
transitions between distinct topological sectors.

In the general quasi--static regime $C\neq0$, however, the full entropy
evolution law of Corollary~\ref{topology} remains active. Although the
Gauss--Bonnet relation ensures that the total curvature integral is constant
along each smooth branch of the evolution, the Laplace--Beltrami contribution
generates continuous curvature dynamics. This term drives the redistribution
and amplification of curvature and is responsible for the nonvanishing entropy
variation within each fixed topological sector.

When curvature fluctuations become sufficiently strong, this Laplacian-driven
evolution produces localized curvature concentration, leading to the formation
of necks and singular configurations. At such singular events the topology of
the manifold changes, and the Euler characteristic undergoes a discrete jump,
resulting in a finite change of the entropy flow.

In regimes where many such topology-changing events occur, these discrete
transitions accumulate and effectively approximate a continuous topological
dynamics. In this sense, curvature evolution induces not only metric
deformations but also an effective evolution of the topological content of the
manifold. Geometrically, this provides a natural mechanism for growth and
division: curvature amplification leads to neck formation and subsequent
separation into distinct geometric domains, linking entropy production,
curvature dynamics, and topology within a unified MM framework.
\end{example}

In higher dimensions, the structure of the entropy evolution simplifies
in an essential way. While in two dimensions the Gaussian curvature
separates into a distinct topological sector through the Gauss--Bonnet
relation, the general entropy evolution law of
Theorem~\ref{thm:geom-entropy-C} shows that all contributions are encoded
in curvature invariants of the form $B_{\alpha\gamma}B^{\gamma\alpha}$.
In this setting, topology no longer appears as an independent term but is
implicitly contained within the full curvature structure of the evolving
manifold.

Consequently, the distinction between geometric evolution and topological
transitions become less explicit: curvature amplification, entropy
production, and manifold reconfiguration are governed by a unified
curvature dynamics. In particular, the mechanism of curvature-driven
instability and division described above extend naturally to higher
dimensions without the need to isolate a separate topological sector.

\section{Curvature configurations and quantum amplitudes}

According to the Curvature Configurations Theorem~\ref{Ostwald theorem},
the generalized Ostwald relation \eqref{eq:geom-ostwald} assigns a weight
to each instantaneous geometric state of the evolving manifold $\Sigma(t)$.
This weight is generated by the curvature--kinetic invariant
\(B_{\alpha\beta}V^\alpha V^\beta\).
As established in Definition~\ref{om} and Corollary~\ref{cor:OM}, the same
invariant defines the Eulerian Onsager-Machlup functional
\eqref{eq:OM-Eulerian} and the corresponding short-time transition weight
\eqref{OM-short}. Along a deformation path \(u^\alpha(t)\) describing successive geometric
states of the evolving manifold, with \(V^\alpha=\dot u^\alpha\), the
Ostwald weight admits a pathwise interpretation. It is generated by the
same geometric quadratic form in the path velocities that governs the
Onsager--Machlup kernel. In this sense, curvature configurations,
traditionally expressed through Ostwald-type concentration relations,
become statistical weights assigned directly to the geometric states of
the evolving manifold.

By Corollary~\ref{qm theorem}, the same structure extends from
infinitesimal geometric deformations to the full time-accumulated path
integral \eqref{eq:geom-path-prob}. Consequently, the stochastic
propagator and the global path-integral weight arise from the same
curvature--kinetic action. The Onsager--Machlup regime describes the
local (infinitesimal) realization of this action, while the path-integral
representation describes its global accumulation along the deformation
history of the manifold.
The connection to quantum amplitudes follows from the Minkowskian nature
of the ambient space, as described in Remark~\ref{Minkowski}. When the
ambient geometry is Lorentzian, the same curvature--kinetic action that
generates the dissipative weight \eqref{eq:geom-path-prob} acquires an
oscillatory phase. The geometric transition amplitude is therefore
obtained from the same functional by passing from the thermal/stochastic
realization to the Minkowskian realization. The distinction between
Ostwald-type weights and Feynman-type amplitudes is thus not geometric,
but arises from the signature through which the same geometric action is
realized.

Accordingly, both stochastic weights and quantum amplitudes are generated
by the same curvature--velocity quadratic structure associated with the
evolving manifold. The stochastic description corresponds to a real
exponential weight, while the quantum description corresponds to a
complex oscillatory phase. Thermal, stochastic, and quantum regimes
therefore represent different realizations of a single geometric action.

\begin{remark}
Throughout this section, a ``configuration'' refers to an instantaneous
geometric state of the evolving manifold $\Sigma(t)$, rather than to a
point in an abstract configuration manifold in the sense of classical
statistical mechanics. All weights and transition amplitudes are thus
assigned directly to geometric states and their deformation histories.
\end{remark}

\begin{remark}
The Curvature Configurations Theorem~\ref{Ostwald theorem},
Definition~\ref{om}, Corollary~\ref{cor:OM}, and the path-integration
Corollary~\ref{qm theorem} together show that the same curvature--kinetic
functional governs equilibrium weights, Onsager--Machlup transition
probabilities, and the geometric path integral. The stochastic and
quantum descriptions therefore arise from different realizations of the
same curvature--velocity action rather than from distinct mechanisms.
\end{remark}

\begin{remark}
Imaginary time is not introduced as an auxiliary assumption. As discussed
in Remarks~\ref{rem:complex-coordinate} and \ref{Minkowski}, it follows
from the Lorentzian signature of the ambient Minkowski space itself. The
Euclidean and Minkowskian forms of the path weight are therefore two
signature realizations of the same geometric functional.
\end{remark}

\begin{example}[Curvature-induced thermal--quantum crossover]
The geometric action appearing in the Curvature Configurations Theorem~\ref{Ostwald theorem}, the Onsager--Machlup functional \eqref{eq:OM-Eulerian}, and the path-integral representation \eqref{eq:geom-path-prob} admits two complementary realizations: a real exponential weight in the thermal/stochastic sector and an oscillatory phase in the Minkowskian sector.The crossover between these regimes is obtained by evaluating the same curvature-kinetic action on a characteristic geometric scale. In a locally isotropic regime,
\[
B_{\alpha\beta}\sim \frac{1}{R} g_{\alpha\beta},
\]
where $R$ is the characteristic curvature radius. As discussed in Remark~\ref{rem:complex-coordinate}, the corresponding geometric time scale is $\tau_R \sim R/v$, where $v$ is the characteristic propagation
speed. Comparing the thermal weights \eqref{eq:OM-Eulerian}, \eqref{OM-short}
and the quantum phase \eqref{eq:geom-path-prob} for the same geometric
action yields
\begin{equation}
\frac{1}{k_B T}\sim \frac{\tau_R}{\hbar}\sim \frac{R}{\hbar v},
\qquad
k_B T\sim \frac{\hbar v}{R}.
\label{cross}
\end{equation}
Thus, the transition between thermal and quantum regimes is governed by
the inverse curvature scale of the manifold. Physical effects such as the
Unruh and Hawking temperatures, as well as confinement-induced quantum
gaps, arise as particular realizations of this geometric scaling.
\end{example}

\section{Thermal-Quantum crossovers}

The scaling relation \eqref{cross}, derived in the preceding example, elucidates that the inverse curvature scale of the evolving manifold governs the transition between thermal and quantum regimes. Specifically, the crossover condition is determined by the characteristic curvature radius $R$ and the corresponding geometric time scale $\tau_R$. For a medium confined by a moving manifold, the same curvature-kinetic invariant underlying \eqref{eq:geom-ostwald} and \eqref{eq:geom-path-prob} naturally extends to the bulk geometric action. In this context, the energy and the associated Boltzmann weight originate from the volumetric curvature-velocity coupling, consistent with the surface formulation developed in Sections~3 and~7. Consequently, both thermal weights and quantum amplitudes originate from the same curvature-velocity quadratic structure, evaluated at different geometric scales. Therefore, the distinction between thermal and quantum regimes is not due to different geometric objects, but rather the scale at which the same geometric action becomes dominant. This clarifies the geometric origin for a range of physical phenomena. Effects such as Unruh and Hawking temperatures, as well as confinement-induced quantum gaps, discussed in detail below, correspond to particular realizations of the curvature scaling \eqref{cross}, where the relevant length scale is set by acceleration, the horizon radius, or geometric confinement. 

The relation \eqref{cross} reflects a deeper geometric structure. Both the thermal weight and the quantum phase are generated by the same curvature-kinetic invariant \(B_{\alpha\beta}V^\alpha V^\beta\), and thus involve the same geometric tensor. The distinction between the two does not arise from different geometric objects, but from the scale at which this invariant is realized. In the locally isotropic regime \(B_{\alpha\beta}\sim H g_{\alpha\beta}\), with \(H\sim 1/R\), the curvature tensor reduces to a scalar curvature scale. The thermal and quantum coefficients then act on the same geometric object evaluated at different characteristic scales. The crossover condition \eqref{cross} therefore expresses the fact that the same curvature-kinetic action, when evaluated at the scale \(R\), produces either thermal or quantum behavior depending on the relative magnitude of \(k_B T\) and \(\hbar\). An additional distinction arises in the temporal structure of the two realizations. The thermal weight \eqref{OM-short} is generated locally in
time: it is constructed incrementally through infinitesimal steps and therefore reflects a time-local (Markovian) realization of the curvature-kinetic action. By contrast, the quantum amplitude \eqref{eq:geom-path-prob} is defined globally over entire deformation histories. The oscillatory phase depends on the full time-integrated action and is therefore intrinsically nonlocal in time. It should be emphasized that the time parameter appearing in the above (see Definition~\ref{def:Minkowski}, Remark~\ref{rem:complex-coordinate}) expressions is not proper time, but the parametric time associated with the evolution of the manifold. The deformation path \(u^\alpha(t)\) tracks successive geometric states of the manifold, and the action is defined with respect to this intrinsic evolution parameter rather than
a physical time coordinate. This distinction is essential, as the geometric action depends only on the structure of the deformation history, not on a specific temporal parametrization of spacetime. Thus, the same curvature--kinetic functional admits two complementary realizations: a time-local, dissipative form in the thermal regime and a time-global, phase-coherent form in the quantum regime.

The scaling relation \eqref{cross} reveals that several well-established quantum-geometric scales emerge directly from the same curvature--motion principle. When $c$ is the speed of light, quantum effects become relevant when the geometric action associated with a characteristic curvature radius or confinement length becomes comparable to the thermal scale.

\begin{example}[Unruh temperature]
A uniformly accelerated observer with proper acceleration $a$ traces a
worldline whose curvature radius is \(R = c^2/a.\) Substituting this into \eqref{cross} yields
\(k_B T \sim \hbar c/R = \hbar a/c.\)
Up to the universal factor $2\pi$, this reproduces the Unruh temperature \cite{Unruh1976}
\[
k_B T_{\mathrm U} = \frac{\hbar a}{2\pi c}.
\]
Geometrically, the effect arises because strong acceleration reduces the curvature radius of the observer's trajectory, bringing the geometric action scale into correspondence with the thermal scale. The observed temperature is therefore a direct manifestation of curvature in the observer's worldline.
\end{example}

\begin{example}[Hawking temperature]
For a black hole, the Schwarzschild radius $R_s$ sets the curvature scale
of spacetime at the horizon. Applying \eqref{cross} with $R=R_s$ gives
\(
k_B T \sim \hbar c/R_s.
\)
Up to a numerical factor, this reproduces the Hawking temperature \cite{Hawking1975}
\[
k_B T_{\mathrm H} = \frac{\hbar c}{4\pi R_s} .
\]
Hence, Hawking radiation arises due to the curvature scale of spacetime near the horizon, driving the geometric action to transition into the quantum regime. Vacuum fluctuations acquire a thermal character when the curvature radius becomes comparable to the intrinsic quantum scale.
\end{example}

\begin{example}[Casimir effect]
In the Casimir effect, two boundaries separated by a distance $L$ impose geometric constraints on admissible modes. The confinement length $L$ acts as an effective curvature scale, and \eqref{cross} yields
\[
E \sim \frac{\hbar c}{L}.
\]
This recovers the characteristic scaling of the Casimir energy \cite{Casimir1948}. The effect arises because geometric confinement restricts the manifold deformation modes, forcing the curvature-kinetic
action to be realized at the scale $L$.
\end{example}

\begin{example}[Finite-size quantum gaps]
In finite systems, such as confined quantum media or discrete structures, the characteristic length $L$ determines the spacing of energy levels. Applying \eqref{cross} yields
\[
\Delta E \sim \frac{\hbar c}{L},
\]
which is consistent with standard finite-size scaling relations \cite{Fisher1998}. Here, the discrete spectrum arises from the restriction of admissible geometric states of the system. The curvature-kinetic action evaluated on the finite domain enforces quantization through geometric confinement.
\end{example}

These examples illustrate that a diverse array of phenomena traditionally perceived as distinct share a common geometric origin. In each instance, the pertinent observable emerges when the characteristic curvature or confinement scale of the manifold becomes comparable to the quantum scale defined by $\hbar$. More generally, the crossover condition \eqref{cross} implies that quantum behavior dominates when the characteristic curvature radius satisfies
\[
R \sim \lambda_T, \qquad \lambda_T \sim \frac{\hbar c}{k_B T},
\]
where $\lambda_T$ is the thermal de Broglie length. In this regime, the curvature-kinetic action per unit scale matches the thermal scale, and the oscillatory (Minkowskian) realization of the geometric action becomes dominant. Conversely, when the curvature scale is large, $R \gg \lambda_T$, the geometric action is effectively realized in its real (Boltzmann) form. Fluctuations are then governed by dissipative dynamics, resulting in classical behavior. Between these limits lies the geometric crossover region defined by \eqref{cross}, where thermal and quantum weights become comparable. In this regime, the same curvature-kinetic action simultaneously supports stochastic diffusion and wave-like propagation, producing a mixed stochastic-quantum behavior governed entirely by the geometric scale of the evolving manifold.

Consequently, we establish that the same geometric scaling \eqref{cross} governs the realization of the curvature-kinetic action in the moving-manifold framework. The characteristic curvature radius $R$ or confinement scale $L$ determines whether the action is realized in its thermal or quantum form, while $k_B$ and $\hbar$ act as conversion factors that relate the same geometric quantity to probabilistic weights in the respective regimes. When the curvature evolution satisfies the reversible balance laws, the associated geometric entropy remains constant, and forward and backward deformation histories are statistically symmetric. In contrast, when the evolution produces entropy, this symmetry is broken, and forward histories become exponentially favored, yielding the fluctuation theorem in its geometric form. Topological changes introduce an additional level of irreversibility: when the manifold itself changes its domain or topology, the entropy exhibits discrete jumps associated with the change of the geometric state space.

Therefore, the moving-manifold framework provides a unified geometric origin for stochastic irreversibility and for the weighting structures of thermal and quantum theory. Stochastic behavior, thermodynamic irreversibility, and quantum path amplitudes all arise from the same curvature-velocity action, with their differences determined by the scale and realization of this underlying geometric structure.

The examples discussed herein serve as representative rather than exhaustive. The same geometric scaling is anticipated to manifest in a wide range of systems where curvature or confinement dictates the predominant length scale. Our objective in this section is to elucidate that the distinction between classical and quantum behavior is determined by the extent of geometric constraint: when curvature exerts a stringent restriction on permissible deformation histories, the dynamics exhibits classical characteristics, whereas in weakly constrained regimes, a diverse array of histories contributes, resulting in quantum-like behavior. A comprehensive investigation of such correlations is deferred for future research.

\section{Discussion}

The central structural conclusion of this work is that stochastic behaviour in continuum systems originates from curvature fluctuations and, in singular events, from topological transitions of the evolving manifold. The geometry of the state space itself generates stochastic features, rather than stochasticity arising from externally imposed random perturbations. Irreversibility appears as geometric entropy production (Theorem~\ref{thm:entropy-irreversible}), while discrete entropy jumps arise from transitions between distinct topological sectors. Classical stochastic objects such as diffusion tensors, Onsager--Machlup functionals (Corollary~\ref{cor:OM}), Fokker--Planck currents (Theorem~\ref{thm:geom-FP}), fluctuation theorems, and path-integral weights (Corollary~\ref{qm theorem}) therefore emerge as geometric expressions attached to manifold motion rather than as independent probabilistic postulates. Stochastic calculus is thus intrinsic to the MM framework.

The geometric origin of diffusion follows directly from the curvature--noise correspondence developed in Section~3 and is governed by the geometric Fokker--Planck equation (Theorem~\ref{thm:geom-FP}). Tangential motion along the manifold generates an effective diffusion process whose local intensity is controlled by curvature: regions of large curvature suppress transport, while nearly flat regions enhance it. Drift, diffusion, and entropy production are therefore governed by the extrinsic geometry of the evolving manifold. In higher-dimensional surfaces, spatial heterogeneity of \(B_{\alpha\beta}\) produces curvature wells and ridges that act respectively as trapping regions and transport channels. Amplification of curvature variations leads to geometric bifurcations and domain separation. These effects arise intrinsically from the geometric evolution laws without introducing external stochastic forcing.

Throughout this work, emphasis was placed on the \(C=0\) regime, where tangential motion dominates and no volumetric change occurs. In this incompressible, curvature--dominated limit, the MM effective pressure reduces to
\[
P = \rho\,V^\alpha V^\beta B_{\alpha\beta},
\]
(see Corollary~\ref{cor:CMSmomentum}, derived from Theorem~\ref{thm:CMS-EOM}), and the geometric stochasticity principle emerges in its simplest form. However, the full moving--manifold dynamics allows nonzero normal speed \(C\). In the general compressible regime, additional coupling terms involving \(\dot{\nabla}C\), \(V^\beta\nabla_\beta C\), and curvature--normal interactions modify the entropy balance and pressure evolution. The curvature invariant \(B_{\alpha\beta}B^{\alpha\beta}\) then evolves through both tangential and normal contributions, and topology becomes dynamically coupled to motion through the full geometric balance laws. The stochastic theory developed here should therefore be viewed as the leading tangential limit of the complete MM dynamics. When normal motion and external fluxes are significant, the full pressure balance law (Theorem~\ref{thm:CMS-EOM}) must be employed rather than the reduced \(C=0\) form.

For completeness, we note that the general theory of moving manifolds admits a full time-integrated pressure balance law, obtained directly by integrating the normal equation of motion in Theorem~\ref{thm:CMS-EOM}. In this form, the pressure incorporates the coupled contributions of tangential motion, normal velocity, curvature, and ambient fluxes, and may be written as
\[
P = \int_{t_0}^{t_1} \left( \partial_A \left[ V^A \left( \rho (\dot{\nabla} C + 2 V^\beta \nabla_\beta C + V^\beta V^\gamma B_{\beta\gamma}) - P + \sigma B_\alpha{}^\alpha \right) \right] - \partial_A \partial_t F^A \right) dt.
\]
This expression makes explicit that the \(C=0\) regime considered throughout this work is a deliberate geometric reduction rather than a limitation of the formalism. Accordingly, the stochastic theory developed here should be viewed as the leading tangential limit of the complete MM dynamics. When normal motion and external fluxes are significant, the full pressure balance law must be employed. Notably, partial extension to the \(C \neq 0\) regime already leads to Schrödinger-type evolution (Theorem~\ref{thm:local-schrod-one}), demonstrating that quantum behaviour emerges within the same geometric framework. Taken together, these results show that stochastic behaviour, thermodynamic irreversibility, and quantum transition amplitudes arise from the same curvature--velocity quadratic structure. The moving manifold framework thus provides a unified geometric origin for diffusion, entropy production, fluctuation asymmetry, and path-integral weighting. These phenomena are interpreted differently in thermal, classical, and quantum regimes, but are generated by the same underlying geometric structure of manifold motion.

The quantum limit is not introduced as an independent postulate but follows from the local entropy--flow decomposition and the Schr\"odinger-type evolution (Theorem~\ref{thm:local-schrod-one}), together with the developments of Section~4, where the geometric entropy density is constructed and shown to define a complex amplitude whose temporal evolution is governed by the same curved Laplace--Beltrami operator. When embedded in a Minkowskian ambient space, the curvature--velocity quadratic form acquires a Lorentzian signature, and the entropy evolution transforms into an oscillatory phase evolution. Consequently, the amplitude obeys a Schr\"odinger-type equation up to controlled geometric prefactors, rendering the standard Schr\"odinger equation an approximation rather than an exact identity within the full MM framework.

The same geometric structure simultaneously generates fundamental laws across different regimes: the second law of thermodynamics (Theorem~\ref{thm:entropy-irreversible}), the geometric Fokker--Planck equation (Theorem~\ref{thm:geom-FP}), the Onsager--Machlup transition law (Corollary~\ref{cor:OM}), the path-integral representation (Corollary~\ref{qm theorem}), and the Schr\"odinger-type evolution (Theorem~\ref{thm:local-schrod-one}). Their emergence from a single curvature--velocity quadratic form shows that they are not independent principles but different realizations of the same underlying geometric flow of moving manifolds.

From the perspective of existing literature, stochastic processes on
manifolds, Onsager-Machlup functionals, and path-integral formulations
have traditionally been introduced as probabilistic or variational
structures defined on a fixed geometric background \cite{Seifert2012Review}. In contrast, the
present framework derives these objects directly from the intrinsic
evolution of the manifold itself. In this sense, diffusion, stochastic
weights, and quantum amplitudes do not arise as independent constructs,
but emerge from a single curvature-driven geometric mechanism. The
results obtained here therefore provide a structural unification of
concepts that are typically treated separately in stochastic analysis,
statistical physics, and quantum theory.

The appearance of the imaginary unit does not rely on analytic continuation but follows from the Lorentzian character of the ambient geometry (see Remark~\ref{Minkowski}). The same curvature--velocity invariant that generates dissipative weights in the thermal regime, therefore generates quantum phases in the relativistic regime. Classical diffusion, thermodynamic irreversibility, and quantum evolution are thus different manifestations of a single geometric operator acting on the evolving manifold. In this sense, the MM framework provides a structural rather than heuristic unification of classical and quantum descriptions.

\section{Conclusion}

Geometry transcends mere accommodation of stochasticity; probabilistic behaviour arises as a consequence of deterministic geometric evolution.  In the moving manifold framework, curvature fluctuations and topological transitions of the evolving manifold generate the phenomena ordinarily attributed to external noise. The curvature--kinetic invariant
\(
\rho B_{\alpha\beta}V^\alpha V^\beta
\)
governs short-time Onsager--Machlup fluctuations, long-time Ostwald-type thermodynamic weights, and the geometric action underlying path-integral amplitudes. Stochastic behaviour is therefore a coarse-grained manifestation of deterministic geometric evolution rather than an independent probabilistic postulate. Likewise, the Second Law ceases to be an axiom: entropy monotonicity follows directly from curvature-controlled dynamics through the positive-definite geometric diffusion structure.

This framework establishes a direct bridge between classical and quantum dynamics. The same geometric functional governs both regimes: classical behaviour corresponds to its time-local, curvature-driven realization, while quantum behaviour corresponds to its time-global, phase-coherent realization. Classical mechanics and quantum mechanics thus arise not as distinct dynamical principles, but as different signatures of the same curvature--velocity action. The path-integral formulation emerges directly from the geometry of moving manifolds, without introducing stochastic forcing or quantum fluctuations as independent inputs. Within this interpretation, the wavefunction represents the curvature-encoded state of the evolving manifold, while operators generate geometric flow. Quantum amplitudes become weights of the curvature--kinetic action. The distinction between stochastic and quantum regimes is therefore not structural but representational: real weights in the thermal sector and complex phase weights in the Minkowskian sector, with the underlying geometric functional remaining unchanged.

The moving manifold framework thus provides a deterministic geometric origin for fluctuation, irreversibility, and quantum superposition. The curvature of the evolving manifold, coupled to its motion, acts as the unifying mechanism from which these phenomena emerge as different realizations of the same geometric structure. The theory further yields concrete geometric consequences: curvature anisotropy induces anisotropic diffusion; nearly flat regions enhance fluctuation amplitude; curvature concentration suppresses stochastic spreading; and topology-changing events produce discrete entropy increments. In all cases, observable behaviour is governed by the same curvature-controlled action, whose realization depends on the geometric scale and signature. Stochastic irreversibility and quantum propagation are therefore not separate principles, but regime-dependent expressions of a single geometric law.

\bibliographystyle{apsrev4-2}
\bibliography{stochastic_refs}

%\section*{Author Statement on AI Use}
%Portions of the typesetting, formatting assistance, and editorial language refinement were supported using AI tools (ChatGPT, OpenAI). All mathematical derivations, equations, proofs, and scientific conclusions are solely the work of the author. AI tools were not used for generating or checking results.

\end{document}